\pdfoutput=1
\documentclass[11pts]{article}

\bibliographystyle{alpha}

\usepackage{enumerate}
\usepackage{hyperref}
\usepackage[utf8]{inputenc}
\usepackage[T1]{fontenc}
\usepackage[english]{babel}
\usepackage{url}
\usepackage[squaren, Gray, cdot]{SIunits}
\usepackage{array}
\usepackage{caption}
\usepackage{layout}
\usepackage{csquotes}
\usepackage[top=2.8cm,bottom=2.8cm,left=2.8cm,right=2.8cm]{geometry}

\usepackage{amsthm}
\usepackage{amsmath}
\usepackage{amssymb}
\usepackage{thm-restate}

\newtheorem{theorem}{Theorem}
\newtheorem{definition}[theorem]{Definition}
\newtheorem{lemma}[theorem]{Lemma}
\newtheorem{proposition}[theorem]{Proposition}
\newtheorem{claim}[theorem]{Claim}
\newtheorem{remark}[theorem]{Remark}

\usepackage{listings}

\usepackage{graphicx}
\usepackage{subfigure}

\usepackage{bbm}
\usepackage{stmaryrd}
\usepackage{tikz}

\usepackage{algorithm}
\usepackage[noend]{algpseudocode}

\newcommand{\rnk}{\mathrm{rank}}
\newcommand{\spn}{\mathrm{span}}

\allowdisplaybreaks

\title{Robust Sparsification for Matroid Intersection with Applications}

\date{}

\author{Chien-Chung Huang\\
\textit{CNRS, DI ENS, École normale supérieure, Université PSL, Paris, France}
\and François Sellier\\
\textit{DI ENS, École normale supérieure, Université PSL, Paris, France}\\
\textit{Université Paris Cité, CNRS, IRIF, Paris, France}\\
\textit{Mines Paris, Université PSL, Paris, France}}

\begin{document}

\maketitle
\begin{abstract}
    
    Matroid intersection is a classical optimization problem where, given two matroids over the same ground set, the goal is to find the largest common independent set. In this paper, we show that there exists a certain ``sparsifer'': a subset of elements, of size $O(|S^{opt}| \cdot 1/\varepsilon)$, where $S^{opt}$ denotes the optimal solution, that is guaranteed to contain a $3/2 + \varepsilon$ approximation, while guaranteeing certain robustness properties. We call such a small subset a \emph{Density Constrained Subset} (DCS), which is inspired by the \emph{Edge-Degree Constrained Subgraph} (EDCS) [Bernstein and Stein, 2015], originally designed for the maximum cardinality matching problem in a graph. Our proof is constructive and hinges on a greedy decomposition of matroids, which we call the \emph{density-based decomposition}. We show that this sparsifier has certain robustness properties that can be used in one-way communication and random-order streaming models. 
    
    Specifically, we use the DCS to design a one-way communication protocol for matroid intersection and obtain a $3/2 + \varepsilon$ approximation, using a message of size $O(|S^{opt}| \cdot 1/\varepsilon)$. This matches the best achievable ratio for the one-way communication bipartite matching [Goel, Kapralov, and Khanna, 2012]. 
    
    Moreover, the DCS can be used to design a streaming algorithm in the random-order streaming model requiring the space of $O(|S^{opt}|\cdot poly(\log(n), 1/\varepsilon))$, where $n$ is the size of the stream (the ground set of the matroids). Our algorithm guarantees a $3/2 + \varepsilon$ approximation \emph{in expectation} and, when the size of $S^{opt}$ is not too small, \emph{with high probability}. Prior to our work, the best approximation ratio of a streaming algorithm in the random-order streaming model was an expected $2- \delta$ for some small constant $\delta>0$ [Guruganesh and Singla, 2017].
    

\end{abstract}

\section{Introduction}

    The matroid intersection problem is a fundamental problem in combinatorial optimization. In this problem we are given two matroids $\mathcal{M}_1 = (V, \mathcal{I}_1)$ and $\mathcal{M}_2 = (V, \mathcal{I}_2)$, and the goal is to find the largest common independent set in both matroids, \emph{i.e.}, $\mathop{\arg\max}_{S \in \mathcal{I}_1 \cap \mathcal{I}_2}|S|$. This problem was introduced and solved by Edmonds~\cite{edmonds1970submodular, Edmonds1971, Edmonds1979} in the 70s. The importance of matroid intersection stems from the large variety of combinatorial optimization problems it captures; well-known examples in computer science include bipartite matching and packing of spanning trees/arborescences. 
    
    In this paper we introduce a ``sparsifer'' for the matroid intersection problem and use it to design algorithms for two problems closely related to streaming: a one-way communication protocol and a streaming algorithm in the random-order streaming model. 
    
    \paragraph*{Structural Result for a Matroid Intersection Sparsifier}
    
    Our starting point is the maximum matching problem. To deal with massive graphs, a common tool is sparsification, \emph{i.e.}, the extraction of a subgraph having fewer edges but preserving some desired property. Various graph sparsifiers have been introduced to maintain a large matching, \emph{e.g.}, see~\cite{BernsteinS15,Bhattacharya18,RoghaniSW22,Gupta13} and the references therein. 
    The particular sparsifier that has inspired our work is the
    \emph{Edge-Degree Constrained Subgraph} (EDCS) introduced by Bernstein and Stein~\cite{BernsteinS15}. 
    \begin{definition}[from \cite{BernsteinS15}]
        \label{def:intro-edcs}
         Let $G = (V,E)$ be a graph, and $H$ a subgraph of $G$. Given any integer parameters $\beta \geq 2$ and $\beta^- \leq \beta - 1$, we say that a subgraph $H = (V, E_H)$ is a \emph{$(\beta, \beta^-)$-EDCS} of $G$ if $H$ satisfies the following properties (for $v \in V$, $\deg_H(v)$ denotes the degree of $v$ in $H$):
        \begin{enumerate}[(i)]
            \item \makebox[12em][l]{For any edge $(u, v) \in H$,} $\deg_H(u) + \deg_H(v) \leq \beta$;
            \item \makebox[12em][l]{For any edge $(u, v) \in G \backslash H$,} $\deg_H(u) + \deg_H(v) \geq \beta^-$.
        \end{enumerate}
    \end{definition}
    
    The size of an EDCS is easily controlled by the parameter $\beta$ as it is $O(\beta \cdot |M_G|)$, where $M_G$ is the maximum matching. The key property of EDCSes is that, by choosing some $\beta$ and $\beta^-$ in the order of $O(poly(1/\varepsilon))$, an EDCS is guaranteed to contain a $3/2 + \varepsilon$ approximation of the maximum matching~\cite{AssadiB19,BernsteinS16}. As a result, EDCSes have been used to approximate maximum matching in the dynamic, random-order streaming, communication, and sublinear settings with success, for instance see~\cite{AzarmehrB23,BehnezhadK22,BhattacharyaKS23,BehnezhadRR23,bernstein:LIPIcs:2020:12419,BernsteinS15,BernsteinS16,Grandoni2022,Kiss22}. 
    
    As bipartite graph matching is a special case of matroid intersection, the special case when both matroids are partition matroids, one is naturally prompted to ask: is there an analogue of EDCS for general matroid intersection? However, even very slight generalizations of partition matroids, such as laminar matroids (\emph{i.e.}, adding nested cardinality constraints on groups of vertices on each side of the bipartite graph), it is already unclear how to properly define the equivalent of EDCSes. In fact, to the best of our knowledge, in this setting of laminar matroids, nothing is known about getting approximation ratios comparable to those for simple matching in random streams~\cite{bernstein:LIPIcs:2020:12419} or in communication complexity~\cite{AssadiB19}.

    To properly generalize EDCS, the first question would be: what could be the equivalent of a vertex degree in a graph, in the context of a matroid? To answer this question, we make use of the notion of \emph{density} of a subset in a matroid and introduce the \emph{density-based decomposition}.\footnote{This decomposition is closely related to the notion of \emph{principal sequence}~\cite{Fujishige08}; this aspect will be discussed later.} In the following discussion, we assume that readers are familiar with matroids. All the formal definitions can be found in Section~\ref{sec:density-decomposition}.
    \begin{restatable}[]{definition}{densitydefinition}
    \label{def:density}
        Let $\mathcal{M} = (V, \mathcal{I})$ be a matroid. The \emph{density} of a subset $U \subseteq V$ in $\mathcal{M}$ is defined as \[\rho_{\mathcal{M}}(U) = \frac{|U|}{\rnk_{\mathcal{M}}(U)}.\]
        By convention, the density of an empty set is $0$, and the density of a non-empty set of rank $0$ is $+\infty$.
    \end{restatable}
    
    We now explain, at a high level, how densities are used. Let $V' \subseteq V$ be a subset of elements ($V'$ is meant to be our ``sparsifier'') and consider the matroid $\mathcal{M}'$, which is the original matroid $\mathcal{M}$ restricted to $V'$. 
    Then we apply the following greedy procedure: find the densest set $U_1 \subseteq V'$ and then contract $\mathcal{M}'$ by $U_1$; next find the densest set $U_2 \subseteq V' \backslash U_1$ in the contracted matroid $\mathcal{M}' / U_1$ and again contract  
    $\mathcal{M}' / U_1$ by $U_2$, and so on (for more details about this method and the contraction of a matroid, we refer the reader to Section~\ref{sec:density-decomposition}). This greedy procedure induces a \emph{density-based decomposition} of $V' = U_1 \cup \cdots \cup U_k$, where $k$ is the rank of the original matroid $\mathcal{M}$ (note that some of the last $U_i$s could be empty; to give a better intuition about this decomposition an example is provided in Figure~\ref{fig:laminar-matroid-decomposition}). As a result, each element of $V'$ can be assigned a density based on this decomposition, namely, the density of the set $U_i$ where it appears in, computed with respect to the contracted matroid that was used for the construction of that $U_i$. Each element $v \in V \backslash V'$ can also be assigned a density, namely, the density of the elements in the first set $U_i$ such that $v$ is spanned by $U_1 \cup \dots \cup U_i$ in the matroid $\mathcal{M}$.
    
    Therefore using this notion of \emph{density-based decomposition} of $V'$ in the restricted matroid $\mathcal{M}'$ we can define for every element $v \in V$ an \emph{associated density} $\tilde{\rho}_{\mathcal{M}}(v)$ with respect to $V'$ (a formal definition is provided in Definition~\ref{def:tilderho}). This associated density plays the role analogous to the vertex degree in a graph.
    With the associated densities of the elements, we can define a \emph{Density-Constrained Subset} (DCS) for matroid intersection:
    
    \begin{restatable}[]{definition}{dcsdefinition} \label{def:dcs}
        Let $\mathcal{M}_1 = (V, \mathcal{I}_1)$ and $\mathcal{M}_2 = (V, \mathcal{I}_2)$ be two matroids. Let $\beta$, $\beta^-$ be two integers such that $\beta \geq \beta^- + 7$. A subset $V' \subseteq V$ is called a \emph{$(\beta, \beta^-)$-DCS} if it satisfies the following properties:
        \begin{enumerate}[(i)]
            \item \makebox[8em][l]{For any $v \in V'$,} $\tilde{\rho}_{\mathcal{M}_1}(v) + \tilde{\rho}_{\mathcal{M}_2}(v) \leq \beta$;
            \item \makebox[8em][l]{For any $v \in V \backslash V'$,} $\tilde{\rho}_{\mathcal{M}_1}(v) + \tilde{\rho}_{\mathcal{M}_2}(v) \geq \beta^-$.
        \end{enumerate}
    \end{restatable}
    
    By a constructive proof, we show that such $(\beta, \beta^-)$-DCSes always exist (Theorem~\ref{thm:construction}). This proof is based on a local search argument similar to that of~\cite{AssadiB19} but here it requires to understand how the density-based decomposition of $V'$ is affected when an element is added or removed from $V'$ --- hence we need the two important ``modification lemmas'', namely, Lemmas~\ref{lem:add-increase} and~\ref{lem:del-decrease}. We also prove that DCSes are compact, in the sense that their size is up to $\beta$ times the cardinality of the optimal solution (Proposition~\ref{prop:size-bound}). Moreover, DCSes always contain a good approximation of the optimal solution:
    
    \begin{restatable}[]{theorem}{ratiotheorem} \label{thm:dcs-ratio}
		Let $\varepsilon > 0$. For integers $\beta$, $\beta^-$ such that $\beta \geq \beta^- + 7$ and $(\beta^- - 4) \cdot (1 + \varepsilon) \geq \beta$, any $(\beta, \beta^-)$-DCS $V'$ contains a $3/2 + \varepsilon$ approximation of the maximum cardinality common independent set.
	\end{restatable}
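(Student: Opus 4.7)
The plan is to invoke Edmonds' min-max theorem for matroid intersection on $V'$:
\[
  |S|\ =\ \min_{V'=A\sqcup B}\bigl(\rnk_{\mathcal{M}_1}(A)+\rnk_{\mathcal{M}_2}(B)\bigr),
\]
where $S$ denotes the maximum common independent set contained in $V'$. Since the claim is equivalent to $|S|\ge\tfrac{2}{3+\varepsilon}|S^{opt}|$, it suffices to show that for every partition $V'=A\sqcup B$ the quantity $\rnk_{\mathcal{M}_1}(A)+\rnk_{\mathcal{M}_2}(B)$ is at least $\tfrac{2}{3+\varepsilon}|S^{opt}|$.

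Given an arbitrary partition $V'=A\sqcup B$, I would extend it to a partition $V=\widetilde A\sqcup\widetilde B$ by placing each $v\in V\setminus V'$ on the side of whichever matroid makes its associated density $\tilde{\rho}_{\mathcal{M}_i}(v)$ smaller (breaking ties consistently). Applying the same min-max theorem on the full ground set yields $|S^{opt}|\le\rnk_{\mathcal{M}_1}(\widetilde A)+\rnk_{\mathcal{M}_2}(\widetilde B)$, so the task reduces to proving
\[
  \rnk_{\mathcal{M}_1}(\widetilde A)+\rnk_{\mathcal{M}_2}(\widetilde B)\ \le\ \tfrac{3+\varepsilon}{2}\bigl(\rnk_{\mathcal{M}_1}(A)+\rnk_{\mathcal{M}_2}(B)\bigr).
\]

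To control the excesses $\rnk_{\mathcal{M}_1}(\widetilde A)-\rnk_{\mathcal{M}_1}(A)$ and $\rnk_{\mathcal{M}_2}(\widetilde B)-\rnk_{\mathcal{M}_2}(B)$, I would decompose by density level using the density-based decomposition of $V'$ in each matroid. Every element of $V\setminus V'$ that contributes new rank in $\mathcal{M}_i$ beyond $A$ (resp.\ $B$) can be charged, through the definition of $\tilde{\rho}_{\mathcal{M}_i}$, to a block of the decomposition of comparable or larger density. The DCS conditions $\tilde{\rho}_{\mathcal{M}_1}(v)+\tilde{\rho}_{\mathcal{M}_2}(v)\le\beta$ on $V'$ and $\tilde{\rho}_{\mathcal{M}_1}(v)+\tilde{\rho}_{\mathcal{M}_2}(v)\ge\beta^-$ on $V\setminus V'$, combined with the slack conditions $\beta\ge\beta^-+7$ and $(\beta^--4)(1+\varepsilon)\ge\beta$, should yield the desired $\tfrac{3+\varepsilon}{2}$ factor after summing over levels, since at each level the ratio between ``rank supplied outside $V'$'' and ``rank supplied inside $V'$'' is pinned by the reciprocal of the local density.

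The main obstacle is coupling two \emph{different} density-based decompositions, one per matroid, into a single inequality. Unlike the EDCS setting, where each edge carries a single combined degree at its two endpoints, here the two densities $\tilde{\rho}_{\mathcal{M}_1}$ and $\tilde{\rho}_{\mathcal{M}_2}$ refer to two unrelated greedy decompositions of $V'$, and an element of $V'$ that is ``useful'' for spanning some $v\in V\setminus V'$ in $\mathcal{M}_1$ need not be useful in $\mathcal{M}_2$. I expect this to be handled by a charging or averaging argument that redistributes the density budget of each $V'$-element between the two matroids; the constants $7$ and $4$ in the hypothesis should emerge as the slack required to absorb boundary effects coming from the discretization into density levels and from elements sitting at the interface between successive blocks of the decomposition.
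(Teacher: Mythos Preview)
Your proposal is a plan, not a proof, and the missing piece is exactly the heart of the argument. You correctly identify Edmonds' min--max formula as the right lever and you correctly identify the ``main obstacle'' as coupling the two density decompositions, but you then leave that obstacle unresolved (``I expect this to be handled by a charging or averaging argument''). That charging argument is the entire theorem; without it there is nothing to check.

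There is also a concrete problem with the step you do specify. Your extension rule places each $v\in V\setminus V'$ into $\widetilde A$ or $\widetilde B$ according to which $\tilde\rho_{\mathcal{M}_i}(v)$ is smaller. But $\tilde\rho_{\mathcal{M}_i}(v)$ measures how densely $v$ is spanned by \emph{all of} $V'$ in $\mathcal{M}_i$; it says nothing about whether $v$ is spanned by the particular piece $A$ (or $B$) of the partition. An element with huge $\tilde\rho_{\mathcal{M}_1}(v)$ may still increase $\rnk_{\mathcal{M}_1}(\widetilde A)$ by one, because the dense block of $V'$ that spans it may lie entirely inside $B$. So the rank excess $\rnk_{\mathcal{M}_i}(\widetilde A)-\rnk_{\mathcal{M}_i}(A)$ is not controlled by the density you are tracking, and the level-by-level charging you sketch has no anchor.

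The paper resolves this by a genuinely different route. It does \emph{not} extend the partition to all of $V$. Instead, it takes an optimal $C_1,C_2$ for $V'$ and builds a subset $S$ of the optimal solution $O$ \emph{greedily}: repeatedly add to $S$ any $o\in O$ that is not spanned by $C_1$ in $\mathcal{M}_1/S$ nor by $C_2$ in $\mathcal{M}_2/S$. This greedy choice guarantees $\rnk_{\mathcal{M}_l}(C_l\cup S)=\rnk_{\mathcal{M}_l}(C_l)+|S|$, which is the structural fact that makes the subsequent counting work; a naive definition of $S$ would not give this. One then has $|O\setminus S|\le\mu(V')$ for free, and the whole task becomes bounding $|S|$. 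For that the paper constructs, for each $o_j\in S$ and each $l$, a set $R_{l,j}\subseteq V'\setminus C_l$ of size $\lfloor\tilde\rho_{\mathcal{M}_l}(v)\rfloor-1$ for all $v\in R_{l,j}$, with $|R_{l,j}|\ge\lfloor\tilde\rho_{\mathcal{M}_l}(o_j)\rfloor-1$ and all $R_{l,j}$ disjoint; and then repartitions $R_{3-l}$ into sets $Q_{l,i}$ with $|Q_{l,i}|\le\tilde\rho_{\mathcal{M}_l}(v)+1$. Summing $\tilde\rho_{\mathcal{M}_1}(v)+\tilde\rho_{\mathcal{M}_2}(v)\le\beta$ over $R$, and using $\sum|R_{l,i}|^2+\sum|Q_{l,i}|^2\ge |R|^2/(2|S|)+|R|^2/(\rnk_{\mathcal{M}_1}(R_2)+\rnk_{\mathcal{M}_2}(R_1))$ together with $(\beta^--4)|S|\le|R|$, yields $|S|\le(\tfrac12+\varepsilon)\mu(V')$. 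The constants $4$ and $7$ arise from the $\pm1$ slacks in the $R$ and $Q$ size conditions, not from ``interface'' effects between density levels as you conjecture. This $R/Q$ construction and the sum-of-squares step are precisely the coupling device you are missing; your proposal does not contain an analogue of either.
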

	
	Theorem~\ref{thm:dcs-ratio} can be compared to the result for EDCSes in bipartite graphs:
	\begin{theorem}[from~\cite{AssadiB19}] \label{thm:intro-general-edcs}
        Let $\varepsilon > 0$. For integers $\beta$, $\beta^-$ such that $\beta \geq \beta^- + 1$ and $\beta^- \geq \beta \cdot (1 - \varepsilon/4)$, any $(\beta, \beta^-)$-EDCS $H$ of a bipartite graph $G$ contains a $3/2 + \varepsilon$ approximation of the maximum matching.
    \end{theorem}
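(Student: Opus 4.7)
The plan is to prove the contrapositive: assuming $|M_H| < |M^*|/(3/2+\varepsilon)$, where $M^*$ and $M_H$ are maximum matchings of $G$ and $H$ respectively, derive a contradiction with the EDCS constraints of Definition~\ref{def:intro-edcs}.

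First, I would analyze the structure of the symmetric difference $F := M^* \triangle M_H$. As the XOR of two matchings, $F$ decomposes into vertex-disjoint alternating paths and even alternating cycles, and the excess $|M^*| - |M_H|$ equals the number of $M_H$-augmenting paths among these components (paths with one more $M^*$-edge than $M_H$-edge). Since $M_H$ is maximum in $H$, no augmenting path can lie entirely inside $H$, so every such augmenting path must use at least one edge of $M^* \setminus H$.

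Next, under the contradiction hypothesis $|M^*| - |M_H| > (\tfrac{1}{2} + \Theta(\varepsilon))\,|M_H|$, a length-averaging argument shows that a constant fraction of the augmenting paths in $F$ are \emph{short}, i.e., of length $3$: paths $P = (u,a,b,v)$ with $(u,a),(b,v) \in M^*$, $(a,b) \in M_H$, and $u,v$ unmatched in $M_H$. Call this family $\mathcal{P}$. For each $P \in \mathcal{P}$, the middle edge $(a,b)$ lies in $H$, so at least one side edge --- say $(u,a)$ --- must lie in $M^* \setminus H$. Property~(i) applied to $(a,b)$ gives $\deg_H(a)+\deg_H(b) \leq \beta$, while property~(ii) applied to $(u,a) \notin H$ gives $\deg_H(u)+\deg_H(a) \geq \beta^-$. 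Combining these, $\deg_H(u) \geq \beta^- - \deg_H(a)$, forcing unmatched endpoints of short augmenting paths to have large $H$-degree on average, roughly $\beta^- - \beta/2$.

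The endgame is a double-counting argument. I would sum the pointwise lower bounds on $\deg_H(u)$ across unmatched endpoints of paths in $\mathcal{P}$, and compare to a global count of $H$-edges incident to these endpoints (bounded via property~(i) on their neighbors). The bipartiteness of $G$ is crucial: it places all relevant incidences cleanly on one side of the bipartition, enabling a pigeonhole argument to produce an alternating $3$-path entirely inside $H$ --- an $M_H$-augmenting path in $H$, contradicting the maximality of $M_H$. The main obstacle is keeping the constants tight: since $\beta^-$ is permitted to be only slightly below $\beta$ (the hypothesis $\beta^- \geq \beta(1-\varepsilon/4)$), the counting has essentially no slack, and the additive gap of $1$ in $\beta \geq \beta^- + 1$ is precisely what separates the ``short augmenting path creates a new matching edge in $H$'' regime from the ``EDCS degrees are consistent'' regime in the final pigeonhole step.
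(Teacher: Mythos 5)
This theorem is quoted verbatim from~\cite{AssadiB19} and is not proved in the present paper; the paper's own contribution is the matroid generalization (Theorem~\ref{thm:dcs-ratio}), which is established via Edmonds' min--max duality plus a counting argument over carefully chosen subsets, not via augmenting paths. The bipartite specialization in~\cite{AssadiB19} likewise rests on Hall/K\"onig-style vertex-cover duality. So your augmenting-path route is genuinely different from both. That is not in itself a problem, but the details of your argument do not close.

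The concrete gap is in the degree bound. From Property~(i) applied to $(a,b)\in H$ you only know $\deg_H(a)+\deg_H(b)\le\beta$; this constrains the \emph{sum}, not $\deg_H(a)$ alone, and it is perfectly consistent to have $\deg_H(a)=\beta-1$ and $\deg_H(b)=1$. In that case Property~(ii) on $(u,a)\notin H$ gives only $\deg_H(u)\ge\beta^- -\beta+1\ge 0$, i.e.\ nothing. So the claimed ``on average $\deg_H(u)\ge\beta^- -\beta/2$'' does not follow from what you have written; an averaging over \emph{which} side of the $3$-path is missing from $H$ (and the interplay with the opposite free endpoint $v$) would have to be made precise, and with $\beta-\beta^-\le\varepsilon\beta/4$ there is essentially no slack to absorb hand-waving. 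Two further loose ends: (a) the length-averaging step actually lower-bounds $t_1+t_3$ (length-$1$ \emph{and} length-$3$ augmenting paths), and length-$1$ paths, which have no $M_H$-edge and hence no anchor $(a,b)$, are not handled; (b) the final ``pigeonhole produces an $M_H$-augmenting $3$-path inside $H$'' is asserted but not carried out, and it is the crux --- as stated one cannot check whether the constants $(\beta,\beta^-,\varepsilon)$ suffice. Until the averaging over both endpoints and the final counting are written down quantitatively, the contradiction you aim for is not established.
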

	
	The proof of Theorem~\ref{thm:dcs-ratio} is the most crucial part of our work. In the following we briefly discuss our methodology and highlight the important ideas in our proof.
	
	Many algorithms for optimization problems are analyzed based on primal-duality of linear programs. Even though the convex hull of common independent sets can be described by a linear program~\cite{Schrijver2003}, we choose not to use its dual program. Instead we use the simpler mini-max theorem of Edmonds~\cite{edmonds1970submodular}.

	\begin{theorem}[Matroid intersection theorem~\cite{edmonds1970submodular}] \label{thm:matroid-intersection}
		Given two matroids $\mathcal{M}_1 = (V, \mathcal{I}_1)$ and $\mathcal{M}_2 = (V, \mathcal{I}_2)$, the maximum size of a set in $\mathcal{I}_1 \cap \mathcal{I}_2$ is
		\[\min_{U \subseteq V} (\rnk_{\mathcal{M}_1}(U) + \rnk_{\mathcal{M}_2}(V \backslash U)).\]
	\end{theorem}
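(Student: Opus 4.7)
The plan is to prove the min-max equality by establishing the easy direction ($\max \leq \min$) directly from rank and independence, and the reverse direction by extracting an optimal $U$ from any maximum common independent set via the classical exchange digraph. For the easy direction, fix any $S \in \mathcal{I}_1 \cap \mathcal{I}_2$ and any $U \subseteq V$; decomposing $|S| = |S \cap U| + |S \setminus U|$ and using that $S \cap U \subseteq S$ is independent in $\mathcal{M}_1$ (so $|S \cap U| \leq \rnk_{\mathcal{M}_1}(U)$), and symmetrically $|S \setminus U| \leq \rnk_{\mathcal{M}_2}(V \setminus U)$, yields $|S| \leq \rnk_{\mathcal{M}_1}(U) + \rnk_{\mathcal{M}_2}(V \setminus U)$.

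For the reverse direction, let $S$ be a maximum common independent set. I build the exchange digraph $D_S$ on vertex set $V$, with arcs $y \to x$ (for $y \in S$, $x \notin S$) whenever $S - y + x \in \mathcal{I}_1$ and arcs $x \to y$ whenever $S - y + x \in \mathcal{I}_2$, together with source and sink sets $X_1 = \{x \notin S : S + x \in \mathcal{I}_1\}$ and $X_2 = \{x \notin S : S + x \in \mathcal{I}_2\}$. By the classical shortest-augmenting-path lemma, any directed $X_1$-to-$X_2$ path $P$ in $D_S$ would allow the symmetric difference $S \triangle V(P)$ to be a strictly larger common independent set, so maximality of $S$ rules such a path out. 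Let $U$ be the set of vertices reachable from $X_1$ in $D_S$, so that $X_1 \subseteq U$, $X_2 \cap U = \emptyset$, and $U$ is closed under out-neighbors.

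I would then establish the two rank identities $|S \cap U| = \rnk_{\mathcal{M}_2}(U)$ and $|S \setminus U| = \rnk_{\mathcal{M}_1}(V \setminus U)$. For the first, $S \cap U$ is independent in $\mathcal{M}_2$; if it failed to span $U$, pick $x \in U \setminus S$ with $(S \cap U) + x \in \mathcal{I}_2$. If additionally $S + x \in \mathcal{I}_2$, then $x \in X_2 \cap U$, a contradiction; otherwise the fundamental circuit of $x$ in $\mathcal{M}_2$ over $S$ must meet $S \setminus U$ at some $y$ (since $(S \cap U) + x$ is independent), yielding an arc $x \to y$ with $x \in U$ and $y \notin U$, contradicting closure of $U$ under out-neighbors. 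The second identity follows symmetrically, using $\mathcal{M}_1$-arcs and the fact that $X_1 \subseteq U$. Summing gives $|S| = |S \cap U| + |S \setminus U| = \rnk_{\mathcal{M}_2}(U) + \rnk_{\mathcal{M}_1}(V \setminus U)$, so the minimum is attained.

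The main obstacle is the shortest-augmenting-path lemma invoked above: proving that $S \triangle V(P)$ remains independent in both matroids for a \emph{shortest} directed $X_1$-to-$X_2$ path $P$ requires a careful unique-circuit argument that crucially exploits the minimality of the path length (shorter chords would contradict the choice of $P$). Once this is in hand, the rank-identity bookkeeping is a clean application of the fundamental-circuit axiom, provided $U$ is oriented so that $\mathcal{M}_2$-arcs out of $U$ and $\mathcal{M}_1$-arcs into $V \setminus U$ are both forbidden along its boundary.
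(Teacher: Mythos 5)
Your proof is correct and is the standard augmenting-path proof of Edmonds' min--max theorem; note that the paper itself gives no proof of this statement and simply cites Edmonds, so there is nothing to compare against internally. The easy direction is fine, and in the hard direction your choice of $U$ (vertices reachable from $X_1$ in the exchange digraph) together with the two unique-circuit arguments correctly yields $|S| = \rnk_{\mathcal{M}_2}(U) + \rnk_{\mathcal{M}_1}(V \setminus U)$, which matches the stated form after swapping $U$ with its complement. The one ingredient you black-box, the shortest-augmenting-path lemma, is indeed the only nontrivial technical step, and you correctly identify where minimality of the path is needed (to rule out chords so that the unique fundamental circuits are ``used up'' one at a time in each matroid); with that classical lemma granted, the argument is complete.
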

	
	The minimizers $U$ and $V\backslash U$ in the formula will serve as the ``dual'' to bound the size of the optimal solution. In particular, in our proof, we consider the two matroids $\mathcal{M}'_1$ and $\mathcal{M}'_2$ derived from original matroids $\mathcal{M}_1$ and $\mathcal{M}_2$ restricted to $V'$.  
	Edmonds' theorem states that we can find $C_1$ and $C_2$ 
	so that $C_1 \cup C_2 = V'$, $C_1 \cap C_2 = \emptyset$, and  $\rnk_{\mathcal{M}'_1}(C_1) + \rnk_{\mathcal{M}'_2}(C_2)$ is equal to the size of the maximum common independent set in $V'$, denoted as $\mu(V')$. The question then boils down to compare the size of an optimal solution with $\rnk_{\mathcal{M}'_1}(C_1) + \rnk_{\mathcal{M}'_2}(C_2)$. 
	
	To achieve this, we will use a certain greedy procedure to choose a subset $S$ of the optimal solution so that in the contracted matroids $\mathcal{M}_1 / S$ and $\mathcal{M}_2 / S$, all the remaining elements of the optimal solution not in $S$ are spanned in at least one of these contracted matroids (either by $C_1$ in $\mathcal{M}_1 / S$ or by $C_2$ in $\mathcal{M}_2 / S$). 
	As these elements are of size at most $\rnk_{\mathcal{M}_1/S}(C_1) + \rnk_{\mathcal{M}_2/S}(C_2) \leq \rnk_{\mathcal{M}'_1}(C_1) + \rnk_{\mathcal{M}'_2}(C_2)=\mu(V')$ (by Edmonds' theorem), 
	we just need to bound the size of $S$. We will use a strategy to bound the size of $S$ by $(1/2 +\varepsilon) \cdot (\rnk_{\mathcal{M}'_1}(C_1) + \rnk_{\mathcal{M}'_2}(C_2))$. That part of the proof hinges on the construction of well-chosen subsets of $C_1$ and $C_2$ (see Lemmas~\ref{lem:defineR} and~\ref{lem:defineQ}), and on the properties of the density-based decomposition. In fact, in the case of graph matching (two partition matroids), the proof for EDCSes can be done by an edge counting argument~\cite{AssadiB19}, whereas here we need a more sophisticated proof strategy--- how the density decomposition is useful and exploited is fully displayed in the proofs of Lemmas~\ref{lem:defineR} and~\ref{lem:defineQ}. 
	
	\begin{remark}
	    When the two matroids are partition matroids of the same rank, the definition of associated density for an element matches the notion of degree for the endpoint of an edge: hence in that case our DCS definition corresponds to that of EDCS in a bipartite graph.
	\end{remark}

	\paragraph*{Application to One-Way Communication}
	
	We consider the following one-way communication problem~\cite{kushilevitz1997communication}: Alice is given some part $V_A$ of the common ground set $V$, while Bob holds the other part $V_B$. The goal for Alice is to send a single message to Bob so that Bob outputs an approximate maximum common independent set. If Alice sends her whole ground set $V_A$ to Bob, then the latter will be able to recover the exact solution. However in this game, we assume that communication is costly, so we would like to do as best as possible while restricting ourselves to use a message of size $O(\mu(V))$, where $\mu(V)$ denotes the size of the optimal solution. For instance, if Alice sends only a maximum intersection in $V_A$ then Bob is able to complete it to make it a maximal set (a set such as no element can be added to it without creating a circuit in one of the two matroids), and we then obtain a $2$ approximation protocol. The interest in studying one-way communication problems lies in their connection with the single-pass streaming model~\cite{GoelKK12} and other computational models, as they, in a certain way, capture the essence of trade-offs regarding message sizes.
	
    Our problem is a natural generalization of the one-way communication problem for matchings, which has been studied in~\cite{AssadiB19, GoelKK12}: the edges of the graph are splitted by some adversary between Alice and Bob, and Alice has to send a small message to Bob so he can recover some good matching.  In particular, when both matroids are partition matroids, our problem is equivalent to the one-way communication in a bipartite graph. Protocols have been provided for the one-way communication matching problem to get a $3/2$ approximation, see~\cite{AssadiB19, GoelKK12}. Moreover, we know that for bipartite graphs with $k$ vertices on each side, any protocol providing an approximation guarantee better than $3/2$ requires a message of size at least $k^{1 + \Omega(1/\log \log k)}$~\cite{GoelKK12}. Therefore in our general case of matroid intersection one cannot expect to beat the $3/2$ approximation ratio using a message of size $O(\mu(V))$.
    
    Assadi and Bernstein~\cite{AssadiB19} used the EDCS sparsifier to get the optimal $3/2$ approximation ratio.  In Section~\ref{sec:communication} we show that our DCS sparsifier has the same robustness property: if Alice builds some DCS and sends it to Bob, Bob will be able to get an approximate solution with a ratio close to $3/2$. Proving this requires only a slight adaptation of the proof of Theorem~\ref{thm:dcs-ratio}.
	
	\begin{restatable}[]{theorem}{communicationtheorem} \label{thm:communication}
        There exists a one-way communication protocol that, given any $\varepsilon > 0$, computes a $3/2 + \varepsilon$ approximation to maximum matroid intersection using a message of size $O(\mu(V)/\varepsilon)$ from Alice to Bob, where $\mu(V)$ denotes the size of the optimal solution of the matroid intersection problem.
    \end{restatable}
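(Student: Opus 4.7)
The plan is to follow the template of the one-way EDCS protocol of Assadi and Bernstein. Fix integer parameters $\beta, \beta^-$ with $\beta = \Theta(1/\varepsilon)$ satisfying both $\beta \geq \beta^- + 7$ and $(\beta^- - 4)(1+\varepsilon) \geq \beta$ (one may take $\beta^- = \beta - 7$ and $\beta$ a sufficiently large multiple of $1/\varepsilon$). Alice applies Theorem~\ref{thm:construction} to build a $(\beta,\beta^-)$-DCS $V'_A \subseteq V_A$ of the two matroids \emph{restricted to} $V_A$, and transmits $V'_A$ to Bob. Bob then outputs a maximum common independent set in the matroids restricted to $V'_A \cup V_B$. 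By Proposition~\ref{prop:size-bound}, $|V'_A| \leq \beta \cdot \mu(V_A) = O(\mu(V)/\varepsilon)$, which gives the required message bound.

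For correctness, it suffices to show $\mu(V'_A \cup V_B) \geq \mu(V)/(3/2+\varepsilon)$, which I plan to prove by adapting the argument of Theorem~\ref{thm:dcs-ratio}. Let $\mathcal{M}'_1, \mathcal{M}'_2$ be the restrictions of $\mathcal{M}_1, \mathcal{M}_2$ to $V'_A \cup V_B$. Applying Edmonds' matroid intersection theorem (Theorem~\ref{thm:matroid-intersection}) yields a partition $V'_A \cup V_B = C_1 \sqcup C_2$ with $\rnk_{\mathcal{M}'_1}(C_1) + \rnk_{\mathcal{M}'_2}(C_2) = \mu(V'_A \cup V_B)$. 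Let $S^*$ be an optimal common independent set in $V$. I would run the greedy procedure of Theorem~\ref{thm:dcs-ratio} to extract $S \subseteq S^*$ such that every $v \in S^* \setminus S$ is spanned by $C_1$ in $\mathcal{M}_1/S$ or by $C_2$ in $\mathcal{M}_2/S$; the key observation is that since $V_B \subseteq C_1 \cup C_2$, one may arrange the greedy so that $S \subseteq V_A \setminus V'_A$, as any $v \in S^* \cap V_B$ is already in $C_1 \cup C_2$ and is trivially spanned. Edmonds' theorem applied in the contracted matroids then yields $|S^* \setminus S| \leq \rnk_{\mathcal{M}_1/S}(C_1) + \rnk_{\mathcal{M}_2/S}(C_2) \leq \mu(V'_A \cup V_B)$.

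The remaining step is to bound $|S| \leq (1/2 + \varepsilon)\, \mu(V'_A \cup V_B)$, from which $|S^*| \leq (3/2 + \varepsilon)\, \mu(V'_A \cup V_B)$ follows. Since $S \subseteq V_A \setminus V'_A$ and $V'_A$ is a $(\beta,\beta^-)$-DCS of $V_A$, every element of $S$ satisfies $\tilde{\rho}_{\mathcal{M}_1}(v) + \tilde{\rho}_{\mathcal{M}_2}(v) \geq \beta^-$ with respect to $V'_A$ inside $V_A$, which is exactly the hypothesis that the counting arguments of Lemmas~\ref{lem:defineR} and~\ref{lem:defineQ} rely upon. I expect the main obstacle to be checking that these lemmas still go through when the DCS lives only in $V_A$ while the partition $(C_1,C_2)$ lives in the strictly larger ground set $V'_A \cup V_B$: one needs to verify that the extra elements of $V_B$ only help the ranks $\rnk_{\mathcal{M}'_i}(C_i)$ appearing on the right-hand side of those lemmas, and that the density-based decomposition arguments---being purely statements about the restrictions of $\mathcal{M}_1,\mathcal{M}_2$ to $V_A$---combine cleanly with the partition of $V'_A \cup V_B$ without losing constants. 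Once this hybrid bookkeeping is carried out, combining the two bounds gives the claimed $3/2+\varepsilon$ approximation.
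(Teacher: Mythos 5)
Your proposal is correct and follows essentially the same route as the paper: Alice sends a $(\beta,\beta^-)$-DCS of the matroids restricted to $V_A$, and the ratio is proved by rerunning the argument of Theorem~\ref{thm:dcs-ratio} with $C_1,C_2$ now allowed to contain elements of $V_B$, noting that $S$ ends up inside $V_A\setminus V'_A$ so Property~(ii) of the DCS still applies and Lemmas~\ref{lem:defineR} and~\ref{lem:defineQ} go through unchanged (the only step affected is the equality $C_l\cap\bigcup_i U_{l,i}=C_l\cap\spn_{\mathcal{M}_l}(\bigcup_i U_{l,i})$, which safely weakens to an inequality in the right direction). The only cosmetic difference is that you partition all of $V'_A\cup V_B$ via Edmonds' theorem, whereas the paper partitions $V'\cup O_B$ and concludes with $\mu(V'\cup O_B)\leq\mu(V'\cup V_B)$; both work.
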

    
    Hence our result closes the gap between matching and matroid intersection, and matches the $3/2$ bound for bipartite matching. It shows that matroid intersection and matching problems have similar one-way communication limitations, despite the more complex structure of matroids.

	\paragraph*{Application to Random-Order Streams}
	
	The \emph{streaming} model of computation~\cite{FKMSZ2005} has been motivated by the recent rise of massive datasets, where we cannot afford to store the entire input in memory. 
    Given that the ground set is made of $|V| = n$ elements, in the streaming model $V$ is presented to the algorithm as a stream of elements $v_1, \dots, v_n$. The algorithm is allowed to make a single pass over that stream and, ideally, uses a memory roughly proportional to the output size (up to a poly-logarithmic factor): therefore the main challenge in this model is that we have to discard many elements through the execution of the algorithm.
    
    We note that, in the most general model where an adversary decides the order of the elements, it has been a long-standing open question whether the maximum matching in bipartite graphs (a very simple case of matroid intersection) can be approximated within a factor better than $2$, \emph{i.e.}, the ratio achievable by the simple greedy algorithm.
    
    Our focus here is on the \emph{random-order} streaming model, where the permutation of the elements of $V$ in the stream 
    is chosen uniformly at random. This is a natural assumption as real-world data has little reason of being ordered in an adversarial way (even though the distribution may not be entirely random either). 
    In fact, as mentioned in~\cite{KonradMM12}, the random-order streaming model might better  explain why certain algorithms perform better in practice than their theoretical bounds under an adversary model.
    It is noteworthy that under the random-order streaming model, for the maximum 
    matching, quite a few recent papers have shown that the approximation factor of 2 can be beaten~\cite{KonradMM12,GamlathSSS2019,Konrad18,FarhadiHMRR20,bernstein:LIPIcs:2020:12419,HuangS2022-edcs}. In addition, in the adversary model, Kapralov~\cite{Kapralov2021} shows that to get an approximation factor  better than $1 + \ln 2 \approx 1.69$, one needs $k^{1+\Omega(1/\log \log k)}$ space, even in bipartite graphs (here $k$ denotes the number of vertices on each side). The paper of Bernstein~\cite{bernstein:LIPIcs:2020:12419} proves that it is possible to beat this adversarial-order lower bound in the random-order model, by achieving a $3/2 + \varepsilon$ approximation while using only $O(k \cdot poly(\log(k), 1/\varepsilon))$ space, thus demonstrating a separation between the adversary model and the random-order model.
    
    For our main topic, matroid intersection, a simple greedy algorithm gives again an approximation ratio of $2$. Guruganesh and Singla~\cite{GuruganeshS17} have shown that it is possible to obtain the factor of $2 - \delta$ in expectation, for some small $\delta > 0$.\footnote{It should be emphasized that Guruganesh and Singla consider the more stringent ``online'' model.} We show that this factor can be significantly improved. In fact, in Section~\ref{sec:streaming}, we use our DCS construction in the context of random-order streams to design an algorithm. The framework developed in Section~\ref{sec:streaming} is a slight modification of that  of~\cite{bernstein:LIPIcs:2020:12419,HuangS2022-edcs}. 
	
	\begin{restatable}[]{theorem}{streamingtheorem}
	    \label{thm:intro-streaming-approx}
	    Let $1/4 >\varepsilon > 0$. One can extract from a randomly-ordered stream of elements a common independent subset in two matroids with an approximation ratio of $3/2 + \varepsilon$ in expectation, using $O(\mu(V)\cdot \log(n) \cdot \log(k) \cdot (1/\varepsilon)^3)$ memory, where $\mu(V)$ denotes the size of the optimal solution, and $k$ is the smaller rank of the two given matroids. Moreover the approximation ratio is worse than $3/2 + \varepsilon$ only with probability at most $\exp(- 1/32 \cdot \varepsilon^2 \cdot \mu(V)) + n^{-3}$.
	\end{restatable}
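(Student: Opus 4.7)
The approach is to adapt to the DCS setting the streaming framework developed by Bernstein~\cite{bernstein:LIPIcs:2020:12419} and Huang-Sellier~\cite{HuangS2022-edcs} for maintaining an EDCS in random-order streams; the $3/2+\varepsilon$ guarantee will then follow from Theorem~\ref{thm:dcs-ratio} applied at the end of the stream. Fix integer parameters $\beta, \beta^- = \Theta(1/\varepsilon)$ satisfying the hypotheses of Theorem~\ref{thm:dcs-ratio}. Since $\mu(V)$ is not known in advance, I would run $O(\log n)$ copies of the algorithm in parallel, one per dyadic guess $\tilde\mu \in \{2^0, 2^1, \ldots, 2^{\lceil\log_2 n\rceil}\}$, and output the largest common independent set extracted at the end. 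Proposition~\ref{prop:size-bound} bounds the size of the maintained DCS by $O(\beta \tilde\mu)$, which is the dominant contribution to the memory consumption.

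The stream is cut into $\Theta(\log k)$ phases. Inside each phase I would maintain a candidate DCS $V'$ of the elements seen so far through local swap updates: when a new element $v$ arrives whose associated density sum $\tilde\rho_{\mathcal{M}_1}(v)+\tilde\rho_{\mathcal{M}_2}(v)$, computed with respect to the current $V'$, is below $\beta^-$, we insert $v$ into $V'$; by Lemma~\ref{lem:add-increase} this can only increase the associated density sums of other elements, so we then iteratively remove elements whose sum has become larger than $\beta$, using Lemma~\ref{lem:del-decrease} to control how these removals propagate through the density-based decomposition. The process terminates within a bounded number of swaps per arrival because each swap strictly decreases a potential derived from the decomposition (essentially the total associated density, weighted over strata).

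The main technical hurdle is to show that, under a uniformly random permutation of the stream, this sequence of local updates concentrates, so that the final $V'$ is a valid $(\beta,\beta^-)$-DCS of the whole ground set with high probability, while the memory footprint never exceeds $O(\mu(V) \cdot \mathrm{poly}(\log n, \log k, 1/\varepsilon))$. I would mimic the phase-by-phase analysis of~\cite{bernstein:LIPIcs:2020:12419, HuangS2022-edcs}: by exchangeability of the stream, each random arrival is \emph{a priori} equally likely to trigger an insertion, so the expected number of swaps in a phase scales with the number of DCS violations relative to the true optimal DCS and not with the stream length. The high-probability statement (the $\exp(-\Theta(\varepsilon^2 \mu(V)))$ term) is then obtained by a Chernoff/Azuma-type concentration applied to the number of swap events inside each phase, while the additive $n^{-3}$ absorbs the union bound over the $O(\log n)$ parallel guesses and over the polynomially many streaming steps.

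Once a valid DCS has been maintained, Theorem~\ref{thm:dcs-ratio} yields the claimed $3/2+\varepsilon$ approximation. The expected-approximation version follows directly, and when $\mu(V)$ is large enough for the concentration bound $\exp(-\Theta(\varepsilon^2 \mu(V)))$ to be small, the same conclusion holds with high probability. The hardest step is quantifying the phase length and the per-phase sample size so that the three factors $\log n$, $\log k$ and $(1/\varepsilon)^3$ in the memory bound come out correctly; this is where the adaptation of the matching-based potential argument of~\cite{bernstein:LIPIcs:2020:12419, HuangS2022-edcs} to the more delicate density-based decomposition must be carried out carefully, using Lemmas~\ref{lem:add-increase} and~\ref{lem:del-decrease} as the workhorses.
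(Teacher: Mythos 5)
There is a genuine gap at the heart of your plan: you propose to maintain a valid $(\beta,\beta^-)$-DCS of the \emph{whole} ground set and then invoke Theorem~\ref{thm:dcs-ratio} directly, but a one-pass algorithm cannot guarantee Property~(ii) of Definition~\ref{def:dcs}. When an element $v$ arrives with $\tilde{\rho}_{\mathcal{M}_1}(v)+\tilde{\rho}_{\mathcal{M}_2}(v)\geq\beta^-$ you discard it irrevocably; subsequent deletions from $V'$ (triggered by later insertions) can push the associated density sum of $v$ back below $\beta^-$, and by then $v$ is gone. This is exactly why the paper never maintains a DCS in the stream: its first phase only enforces the upper bound (a set of ``bounded density $\beta$''), runs until an entire interval of length $\alpha_i$ produces no underfull element (with $\alpha_i$ shrinking geometrically as the guess $i$ of $\log_2\mu(V)$ grows), and the second phase simply \emph{stores} every underfull element $X$ seen afterwards. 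The approximation then comes not from Theorem~\ref{thm:dcs-ratio} but from the separate Lemma~\ref{lem:underfull-intersection}, which shows $(3/2+\varepsilon)\cdot\mu(V'\cup X)\geq\mu(V)$ for a bounded-density $V'$ together with all underfull elements of $V\setminus V'$; the paper is explicit that it does \emph{not} argue that $V'\cup X$ contains a DCS. Your sketch is missing both this two-phase structure and the substitute approximation lemma, and without them the plan as stated does not go through.

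Two further points. First, you attribute the $\exp(-\Theta(\varepsilon^2\mu(V)))$ failure probability to concentration of the number of swap events; in fact the number of insertion/deletion operations is bounded \emph{deterministically} by $O(\beta^2\mu(V))$ via the potential function of Theorem~\ref{thm:construction} (which, incidentally, \emph{increases} by at least $1$ per local step rather than decreasing as you write). The $\exp(-2(\varepsilon/8)^2\mu(V))$ term actually comes from Hoeffding's inequality applied to the number of elements of the optimal solution $O$ that fall into the early $\varepsilon$-fraction of the stream consumed by phase one (Claim~\ref{claim:late-part}) --- a loss your sketch never accounts for. The randomness of the stream is used elsewhere: to argue that if a whole interval of length $\alpha$ contains no underfull element, then with probability $1-n^{-4}$ at most $\gamma=4\log(n)\cdot n/\alpha$ underfull elements remain downstream, which is what bounds $|X|$ and hence the memory. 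Second, your $O(\log n)$ parallel dyadic guesses of $\mu(V)$ are not needed and do not play the role the guess plays in the paper: the guess there controls the interval length $\alpha_i$ (i.e., how long the algorithm waits before declaring the first phase stable), and it is resolved sequentially within the single first phase rather than by parallel copies.
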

	
	Thus, not only do we improve upon the factor $2 - \delta$~\cite{GuruganeshS17}, but 
	also we demonstrate that it is possible to beat the adversarial-order lower bound of $1 + \ln 2 \approx 1.69$ of~\cite{Kapralov2021} for the matroid intersection problem as well in the random order model (assuming that $n$ is polynomial in $k$). 
	
	\begin{remark}
	    When the size the the optimal solution $\mu(V)$ is $\Omega(\log(n)/\varepsilon^2)$, we obtain a good approximation ratio with high probability, as the probability of failure will be $n^{-O(1)}$ (and $n$ is assumed to be very big as we are in the streaming setting).
	    Unlike in~\cite{bernstein:LIPIcs:2020:12419,HuangS2022-edcs}, we cannot guarantee with high probability a good approximation ratio when the solution is small: in fact, when a matching is relatively small we can prove that the graph has a limited number of edges (so we can afford to store all of them), but for the matroid intersection problem, a small maximum intersection of two matroids does not imply that the ground set is small as well.
	\end{remark}
	
    \paragraph*{Density-Based Decomposition and Principal Partitions}
    The notion of densest subsets and density-based decompositions is closely related to the theory of \emph{principal partitions}. The latter indeed comes from a long line of research in various domains, ranging from graphs, matrices, matroids, to submodular systems. We refer to readers to a survey of Fujishige~\cite{Fujishige08}. Below we give a quick outline. 

    Let $V'$ be the ground set of a matroid $\mathcal{M}$. By the theory of principal partitions, there exist a sequence of nested sets, called \emph{principal sequence}, $F_1 \subset F_2 \subset \dots \subset F_k=V'$, and a sequence of critical values $\lambda_1 > \lambda_2 > \cdots > \lambda_k$, so that the matroid obtained by contracting $F_{i-1}$ and restricted to $F_i$, is ``uniformly dense'' (\emph{i.e}, no set has a larger density than the ground set itself), with density $\lambda_i$. In our context, recall that $V'$ is decomposed into $U_1, U_2, \dots, U_k$ by a greedy procedure. Then it can be seen that $F_1 = U_1, F_2 = U_1 \cup U_2, \dots, F_k = U_1 \cup \dots \cup U_k$. In this sense, our density-based decomposition can be regarded as a rewriting of the principal sequence, and some basic results stated in Section~\ref{sec:density-decomposition} are already known in the context of principal partitions. However, we adopt this term and this way of decomposing the elements to better emphasize the ``greedy'' nature of our approach and to facilitate our presentation.

    The most important consequence of the theory of principal partitions for us is that the densest sets $U_1, \dots, U_k$ in our greedy procedure can be computed in polynomial time by using submodular function minimization~\cite{Fujishige08}. We briefly explain how it can be done. For any density $\rho$, we can find in polynomial time the largest set $U_{\rho}$ minimizing the submodular function $f_{\rho}(U) = \rho \cdot \rnk_{\mathcal{M}}(U) - |U|$ (\emph{e.g.}, see~\cite{Schrijver2003}). Hence we can find the largest density $\rho^*$ and the associated largest densest subset in polynomial time using binary search: for some value $\rho$, if $U_{\rho} = \emptyset$ then it means that $\rho^* < \rho$, and if $U_{\rho} \neq \emptyset$ it means that $\rho^* \geq \rho$. The exact value of $\rho^*$ can be found as densities can only be rational numbers with denominators bounded by the rank $k$ of the matroid. For the largest densest subset $U_{\rho^*}$, we have $f_{\rho^*}(U_{\rho^*}) = 0$, and when $\rho < \rho^*$ we have $f_{\rho}(U_{\rho}) \leq f_{\rho}(U_{\rho^*}) < 0$.
 
    Although the above procedure can be costly in running time, for some simple matroids that may be of more practical importance, such as laminar or transversal matroids, it should be possible to compute the density-based decomposition faster, because of their particular structures. Moreover, in our algorithms, as we frequently update the ground set on which we compute the decomposition by adding or removing one element, there may be room to improve our time complexity: we leave as an open question whether updating a density-based decomposition when performing these kinds of operations can be done more efficiently, without re-computing the whole decomposition each time. 
    
    Analysing more carefully how the density-based decomposition and the DCS could be updated efficiently may also lead to an application of DCSes to dynamic matroid intersection (note that the EDCS was originally proposed for dynamic graph matching~\cite{BernsteinS15}). In that setting, elements are added into or removed from the ground set and the objective is to maintain an approximate maximum matroid intersection, while guaranteeing a small update time.

	\paragraph*{Related Work} 
	
    Matroid intersection is an ubiquitous subject in theoretical computer science. We refer the reader to the comprehensive book of Schrijver~\cite{Schrijver2003}. Although in the traditional offline setting we know since the 70s that the problem can be solved in polynomial time~\cite{edmonds1970submodular, Edmonds1971, Edmonds1979}, improving the running time of matroid intersection is still a very active area~\cite{BlikstadMNT23,Blikstad21,ChakrabartyLS0W19}.
    
    The importance of matroid intersection comes from the large variety of combinatorial optimization problems it captures, the most well-known being bipartite matching and packing of spanning trees/arborescences. Moreover, other applications can be found in electric circuit theory~\cite{Murota1999matrices,Recski1989matroid}, rigidity theory~\cite{Recski1989matroid}, and network coding~\cite{DoughertyFZ11}. In general, matroids generalize numerous combinatorial constraints; as a result matroid intersection can appear in very diverse contexts. For instance, a recent trend in machine learning is the ``fairness'' constraints (\emph{e.g.}, see~\cite{Chierichetti0LV19} and references therein), which can be encoded by partition or laminar matroids (for nested constraints). Machine scheduling constraints is another example of matroid application, in that case using transversal matroids, see~\cite{GabowT84,XuG94}.
	
	For the one-way communication problem~\cite{kushilevitz1997communication}, the case of maximum matching has been studied in~\cite{AssadiB19, GoelKK12}, for which a $3/2 + \varepsilon$ approximation is obtained. We are not aware of any previous result for the matroid intersection problem in that model. In general, one-way communication is often used to get a better understanding of streaming problems, see~\cite{FeldmanNSZ20, GoelKK12}.
	
	In the \emph{adversarial} streaming, the trivial greedy algorithm building a maximal independent set (an independent set that cannot be extended) achieves a $2$ approximation~\cite{Calinescu2007,Mestre2006}. Improving that approximation ratio is a major open question in the field of streaming algorithms, even for the simple case of bipartite matching (an intersection of two partition matroids). On the hardness side, we know that an approximation ratio better than $1 + \ln 2 \approx 1.69$ cannot be achieved~\cite{Kapralov2021} (previously, an inapproximability of $1 + 1/(e-1) \approx 1.58$ had been established in~\cite{Kapralov13}) for the maximum bipartite matching --- hence for the matroid intersection as well. Note that matroid intersection has been studied in the streaming setting under the adversarial model (in the more general case of weighted/submodular optimisation), for instance see~\cite{ChekuriGQ15, DBLP:conf/nips/FeldmanK018, GJS2021}.
	
	In comparison with the adversarial model, for the \emph{random-order} streaming, Guruganesh and Singla have obtained a $2 - \delta$ approximation ratio (for some small $\delta > 0$) for matroid intersection~\cite{GuruganeshS17}. 
	To our knowledge, it is the only result beating the factor of $2$ for the general matroid intersection problem. 
	In the maximum matching problem (not necessarily in bipartite graphs), a pioneering result was first obtained by Konrad, Magniez, and Mathieu~\cite{KonradMM12} with an approximation ratio strictly below $2$ for simple matchings. The approximation ratio was later improved in a sequence of papers~\cite{GamlathSSS2019,Konrad18,FarhadiHMRR20,bernstein:LIPIcs:2020:12419}. Currently the best result for matchings is due to Assadi and Behnezhad~\cite{AssadiB21}, who obtained the ratio of $3/2 - \delta$ for some small constant $\delta \sim 10^{-14}$. 
    
\section{Density-Based Decomposition}
    \label{sec:density-decomposition}

    Let $\mathcal{M} = (V, \mathcal{I})$ be a matroid on the ground set $V$. Recall that a pair $\mathcal{M} = (V, \mathcal{I})$ is a matroid if the following three conditions hold: (1) $\emptyset \in \mathcal{I}$, (2) if $X\subseteq Y \in \mathcal{I}$, then $X\in \mathcal{I}$, and (3) if $X, Y \in \mathcal{I}, |Y| > |X|$, there exists an element $e \in Y \backslash X$ so that $X \cup \{e\} \in \mathcal{I}$. 
    The sets in $\mathcal{I} \subseteq \mathcal{P}(V)$ are the \emph{independent sets}.
    The \emph{rank} of a subset $X \subseteq V$ is $\rnk_{\mathcal{M}}(X) = \max_{Y \subseteq X,\,Y \in \mathcal{I}}|Y|$. The rank of a matroid is $\rnk_{\mathcal{M}}(V)$. Observe that this notion generalizes that of linear independence in vector spaces.
    
    A subset $C \subseteq V$ is a \emph{circuit} if $C$ is a minimal 
    non-independent set, \emph{i.e.}, for every $v \in C$, $C \backslash \{v\} \in \mathcal{I}$. 
    We will assume that no element in $V$ is a circuit by itself (called ``loop'' in the literature) throughout the paper.  
    The \emph{span} of a subset $X \subseteq V$ in the matroid $\mathcal{M}$ is defined as $\spn_{\mathcal{M}}(X) = \{x \in V,\,\rnk_{\mathcal{M}} (X \cup \{x\}) = \rnk_{\mathcal{M}}(X)\}$, these elements are called spanned by $X$ in $\mathcal{M}$. 
    For more details about matroids, we refer the reader to~\cite{Schrijver2003}.
    
    The \emph{restriction} and \emph{contraction} of a matroid results in another matroid.
    
    \begin{definition}[Restriction]
         Let $\mathcal{M} = (V, \mathcal{I})$ be a matroid, and let $V' \subseteq V$ be a subset. Then we define the restriction of $\mathcal{M}$ to $V'$ as $\mathcal{M}' = \mathcal{M}|V' = (V', \mathcal{I}')$ where $\mathcal{I'} = \{S \subseteq V' : S \in \mathcal{I}\}$.
    \end{definition}
    
    \begin{definition}[Contraction]
        Let $\mathcal{M} = (V, \mathcal{I})$ be a matroid, and let $U$ be a subset of $V$. Then we define the contracted matroid $\mathcal{M}/U = (V \backslash U, \mathcal{I}_U)$ so that, given a maximum independent subset $\mathcal{B}_U$ of $U$, $\mathcal{I}_U = \{S \subseteq V \backslash U : S \cup \mathcal{B}_U \in \mathcal{I}\}$.
    \end{definition}
    
    It is well-known that any choice of $\mathcal{B}_U$  produces the same $\mathcal{I}_U$, as a result the definition of contraction is unambiguous. 
    The following proposition comes directly from the definition. 
	\begin{proposition} \label{prop:rank-contraction}
		Let $\mathcal{M} = (V, \mathcal{I})$ be a matroid and let $A \subseteq B \subseteq V$. Then we have $\rnk_{\mathcal{M}/A}(B\backslash A)  = \rnk_{\mathcal{M}}(B) - \rnk_{\mathcal{M}}(A)$.
	\end{proposition}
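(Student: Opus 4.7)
The plan is to prove the proposition by showing the two directions of the equality separately, after fixing a maximal independent subset $\mathcal{B}_A \subseteq A$ so that by definition $|\mathcal{B}_A| = \rnk_{\mathcal{M}}(A)$ and $\rnk_{\mathcal{M}/A}(B \setminus A) = \max\{|S| : S \subseteq B \setminus A,\ S \cup \mathcal{B}_A \in \mathcal{I}\}$.

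For the upper bound $\rnk_{\mathcal{M}/A}(B \setminus A) \leq \rnk_{\mathcal{M}}(B) - \rnk_{\mathcal{M}}(A)$, I would let $S \subseteq B \setminus A$ be a witness of $\rnk_{\mathcal{M}/A}(B \setminus A)$, so that $S \cup \mathcal{B}_A \in \mathcal{I}$. Since $\mathcal{B}_A \subseteq A \subseteq B$ and $S \subseteq B \setminus A$, the union $S \cup \mathcal{B}_A$ is a disjoint independent subset of $B$, so $|S| + |\mathcal{B}_A| = |S \cup \mathcal{B}_A| \leq \rnk_{\mathcal{M}}(B)$, which rearranges to the desired inequality.

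For the lower bound I would use the augmentation axiom (axiom (3) in the paper) to extend $\mathcal{B}_A$ to an independent subset of $B$ of maximum size. Concretely, starting from $\mathcal{B}_A \in \mathcal{I}$, as long as there exists an independent set $Y \subseteq B$ with $|Y| > |\mathcal{B}_A|$ we can add an element of $Y$ while staying independent; iterating yields an independent $T \subseteq B$ with $|T| = \rnk_{\mathcal{M}}(B)$ and $\mathcal{B}_A \subseteq T$. The key observation is then that $T \setminus \mathcal{B}_A \subseteq B \setminus A$: indeed, any $v \in T \cap A$ would give $\mathcal{B}_A \cup \{v\} \subseteq T$ independent, contradicting the maximality of $\mathcal{B}_A$ inside $A$ (unless $v \in \mathcal{B}_A$). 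Setting $S = T \setminus \mathcal{B}_A$ therefore yields $S \subseteq B \setminus A$ with $S \cup \mathcal{B}_A = T \in \mathcal{I}$ and $|S| = \rnk_{\mathcal{M}}(B) - \rnk_{\mathcal{M}}(A)$, which shows that $\rnk_{\mathcal{M}/A}(B \setminus A) \geq \rnk_{\mathcal{M}}(B) - \rnk_{\mathcal{M}}(A)$.

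The proof is essentially routine matroid manipulation, so there is no real obstacle; the only subtle point is justifying that the augmentation from $\mathcal{B}_A$ to a maximum independent subset of $B$ only uses elements of $B \setminus A$, which follows cleanly from the maximality of $\mathcal{B}_A$ as an independent subset of $A$. Combining the two inequalities completes the proof.
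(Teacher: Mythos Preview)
Your proof is correct and is exactly the standard argument one writes out when unpacking the definition of contraction. The paper itself does not give a proof of this proposition; it simply states that it ``comes directly from the definition,'' and your two-inequality argument via a base $\mathcal{B}_A$ of $A$ is precisely the routine verification that justifies this claim.
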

    
    Here we recall the definition of density that we will use in the following.
    \densitydefinition*
    
    The following proposition, which we will use frequently, states how the density is changed after a matroid is contracted.
    \begin{proposition}
        \label{prop:subset_increased_rho}
        Let $\mathcal{M} = (V, \mathcal{I})$ be a matroid. If $A \subseteq B \subseteq V$ and $U \subseteq V \backslash B$ we have the following inequality:
        \[\rho_{\mathcal{M}/A}(U) \leq \rho_{\mathcal{M}/B}(U),\]
        assuming that $\rho_{\mathcal{M}/A}(U) < +\infty$.
    \end{proposition}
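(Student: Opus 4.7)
The plan is to reduce the density inequality to a rank inequality, and then derive that rank inequality from the submodularity of the matroid rank function.

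First I would note that $|U|$ does not depend on whether we contract by $A$ or by $B$, so by the definition of density it suffices to prove $\rnk_{\mathcal{M}/A}(U) \geq \rnk_{\mathcal{M}/B}(U)$ (the hypothesis $\rho_{\mathcal{M}/A}(U) < +\infty$ ensures that either $U = \emptyset$, in which case both sides are $0$, or $\rnk_{\mathcal{M}/A}(U) > 0$, so the quotient is well defined; if $\rnk_{\mathcal{M}/B}(U) = 0$ the right-hand side is $+\infty$ and the inequality holds trivially).

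Next I would invoke Proposition~\ref{prop:rank-contraction} to rewrite
\[
\rnk_{\mathcal{M}/A}(U) = \rnk_{\mathcal{M}}(A \cup U) - \rnk_{\mathcal{M}}(A), \qquad \rnk_{\mathcal{M}/B}(U) = \rnk_{\mathcal{M}}(B \cup U) - \rnk_{\mathcal{M}}(B),
\]
so the target inequality becomes
\[
\rnk_{\mathcal{M}}(A \cup U) + \rnk_{\mathcal{M}}(B) \;\geq\; \rnk_{\mathcal{M}}(B \cup U) + \rnk_{\mathcal{M}}(A).
\]

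This is exactly submodularity of $\rnk_{\mathcal{M}}$ applied to the sets $X = A \cup U$ and $Y = B$. Because $A \subseteq B$, we have $X \cup Y = A \cup U \cup B = B \cup U$; because $U \subseteq V \setminus B$ (so $U \cap B = \emptyset$) and $A \subseteq B$, we have $X \cap Y = (A \cup U) \cap B = A$. Submodularity then yields
\[
\rnk_{\mathcal{M}}(A \cup U) + \rnk_{\mathcal{M}}(B) \;\geq\; \rnk_{\mathcal{M}}((A \cup U) \cup B) + \rnk_{\mathcal{M}}((A \cup U) \cap B) = \rnk_{\mathcal{M}}(B \cup U) + \rnk_{\mathcal{M}}(A),
\]
which finishes the proof. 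There is no real obstacle here: the only thing one must be slightly careful about is the treatment of the $+\infty$ conventions, handled by the finiteness assumption and the trivial case $U = \emptyset$.
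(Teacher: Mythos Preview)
Your proof is correct and follows the same approach as the paper: both reduce the density inequality to the rank inequality $\rnk_{\mathcal{M}/A}(U) \geq \rnk_{\mathcal{M}/B}(U)$, noting that $|U|$ is unchanged. The paper simply asserts this rank inequality as a known fact, whereas you supply a clean justification via submodularity of $\rnk_{\mathcal{M}}$; this extra detail is fine and does not constitute a different route.
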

    
    \begin{proof}
        In fact, $\rnk_{\mathcal{M}/A}(U) \geq \rnk_{\mathcal{M}/B}(U)$, while the cardinality $|U|$ remains obviously the same.
    \end{proof}
    
    The notions of density and matroid contraction allow to define the \emph{density-based decomposition} $U_1,\dots, U_k$ of a subset $V' \subseteq V$ as follows. First, consider the matroid $\mathcal{M}'$ defined as matroid $\mathcal{M}=(V, \mathcal{I})$ restricted to the subset $V' \subseteq V$. Then select from $V'$ the set $U_1$ of largest density in $\mathcal{M}'$ (in case several sets have the same largest density, choose the one with the largest cardinality). Then again choose  the set $U_2$ of largest density (again choose the one with the largest cardinality) in $\mathcal{M}'/U_1$ and so on (see a formal description in Algorithm~\ref{algo:decomposition}). As the rank of the matroid is $k$, after at most $k$ steps in the loop the set $\bigcup_{i = 1}^{k}U_i$ is equal to $V'$. Observe that some latter sets of the decomposition may be empty. Moreover, this decomposition is unique as the choice of maximum cardinality densest subset at each step is unique (see Proposition~\ref{prop:unique-densest}). We note that as we assume that no element is a circuit by itself, our construction guarantees that no set $U_i$ has infinite density. 
    
    \begin{algorithm}[h]
	\caption{Algorithm for building a density-based decomposition of a set $V'$ in  $\mathcal{M}'=(V', \mathcal{I}')$}\label{algo:decomposition}
	\begin{algorithmic}[1]
	\State $\forall\, 1 \leq i \leq k, U_i \gets \emptyset$
	\For{$j = 1 \dots k$}
	    \State $U_j \gets$ the densest subset of largest cardinality in $\mathcal{M}'/(\bigcup_{i = 1}^{j-1}U_i)$ 
	\EndFor
	\end{algorithmic}
	\end{algorithm}
	
	To give some intuition about this decomposition, we provide an example for a laminar matroid, that is represented in Figure~\ref{fig:laminar-matroid} and decomposed in Figure~\ref{fig:laminar-matroid-decomposition}.
	
	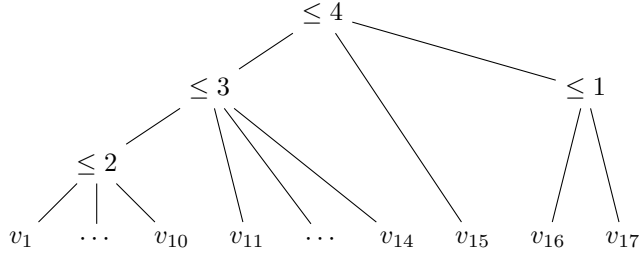
\begin{figure}
    	\centering
        \begin{tikzpicture}
            \node (V1) at (1,0) {$v_1$};
            \node (V2) at (2,0) {$\cdots$};
            \node (V3) at (3,0) {$v_{10}$};
            \node (V4) at (4,0) {$v_{11}$};
            \node (V5) at (5,0) {$\cdots$};
            \node (V6) at (6,0) {$v_{14}$};
            \node (V7) at (7,0) {$v_{15}$};
            \node (V8) at (8,0) {$v_{16}$};
            \node (V9) at (9,0) {$v_{17}$};
            
            \node (X3) at (2,1) {$\leq 2$};
            \draw [-] (V1) -- (X3);
            \draw [-] (V2) -- (X3);
            \draw [-] (V3) -- (X3);
            
            \node (X2) at (3.5,2) {$\leq 3$};
            \draw [-] (X3) -- (X2);
            \draw [-] (V4) -- (X2);
            \draw [-] (V5) -- (X2);
            \draw [-] (V6) -- (X2);
            
            \node (X4) at (8.5,2) {$\leq 1$};
            \draw [-] (V8) -- (X4);
            \draw [-] (V9) -- (X4);
            
            \node (X1) at (5,3) {$\leq 4$};
            \draw [-] (X2) -- (X1);
            \draw [-] (X4) -- (X1);
            \draw [-] (V7) -- (X1);
        \end{tikzpicture}
        \caption{Representation of a laminar matroid $\mathcal{M} = (V, \mathcal{I})$ on a ground set $V = \{v_1, \dots, v_{17}\}$. The leaves represent elements of the ground set, and the inner nodes represent cardinality constraints on the elements in their associated subtree (\emph{e.g.}, if $S \in \mathcal{I}$, then $|S \cap \{v_1, \dots, v_{14}\}| \leq 3$).}
        \label{fig:laminar-matroid}
    \end{figure}
    
    \begin{figure}
    	\centering
        \begin{tikzpicture}
            \node (V1) at (1,0) {$v_1$};
            \node (V2) at (2,0) {$\cdots$};
            \node (V3) at (3,0) {$v_{10}$};
            \node (V4) at (4,0) {$v_{11}$};
            \node (V5) at (5,0) {$\cdots$};
            \node (V6) at (6,0) {$v_{14}$};
            \node (V7) at (7,0) {$v_{15}$};
            \node (V8) at (8,0) {$v_{16}$};
            \node (V9) at (9,0) {$v_{17}$};
            
            \node [draw, red, rectangle] (X3) at (2,1) {$\leq 2$};
            \draw [-] (V1) -- (X3);
            \draw [-] (V2) -- (X3);
            \draw [-] (V3) -- (X3);
            
            \node [draw, red, rectangle] (X2) at (3.5,2) {$\leq 1$};
            \draw [dashed] (X3) -- (X2);
            \draw [-] (V4) -- (X2);
            \draw [-] (V5) -- (X2);
            \draw [-] (V6) -- (X2);
            
            \node (X4) at (8.5,2) {$\leq 1$};
            \draw [-] (V8) -- (X4);
            \draw [-] (V9) -- (X4);
            
            \node [draw, red, rectangle] (X1) at (5,3) {$\leq 1$};
            \draw [dashed] (X2) -- (X1);
            \draw [-] (X4) -- (X1);
            \draw [-] (V7) -- (X1);
        \end{tikzpicture}
        \caption{Density-based decomposition of the laminar matroid $\mathcal{M}$ represented in Figure~\ref{fig:laminar-matroid}. We have the densest subset $U_1 = \{v_1, \dots, v_{10}\}$, then the second densest subset $U_2 = \{v_{11}, \dots, v_{14}\}$ and finally $U_3 = \{v_{15}, v_{16}, v_{17}\}$. Their densities are respectively $5$, $4$, and $3$. Note that here $k = 4$ so we have an additional set $U_4 = \emptyset$ of density zero.}
        \label{fig:laminar-matroid-decomposition}
    \end{figure}
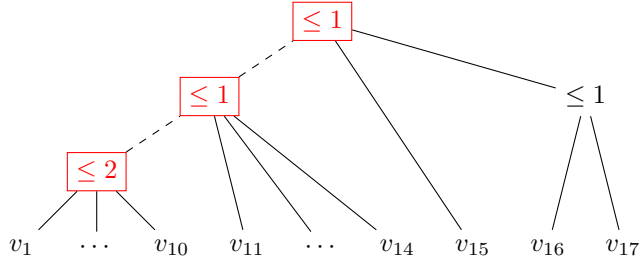

	\begin{proposition}\label{prop:largerdensity}
	    Let $\mathcal{M} = (V, \mathcal{I})$ be a matroid and let $B$ be the subset that reaches the maximum density $\rho^* < +\infty$. Then given any $A \subsetneq B$, $\rho_{\mathcal{M}/ A}(B \backslash A) \geq \rho^{*}$.
	\end{proposition}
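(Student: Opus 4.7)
The plan is to reduce the claim to a direct algebraic manipulation using Proposition~\ref{prop:rank-contraction}, together with the definition of $\rho^*$ as the maximum density.

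First I would dispose of the trivial case $A = \emptyset$: then $\mathcal{M}/A = \mathcal{M}$ and $B \setminus A = B$, so $\rho_{\mathcal{M}/A}(B \setminus A) = \rho_{\mathcal{M}}(B) = \rho^*$, and the inequality holds with equality. So assume $A \neq \emptyset$; since no element is a loop, $\rnk_{\mathcal{M}}(A) \geq 1$, so $\rho_{\mathcal{M}}(A)$ is a well-defined finite quantity, and by maximality of $\rho^*$ we have
\[
\frac{|A|}{\rnk_{\mathcal{M}}(A)} = \rho_{\mathcal{M}}(A) \leq \rho^{*} = \frac{|B|}{\rnk_{\mathcal{M}}(B)}, \qquad \text{i.e. } |A|\cdot \rnk_{\mathcal{M}}(B) \leq |B|\cdot \rnk_{\mathcal{M}}(A). \tag{$\star$}
\]

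Next, using Proposition~\ref{prop:rank-contraction}, I rewrite
\[
\rho_{\mathcal{M}/A}(B \setminus A) \;=\; \frac{|B \setminus A|}{\rnk_{\mathcal{M}/A}(B \setminus A)} \;=\; \frac{|B| - |A|}{\rnk_{\mathcal{M}}(B) - \rnk_{\mathcal{M}}(A)},
\]
where the numerator is strictly positive because $A \subsetneq B$. I would then split on the denominator. If $\rnk_{\mathcal{M}}(A) = \rnk_{\mathcal{M}}(B)$, the denominator is zero while the numerator is positive, so by the convention in Definition~\ref{def:density} we get $\rho_{\mathcal{M}/A}(B \setminus A) = +\infty \geq \rho^{*}$, as required.

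Finally, if $\rnk_{\mathcal{M}}(A) < \rnk_{\mathcal{M}}(B)$, the ``mediant'' argument applies: inequality $(\star)$ can be rewritten as
\[
(|B| - |A|)\cdot \rnk_{\mathcal{M}}(B) \;\geq\; |B|\cdot\bigl(\rnk_{\mathcal{M}}(B) - \rnk_{\mathcal{M}}(A)\bigr),
\]
and since $\rnk_{\mathcal{M}}(B) - \rnk_{\mathcal{M}}(A) > 0$, dividing both sides yields
\[
\rho_{\mathcal{M}/A}(B \setminus A) \;=\; \frac{|B| - |A|}{\rnk_{\mathcal{M}}(B) - \rnk_{\mathcal{M}}(A)} \;\geq\; \frac{|B|}{\rnk_{\mathcal{M}}(B)} \;=\; \rho^{*},
\]
which is the desired inequality. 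There is no real obstacle here; the only subtlety is bookkeeping the corner cases (empty $A$ and collapsing ranks), which is why I would lay them out explicitly before performing the cross-multiplication.
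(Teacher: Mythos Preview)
Your proof is correct and is essentially the same argument as the paper's. The paper phrases the key step as ``$\rho_{\mathcal{M}}(B)$ is a weighted average of $\rho_{\mathcal{M}}(A)$ and $\rho_{\mathcal{M}/A}(B\setminus A)$'' (with weights $\rnk_{\mathcal{M}}(A)$ and $\rnk_{\mathcal{M}/A}(B\setminus A)$), from which $\rho_{\mathcal{M}}(A)\le\rho^*$ forces $\rho_{\mathcal{M}/A}(B\setminus A)\ge\rho^*$; your mediant/cross-multiplication computation is exactly this weighted-average inequality unwound, with the corner cases spelled out more explicitly.
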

	
	\begin{proof}
	    If $\rnk_{\mathcal{M}/A}(B \backslash A) = 0$ then $\rho_{\mathcal{M}/A}(B \backslash A) = +\infty$ and we are done; otherwise, by  Proposition~\ref{prop:rank-contraction}:
	    \[\rho_{\mathcal{M}}(B) = \frac{\rnk_{\mathcal{M}}(A) \cdot \rho_{\mathcal{M}}(A) + \rnk_{\mathcal{M}/A}(B \backslash A) \cdot \rho_{\mathcal{M}/A}(B \backslash A)}{\rnk_{\mathcal{M}}(A) + \rnk_{\mathcal{M}/A}(B \backslash A)},\]
	    hence $\rho_{\mathcal{M}}(B)$ is a weighted average of $\rho_{\mathcal{M}}(A)$ and $\rho_{\mathcal{M}/A}(B \backslash A)$. As $\rho_{\mathcal{M}}(A) \leq \rho^*$ (by definition of $\rho^*$), it implies that $\rho_{\mathcal{M}/A}(B \backslash A) \geq \rho^*$.
	\end{proof}
	
	The following proposition states that the densest sets are closed under union, hence we have the unicity of the maximum cardinality densest subset. 
	
	\begin{proposition} \label{prop:unique-densest}
        Let $\mathcal{M} = (V, \mathcal{I})$ be a matroid. Let $\rho^* = \max_{U \subseteq V}\rho_{\mathcal{M}}(U) < +\infty$. Then given any two sets $W_1$, $W_2$ of density $\rho^*$, $\rho_{\mathcal{M}}(W_1 \cup W_2) = \rho^*$.
    \end{proposition}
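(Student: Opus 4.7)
The plan is to exploit the submodularity of the rank function together with the cardinality identity on unions and intersections. Concretely, from $\rho_{\mathcal{M}}(W_1) = \rho_{\mathcal{M}}(W_2) = \rho^*$ I would write $|W_i| = \rho^* \cdot \rnk_{\mathcal{M}}(W_i)$, and then use
\[
|W_1 \cup W_2| + |W_1 \cap W_2| \;=\; |W_1| + |W_2| \;=\; \rho^* \bigl(\rnk_{\mathcal{M}}(W_1) + \rnk_{\mathcal{M}}(W_2)\bigr).
\]
Combining this with the submodular inequality $\rnk_{\mathcal{M}}(W_1) + \rnk_{\mathcal{M}}(W_2) \geq \rnk_{\mathcal{M}}(W_1 \cup W_2) + \rnk_{\mathcal{M}}(W_1 \cap W_2)$ gives
\[
|W_1 \cup W_2| + |W_1 \cap W_2| \;\geq\; \rho^* \bigl(\rnk_{\mathcal{M}}(W_1 \cup W_2) + \rnk_{\mathcal{M}}(W_1 \cap W_2)\bigr).
\]

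Next, I would invoke the maximality of $\rho^*$: since $\rho_{\mathcal{M}}(U) \leq \rho^*$ for every $U \subseteq V$, I get $|W_1 \cup W_2| \leq \rho^* \cdot \rnk_{\mathcal{M}}(W_1 \cup W_2)$ and $|W_1 \cap W_2| \leq \rho^* \cdot \rnk_{\mathcal{M}}(W_1 \cap W_2)$. Summing these two upper bounds and comparing to the lower bound above forces both inequalities to be equalities. In particular $|W_1 \cup W_2| = \rho^* \cdot \rnk_{\mathcal{M}}(W_1 \cup W_2)$, i.e.\ $\rho_{\mathcal{M}}(W_1 \cup W_2) = \rho^*$, which is what we want.

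The only subtlety is the degenerate case $W_1 \cap W_2 = \emptyset$, where by convention the density is $0$; but then $|W_1 \cap W_2| = 0 = \rho^* \cdot \rnk_{\mathcal{M}}(W_1 \cap W_2)$, so the same inequality chain goes through unchanged. Note that $W_1 \cap W_2$ cannot be a non-empty set of rank $0$ because we assume there are no loops, so its density is either $0$ (empty case) or finite and at most $\rho^*$. Apart from this bookkeeping, the argument is essentially a one-line consequence of submodularity, and I expect no real obstacle beyond making the corner case explicit.
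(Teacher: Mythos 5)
Your proof is correct, and it takes a genuinely different route from the paper's. The paper reduces to the case $W_1 \not\subseteq W_2$ and argues via its contraction lemmas: Proposition~\ref{prop:largerdensity} gives $\rho_{\mathcal{M}/(W_1 \cap W_2)}(W_1 \setminus W_2) \geq \rho^*$, then Proposition~\ref{prop:subset_increased_rho} pushes the contraction up to $W_2$, and a weighted-average step over $\rho_{\mathcal{M}}(W_2) = \rho^*$ and $\rho_{\mathcal{M}/W_2}(W_1 \setminus W_2) \geq \rho^*$ yields $\rho_{\mathcal{M}}(W_1 \cup W_2) \geq \rho^*$. You instead go straight to the submodularity of the rank function: combining $|W_1 \cup W_2| + |W_1 \cap W_2| = |W_1| + |W_2| = \rho^*(\rnk(W_1) + \rnk(W_2))$ with $\rnk(W_1) + \rnk(W_2) \geq \rnk(W_1 \cup W_2) + \rnk(W_1 \cap W_2)$ and the two upper bounds from maximality of $\rho^*$ forces equality throughout. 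Both arguments are sound and rest on the same underlying fact (submodularity of $\rnk$), but yours is more elementary and self-contained, does not need the preliminary contraction propositions, and has a small bonus: it simultaneously shows that $W_1 \cap W_2$ is also a densest set whenever it is non-empty (this is the classical statement that the maximizers of $|U|/\rnk(U)$ form a lattice). The paper's route is natural in context because Propositions~\ref{prop:subset_increased_rho} and~\ref{prop:largerdensity} are already established and reused repeatedly elsewhere, so the authors get this corollary with machinery they need anyway. Your handling of the $W_1 \cap W_2 = \emptyset$ corner case is fine; note also that $\rho^* \geq 1$ here (since there are no loops, every non-empty singleton has density $1$), so $W_1, W_2$ are automatically non-empty.
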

    
    \begin{proof}
        If $W_1 \subseteq W_2$, then the proposition is trivially true. So assume that $W_1 \backslash W_2 \neq \emptyset$, and we can
        observe that
        \[\rho^* \leq  \rho_{\mathcal{M}/(W_1 \cap W_2)}(W_1 \backslash (W_1 \cap W_2)) \leq \rho_{\mathcal{M}/W_2}(W_1 \backslash (W_1 \cap W_2)),\]
        where the first inequality uses Proposition~\ref{prop:largerdensity} and the second uses Proposition~\ref{prop:subset_increased_rho}. 
        As a result, by the facts that $\rho_{\mathcal{M}}(W_2) = \rho^*$ and that  $\rho_{\mathcal{M}/W_2}(W_1 \backslash (W_1 \cap W_2)) \geq \rho^*$, we obtain $\rho_{\mathcal{M}}(W_1 \cup W_2) \geq \rho^*$. Hence we have $\rho_{\mathcal{M}}(W_1 \cup W_2) = \rho^*$.
    \end{proof}

    Here is a first proposition about density-based decompositions, stating that the densities decrease (as we could observe in the example of Figure~\ref{fig:laminar-matroid-decomposition}).
	\begin{proposition} \label{prop:non-increasing}
	    For all $1 \leq j \leq k - 1$, $\rho_{\mathcal{M}'/(\bigcup_{i = 1}^{j-1}U_i)}(U_j) \geq \rho_{\mathcal{M}'/(\bigcup_{i = 1}^{j}U_i)}(U_{j+1})$. Moreover, if we have $\rho_{\mathcal{M}'/(\bigcup_{i = 1}^{j-1}U_i)}(U_j) > 0$, then $\rho_{\mathcal{M}'/(\bigcup_{i = 1}^{j-1}U_i)}(U_j) > \rho_{\mathcal{M}'/(\bigcup_{i = 1}^{j}U_i)}(U_{j+1})$.
	\end{proposition}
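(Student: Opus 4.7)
Let me write $\mathcal{N}_j := \mathcal{M}'/(\bigcup_{i=1}^{j-1}U_i)$, and set $\rho_j := \rho_{\mathcal{N}_j}(U_j)$, so that the claim is $\rho_j \geq \rho_{j+1}$, with strict inequality whenever $\rho_j > 0$. The plan is to argue by contradiction with the greedy choice of $U_j$: I will compute the density of $U_j \cup U_{j+1}$ in $\mathcal{N}_j$ and show that, if $\rho_{j+1} > \rho_j$, then this combined set would be denser than $U_j$ in $\mathcal{N}_j$, while if $\rho_{j+1} = \rho_j > 0$ it would be a densest set in $\mathcal{N}_j$ strictly larger than $U_j$; either case contradicts the choice of $U_j$ by Algorithm~\ref{algo:decomposition}.

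The central computation is the following. By Proposition~\ref{prop:rank-contraction} applied in $\mathcal{N}_j$ to $U_j \subseteq U_j \cup U_{j+1}$, we have $\rnk_{\mathcal{N}_j}(U_j \cup U_{j+1}) = \rnk_{\mathcal{N}_j}(U_j) + \rnk_{\mathcal{N}_j/U_j}(U_{j+1})$. Since $\mathcal{N}_j/U_j = \mathcal{N}_{j+1}$ and $U_j \cap U_{j+1} = \emptyset$, this yields
\[
\rho_{\mathcal{N}_j}(U_j \cup U_{j+1}) = \frac{|U_j| + |U_{j+1}|}{\rnk_{\mathcal{N}_j}(U_j) + \rnk_{\mathcal{N}_{j+1}}(U_{j+1})} = \frac{\rnk_{\mathcal{N}_j}(U_j)\, \rho_j + \rnk_{\mathcal{N}_{j+1}}(U_{j+1})\, \rho_{j+1}}{\rnk_{\mathcal{N}_j}(U_j) + \rnk_{\mathcal{N}_{j+1}}(U_{j+1})},
\]
which is a convex combination of $\rho_j$ and $\rho_{j+1}$ (the denominators are strictly positive whenever the corresponding $U_i$ is nonempty, and the construction excludes infinite densities). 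Since $U_j$ achieves the maximum density in $\mathcal{N}_j$, we must have $\rho_{\mathcal{N}_j}(U_j \cup U_{j+1}) \leq \rho_j$; the weighted-average form then forces $\rho_{j+1} \leq \rho_j$.

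For the strict part, assume $\rho_j > 0$, i.e.\ $U_j \neq \emptyset$. If $U_{j+1} = \emptyset$ then $\rho_{j+1} = 0 < \rho_j$ and we are done. Otherwise suppose $\rho_{j+1} = \rho_j$; the displayed identity gives $\rho_{\mathcal{N}_j}(U_j \cup U_{j+1}) = \rho_j$, so $U_j \cup U_{j+1}$ is also a densest subset of $\mathcal{N}_j$. But $|U_j \cup U_{j+1}| > |U_j|$, contradicting the choice of $U_j$ as the densest subset of largest cardinality (whose uniqueness is guaranteed by Proposition~\ref{prop:unique-densest}).

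The main obstacle is simply to keep track of edge cases: avoiding division by zero when some rank vanishes, and confirming that the greedy tie-breaking on cardinality is exactly what powers the strict inequality. The case $\rho_j = 0$ (hence $U_j = U_{j+1} = \emptyset$) is immediate and is deliberately left outside the strict statement.
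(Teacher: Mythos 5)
Your proposal is correct and follows essentially the same route as the paper's own proof: write $\rho_{\mathcal{N}_j}(U_j \cup U_{j+1})$ as a weighted average of $\rho_j$ and $\rho_{j+1}$ via Proposition~\ref{prop:rank-contraction}, then contradict the greedy choice of $U_j$ (maximal density for the weak inequality, maximal cardinality among densest sets for the strict one). Your explicit handling of the empty-set and zero-rank edge cases is slightly more careful than the paper's, but the argument is the same.
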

	
	\begin{proof}
	    We proceed by contradiction. Suppose that $\rho_{\mathcal{M}'/(\bigcup_{i = 1}^{j-1}U_i)}(U_j) < \rho_{\mathcal{M}'/(\bigcup_{i = 1}^{j}U_i)}(U_{j+1})$. Then it implies that $\rho_{\mathcal{M}'/(\bigcup_{i = 1}^{j-1}U_i)}(U_j \cup U_{j+1}) > \rho_{\mathcal{M}'/(\bigcup_{i = 1}^{j-1}U_i)}(U_j)$. Specifically, denoting $k_j = \rnk_{\mathcal{M}'/(\bigcup_{i = 1}^{j-1}U_i)}(U_j)$ and $k_{j+1} = \rnk_{\mathcal{M}'/(\bigcup_{i = 1}^{j}U_i)}(U_{j+1})$, we have $\rho_{\mathcal{M}'/(\bigcup_{i = 1}^{j-1}U_i)}(U_j \cup U_{j+1}) = \rho_{\mathcal{M}'/(\bigcup_{i = 1}^{j-1}U_i)}(U_j) \cdot \frac{k_j}{k_j + k_{j+1}} + \rho_{\mathcal{M}'/(\bigcup_{i = 1}^{j}U_i)}(U_{j+1}) \cdot \frac{k_{j+1}}{k_j + k_{j+1}} > 
	    \rho_{\mathcal{M}'/(\bigcup_{i = 1}^{j-1}U_i)}(U_j)$, 
	    contradicting the hypothesis that $U_j$ was the densest set in $\mathcal{M}'/(\bigcup_{i = 1}^{j-1}U_i)$.
	    
	    For the second part of the proposition, suppose that $\rho_{\mathcal{M}'/(\bigcup_{i = 1}^{j-1}U_i)}(U_j) > 0$ and $\rho_{\mathcal{M}'/(\bigcup_{i = 1}^{j-1}U_i)}(U_j) = \rho_{\mathcal{M}'/(\bigcup_{i = 1}^{j}U_i)}(U_{j+1})$. Then it implies that $\rho_{\mathcal{M}'/(\bigcup_{i = 1}^{j-1}U_i)}(U_j \cup U_{j+1}) = \rho_{\mathcal{M}'/(\bigcup_{i = 1}^{j-1}U_i)}(U_j)$, contradicting the supposition that $U_j$ was the maximum cardinality densest set. 
	\end{proof}
    
    The following proposition comes straightforwardly from the definition of the densities:
	\begin{proposition} \label{prop:sum-density}
	    We always have $\sum_{j = 1}^k \rnk_{\mathcal{M}'/(\bigcup_{i = 1}^{j-1}U_i)}(U_j) \cdot \rho_{\mathcal{M}'/(\bigcup_{i = 1}^{j-1}U_i)}(U_j) = |V'|$.
	\end{proposition}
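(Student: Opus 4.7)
}
The plan is to unwind the definition of density term by term and then use that the $U_j$'s partition $V'$. Concretely, for each $j$ with $U_j \neq \emptyset$, Definition~\ref{def:density} applied in the contracted matroid $\mathcal{M}'/(\bigcup_{i=1}^{j-1}U_i)$ gives
\[
\rho_{\mathcal{M}'/(\bigcup_{i=1}^{j-1}U_i)}(U_j) \;=\; \frac{|U_j|}{\rnk_{\mathcal{M}'/(\bigcup_{i=1}^{j-1}U_i)}(U_j)},
\]
which, since we already noted that no $U_j$ has infinite density (no element is a loop, so $\rnk(U_j)>0$ whenever $U_j\neq\emptyset$), rearranges to
\[
\rnk_{\mathcal{M}'/(\bigcup_{i=1}^{j-1}U_i)}(U_j)\cdot \rho_{\mathcal{M}'/(\bigcup_{i=1}^{j-1}U_i)}(U_j)\;=\;|U_j|.
\]
If $U_j=\emptyset$, both factors are $0$ by the stated convention and the identity $0=|U_j|$ still holds, so the equality is valid term by term for every $1\le j\le k$.

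It then remains to sum over $j$ and observe that $\sum_{j=1}^k|U_j|=|V'|$. This follows because the sets $U_1,\dots,U_k$ are pairwise disjoint by construction (at step $j$, $U_j$ is chosen inside the ground set $V'\setminus \bigcup_{i=1}^{j-1}U_i$ of the contracted matroid) and their union equals $V'$, as recorded in the paragraph immediately before the proposition. Putting the two facts together yields the claimed identity, with no further ingredient needed.

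There is essentially no obstacle here: the statement is a bookkeeping identity that follows directly from the definition of density and from the fact that the density-based decomposition partitions $V'$. The only care required is to handle the potentially empty $U_j$'s at the tail of the decomposition, which is dispatched by the convention $\rho_{\mathcal{M}}(\emptyset)=0$ coupled with $\rnk(\emptyset)=0$.
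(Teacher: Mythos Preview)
Your proposal is correct and matches the paper's approach: the paper provides no proof for this proposition, stating only that it ``comes straightforwardly from the definition of the densities,'' which is precisely the term-by-term unwinding you carry out. Your handling of the empty $U_j$'s via the convention $\rho(\emptyset)=0$ is the only extra care needed, and it is sound.
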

	
	Now we define the \emph{associated density} of a given element $v \in V$ with respect to the decomposition of $V'$.
	
	\begin{definition}
	    \label{def:tilderho}
	    Let $U_1,\dots, U_k$ to the density-based decomposition of $V'$. Then, given an element $v \in V$, its \emph{associated density} with respect to the decomposition of $V'$ is defined as
	    \[
	        \tilde{\rho}_{\mathcal{M}}(v) = \left\{
            \begin{array}{ll}
                \rho_{\mathcal{M}'/(\bigcup_{i = 1}^{j-1}U_i)}(U_j)\text{ for $j = \min\{j \in \llbracket 1, k \rrbracket : v \in \spn_{\mathcal{M}}(\bigcup_{i = 1}^{j}U_i)\}$}& \text{if } v \in \spn_{\mathcal{M}}(V') \\
                0 & \text{otherwise}
            \end{array}
        \right.
	    \]
	\end{definition}
	We emphasize that the associated density $\tilde{\rho}_{\mathcal{M}}$ is defined for \emph{all} elements in $V$, not just the elements of $V'$ (this is why we use the subscript $\mathcal{M}$ instead of $\mathcal{M}'$). We also emphasize that 
	here the associated density is dependent on $V'$, even though that dependence is not displayed in our notation:
	we will just write $\tilde{\rho}_{\mathcal{M}}$, instead of the more cumbersome $\tilde{\rho}_{\mathcal{M},V'}$. 
	For elements $v \in V'$, note that if $v \in U_j$ then we have necessarily $\tilde{\rho}_{\mathcal{M}}(v) = \rho_{\mathcal{M}'/(\bigcup_{i = 1}^{j-1}U_i)}(U_j)$; in fact, if $v$ is spanned by $\bigcup_{i = 1}^{j_0}U_i$ for some $j_0 < j$, then we could have increased the density of $U_{j_0}$ by adding $v$ into $U_{j_0}$, contradicting the assumption that $U_{j_0}$ was the densest subset when it was selected.
	
	We now explain how such a decomposition behaves when an element is added to or   deleted from the set $V'$. These two following lemmas are crucial in the existence proof of DCSes. Their statements are quite natural (for instance, stating that adding an element does not cause a diminution of the density associated with any other element, and cannot increase the density of that new element by more than one), however their proofs are rather technical and in fact proving these lemmas is the most difficult step to show the existence of DCSes. From now on, we will use the exponents $^{\textrm{\normalfont old}}$ and $^{\textrm{\normalfont new}}$ to denote the states before and after the insertion/deletion operation.
	
	The proofs of the following lemmas can be found in Appendix~\ref{app:proof}.
	
	\begin{lemma}\label{lem:add-increase}
	    Suppose a new element $u^{\textrm{\normalfont new}} \in V \backslash V'$ is added to $V'$. Then we have the following properties:
	    \begin{enumerate}[(i)]
	        \item For all $j \in \llbracket 1, k \rrbracket$, for all $v \in U_j^{\textrm{\normalfont old}}$, $\tilde{\rho}_{\mathcal{M}}^{\textrm{\normalfont new}}(v) \geq \rho_{\mathcal{M}'^{\textrm{\normalfont old}}/(\bigcup_{i = 1}^{j-1}U_i^{\textrm{\normalfont old}})}(U_j^{\textrm{\normalfont old}})$.
	        \item For all $v \in V$, $\tilde{\rho}_{\mathcal{M}}^{\textrm{\normalfont new}}(v) \geq \tilde{\rho}_{\mathcal{M}}^{\textrm{\normalfont old}}(v)$.
	        \item We have the inequality $\tilde{\rho}_{\mathcal{M}}^{\textrm{\normalfont old}}(u^{\textrm{\normalfont new}}) \leq \tilde{\rho}_{\mathcal{M}}^{\textrm{\normalfont new}}(u^{\textrm{\normalfont new}}) \leq \tilde{\rho}_{\mathcal{M}}^{\textrm{\normalfont old}}(u^{\textrm{\normalfont new}}) + 1$.
	        \item For all $v \in V'$ such that $\tilde{\rho}_{\mathcal{M}}^{\textrm{\normalfont old}}(v) < \tilde{\rho}_{\mathcal{M}}^{\textrm{\normalfont old}}(u^{\textrm{\normalfont new}})$ or $\tilde{\rho}_{\mathcal{M}}^{\textrm{\normalfont old}}(v) > \tilde{\rho}_{\mathcal{M}}^{\textrm{\normalfont old}}(u^{\textrm{\normalfont new}}) + 1$, we have the equality $\tilde{\rho}_{\mathcal{M}}^{\textrm{\normalfont old}}(v) = \tilde{\rho}_{\mathcal{M}}^{\textrm{\normalfont new}}(v)$.
	    \end{enumerate}
	\end{lemma}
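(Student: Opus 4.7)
Let $\rho^* := \tilde\rho_\mathcal{M}^{\textrm{\normalfont old}}(u^{\textrm{\normalfont new}})$ and, writing $\sigma_j^{\textrm{\normalfont old}} := \bigcup_{i \leq j} U_i^{\textrm{\normalfont old}}$, let $j_0$ be the smallest index with $u^{\textrm{\normalfont new}} \in \spn_{\mathcal{M}}(\sigma_{j_0}^{\textrm{\normalfont old}})$; generically $\rho^* = \rho_{j_0}^{\textrm{\normalfont old}} := \rho_{\mathcal{M}'^{\textrm{\normalfont old}}/\sigma_{j_0-1}^{\textrm{\normalfont old}}}(U_{j_0}^{\textrm{\normalfont old}})$, and I set $k_{j_0} := \rnk_{\mathcal{M}'^{\textrm{\normalfont old}}/\sigma_{j_0-1}^{\textrm{\normalfont old}}}(U_{j_0}^{\textrm{\normalfont old}})$. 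The plan is to use the parametric description of the decomposition recalled in the introduction: $\sigma_j^{\textrm{\normalfont old}}$ is the largest set minimizing $f_\rho^{\textrm{\normalfont old}}(T) := \rho\cdot\rnk_{\mathcal{M}'^{\textrm{\normalfont old}}}(T) - |T|$ for $\rho$ in the interval attached to level $j$, and analogously for the new decomposition with $f_\rho^{\textrm{\normalfont new}}$. A direct computation yields the central identity
\[
f_\rho^{\textrm{\normalfont new}}(T'\cup\{u^{\textrm{\normalfont new}}\}) - f_\rho^{\textrm{\normalfont old}}(T') = \begin{cases} -1 & u^{\textrm{\normalfont new}}\in \spn_{\mathcal{M}}(T'),\\ \rho - 1 & \text{otherwise}, \end{cases}
\]
with $f_\rho^{\textrm{\normalfont new}}(T) = f_\rho^{\textrm{\normalfont old}}(T)$ when $u^{\textrm{\normalfont new}}\notin T$. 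Thus (for $\rho \geq 1$) adjoining $u^{\textrm{\normalfont new}}$ is strictly profitable only when the companion set already spans it.

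The structural core is to prove that the new decomposition agrees with the old one outside the band of densities $[\rho^*,\rho^*+1]$. For $\rho<\rho^*$, the old maximum minimizer $S^*$ is some $\sigma_j^{\textrm{\normalfont old}}$ with $j\geq j_0$ and hence spans $u^{\textrm{\normalfont new}}$; the identity above then makes $S^*\cup\{u^{\textrm{\normalfont new}}\}$ the new maximum minimizer (saving exactly $1$), preserving all low-density levels. For $\rho>\rho^*+1$ I intend to establish the \emph{cost bound}: for every $T'\subseteq V'^{\textrm{\normalfont old}}$ that spans $u^{\textrm{\normalfont new}}$, one has $f_\rho^{\textrm{\normalfont old}}(T')\geq f_\rho^{\textrm{\normalfont old}}(S^*) + k_{j_0}(\rho-\rho^*)$. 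The idea is that $T'\cup S^*\supsetneq S^*$ also spans $u^{\textrm{\normalfont new}}$, hence extends $S^*$ through the entire level $U_{j_0}^{\textrm{\normalfont old}}$ and so $f_\rho^{\textrm{\normalfont old}}(T'\cup S^*) \geq f_\rho^{\textrm{\normalfont old}}(S^*) + k_{j_0}(\rho-\rho^*)$ (using $|U_{j_0}^{\textrm{\normalfont old}}| = \rho^* k_{j_0}$); submodularity of $f_\rho^{\textrm{\normalfont old}}$ combined with the maximality of $S^*$ among old minimizers then transfers the same bound to $T'$ itself. For $\rho>\rho^*+1$ one has $k_{j_0}(\rho-\rho^*)>1$, outcompeting the $-1$ saving from including $u^{\textrm{\normalfont new}}$, so the new maximum minimizer coincides with $S^*$ and the high-density levels remain intact.

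Given the structural claim, the four conclusions follow. Part (iv) is immediate, as is part (ii) for elements whose old associated density lies outside the band. Part (iii) is a direct computation: $U_{j_0}^{\textrm{\normalfont old}}\cup\{u^{\textrm{\normalfont new}}\}$ has density $\rho^* + 1/k_{j_0}$ in $\mathcal{M}'^{\textrm{\normalfont new}}/\sigma_{j_0-1}^{\textrm{\normalfont old}}$ (lower bound), while a two-case analysis on any candidate $T = T'\cup\{u^{\textrm{\normalfont new}}\}$ in that contracted matroid, using $|T'|/\rnk(T') \leq \rho^*$, yields the upper bound $\rho^*+1$. Part (i) reduces to the case of $v \in U_{j_0}^{\textrm{\normalfont old}}$ (old density exactly $\rho^*$): the structural claim guarantees that all new levels inside the band have density at least $\rho^*$, so no such $v$ can be demoted. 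Part (ii) extends (i) to $v\in V\setminus V'^{\textrm{\normalfont old}}$ via the observation that each $\sigma_j^{\textrm{\normalfont old}}$ lies inside some $\sigma_{j'}^{\textrm{\normalfont new}}$ of density at least $\rho_j^{\textrm{\normalfont old}}$.

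The main obstacle is the cost bound driving the high-density half of the structural claim, which requires carefully combining submodularity of $f_\rho^{\textrm{\normalfont old}}$ with the maximality of $S^*$ as a minimizer to descend from the inequality on $T'\cup S^*$ to one on $T'$ alone. A secondary challenge is the reshuffling \emph{inside} the band $[\rho^*,\rho^*+1]$, where $U_{j_0}^{\textrm{\normalfont old}}$ may split or merge with $u^{\textrm{\normalfont new}}$; this is handled by Proposition~\ref{prop:unique-densest} together with the monotonicity statements already established (Propositions~\ref{prop:largerdensity} and~\ref{prop:subset_increased_rho}).
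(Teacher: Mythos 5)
Your parametric route---reading each prefix $\sigma_j^{\textrm{old}}$ of the decomposition as the largest minimizer of $f_\rho$ and tracking how these minimizers move when $u^{\textrm{new}}$ is adjoined---is genuinely different from the paper's proof, which argues directly on the greedy decomposition via the intersections $U_a^{\textrm{old}}\cap U_b^{\textrm{new}}$ and a strong induction over levels. The approach is viable, but two of your steps fail as written. The main one is the cost bound $f_\rho^{\textrm{old}}(T')\ge f_\rho^{\textrm{old}}(S^*)+k_{j_0}(\rho-\rho^*)$: it is false, and its justification (``$T'\cup S^*$ extends $S^*$ through the entire level $U_{j_0}^{\textrm{old}}$'') does not hold, since a set can span $u^{\textrm{new}}$ without containing or spanning all of $U_{j_0}^{\textrm{old}}$. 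Concretely, let $U_1^{\textrm{old}}$ consist of $12$ parallel copies of one element ($\rho_1=12$) and $U_2^{\textrm{old}}$ of $3$ copies each of two further elements ($k_{j_0}=2$, $\rho^*=3$), with $u^{\textrm{new}}$ parallel to a single $w\in U_2^{\textrm{old}}$. For $\rho=10$ the largest old minimizer is $S^*=U_1^{\textrm{old}}$, and $T'=S^*\cup\{w\}$ spans $u^{\textrm{new}}$, yet $f_{10}(T')-f_{10}(S^*)=\rho-1=9<14=k_{j_0}(\rho-\rho^*)$. The inequality you actually need ($f_\rho(T')-f_\rho(S^*)>1$ for $\rho>\rho^*+1$) is true, but proving it requires a different argument: decompose $W=T'\setminus S^*$ level by level, bound $|W\cap U_i^{\textrm{old}}|$ by $\rho_i$ times its incremental rank (Propositions~\ref{prop:rank-contraction} and~\ref{prop:subset_increased_rho}), and observe that spanning $u^{\textrm{new}}$ forces at least one unit of rank at levels $i\ge j_0$, where $\rho_i\le\rho^*$; this yields the corrected bound $f_\rho(T')\ge f_\rho(S^*)+(\rho-\rho^*)$, without the factor $k_{j_0}$.

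A second gap is your reduction of part (i) to elements of $U_{j_0}^{\textrm{old}}$. Several old levels may have densities strictly inside $(\rho^*,\rho^*+1)$; for an element $v$ of such a level you must show $\tilde{\rho}_{\mathcal{M}}^{\textrm{new}}(v)\ge\tilde{\rho}_{\mathcal{M}}^{\textrm{old}}(v)>\rho^*$, whereas your structural claim only controls densities outside the band and only yields ``at least $\rho^*$'' inside it. The clean repair, entirely within your framework, is to show that for \emph{every} $\rho$ the old largest minimizer of $f_\rho$ is contained in the new one (submodularity of $f_\rho^{\textrm{new}}$ together with the fact that $f_\rho^{\textrm{new}}$ and $f_\rho^{\textrm{old}}$ agree on sets avoiding $u^{\textrm{new}}$); that containment gives (i) and (ii) in one stroke. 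Finally, the equivalence between the greedy decomposition and the parametric family of largest minimizers, on which everything rests, is only asserted informally in the paper's introduction; if you take this route you must state it precisely and prove or cite it.
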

	
	\begin{lemma}\label{lem:del-decrease}
	    Suppose an old element $u^{\textrm{\normalfont old}} \in V'$ is deleted from $V'$. Then we have the following properties:
	    \begin{enumerate}[(i)]
	        \item For all $j \in \llbracket 1, k \rrbracket$, for all $v \in U_j^{\textrm{\normalfont old}}$, $\tilde{\rho}_{\mathcal{M}}^{\textrm{\normalfont new}}(v) \leq \rho_{\mathcal{M}'^{\textrm{\normalfont old}}/(\bigcup_{i = 1}^{j-1}U_i^{\textrm{\normalfont old}})}(U_j^{\textrm{\normalfont old}})$.
	        \item For all $v \in V$, $\tilde{\rho}_{\mathcal{M}}^{\textrm{\normalfont new}}(v) \leq \tilde{\rho}_{\mathcal{M}}^{\textrm{\normalfont old}}(v)$.
	        \item We have the inequality $\tilde{\rho}_{\mathcal{M}}^{\textrm{\normalfont old}}(u^{\textrm{\normalfont old}}) \geq \tilde{\rho}_{\mathcal{M}}^{\textrm{\normalfont new}}(u^{\textrm{\normalfont old}}) \geq \tilde{\rho}_{\mathcal{M}}^{\textrm{\normalfont old}}(u^{\textrm{\normalfont old}}) - 1$.
	        \item For all $v \in V'$ such that $\tilde{\rho}_{\mathcal{M}}^{\textrm{\normalfont old}}(v) > \tilde{\rho}_{\mathcal{M}}^{\textrm{\normalfont old}}(u^{\textrm{\normalfont old}})$ or $\tilde{\rho}_{\mathcal{M}}^{\textrm{\normalfont old}}(v) < \tilde{\rho}_{\mathcal{M}}^{\textrm{\normalfont old}}(u^{\textrm{\normalfont old}}) - 1$, we have the equality $\tilde{\rho}_{\mathcal{M}}^{\textrm{\normalfont old}}(v) = \tilde{\rho}_{\mathcal{M}}^{\textrm{\normalfont new}}(v)$.
	    \end{enumerate}
	\end{lemma}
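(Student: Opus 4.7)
The plan is to exploit the reversibility between insertion and deletion. Set $V'^{\textrm{\normalfont new}} = V'^{\textrm{\normalfont old}} \setminus \{u^{\textrm{\normalfont old}}\}$ and observe that reinserting $u^{\textrm{\normalfont old}}$ into $V'^{\textrm{\normalfont new}}$ recovers $V'^{\textrm{\normalfont old}}$; by the uniqueness of the density-based decomposition (Proposition~\ref{prop:unique-densest}), the old and new decompositions are consistent when viewed from either direction, so Lemma~\ref{lem:add-increase} applies with the roles swapped. Under this reversal, parts (ii) and (iii) of the desired lemma translate directly from parts (ii) and (iii) of Lemma~\ref{lem:add-increase}. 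Part (i) is then an immediate consequence of (ii), since for $v \in U_j^{\textrm{\normalfont old}}$ the remark following Definition~\ref{def:tilderho} gives $\tilde{\rho}_{\mathcal{M}}^{\textrm{\normalfont old}}(v) = \rho_{\mathcal{M}'^{\textrm{\normalfont old}}/(\bigcup_{i<j}U_i^{\textrm{\normalfont old}})}(U_j^{\textrm{\normalfont old}})$, which is precisely the right-hand side of the inequality claimed in (i).

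Part (iv) is the genuinely delicate one and is \emph{not} literally the reverse of Lemma~\ref{lem:add-increase}(iv): the add-version expresses the affected band in terms of the pre-insertion densities, so mere reversal only yields the one-sided bound $\tilde{\rho}_{\mathcal{M}}^{\textrm{\normalfont old}}(v) \geq \tilde{\rho}_{\mathcal{M}}^{\textrm{\normalfont old}}(u^{\textrm{\normalfont old}})-1$ for affected elements, rather than the full two-sided band $[\tilde{\rho}_{\mathcal{M}}^{\textrm{\normalfont old}}(u^{\textrm{\normalfont old}})-1,\,\tilde{\rho}_{\mathcal{M}}^{\textrm{\normalfont old}}(u^{\textrm{\normalfont old}})]$ that we need. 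To close this gap, I would first establish a prefix-invariance claim: letting $j_0$ be the index with $u^{\textrm{\normalfont old}} \in U_{j_0}^{\textrm{\normalfont old}}$, the equality $U_j^{\textrm{\normalfont new}} = U_j^{\textrm{\normalfont old}}$ holds for every $j < j_0$. This is proven by induction on $j$: after contracting the common prefix $\bigcup_{i<j}U_i^{\textrm{\normalfont old}}$, every subset of $V'^{\textrm{\normalfont new}}$ has the same rank, and hence the same density, in the old and new contracted matroids (because $u^{\textrm{\normalfont old}}$ lies outside this prefix), and the candidate densest subsets of the new ground set form a subset of those for the old, so the maximum-cardinality densest choice is preserved.

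Prefix invariance disposes of the $\tilde{\rho}_{\mathcal{M}}^{\textrm{\normalfont old}}(v) > \tilde{\rho}_{\mathcal{M}}^{\textrm{\normalfont old}}(u^{\textrm{\normalfont old}})$ half of (iv): by Proposition~\ref{prop:non-increasing} any such $v$ either belongs to some $U_j^{\textrm{\normalfont old}}$ with $j<j_0$ or is spanned in $\mathcal{M}$ by $\bigcup_{i<j_0}U_i^{\textrm{\normalfont old}}$, and in either case its associated density is determined entirely by the unchanged prefix. For the harder half $\tilde{\rho}_{\mathcal{M}}^{\textrm{\normalfont old}}(v) < \tilde{\rho}_{\mathcal{M}}^{\textrm{\normalfont old}}(u^{\textrm{\normalfont old}})-1$, the key intermediate step I plan to prove is that $U_{j_0}^{\textrm{\normalfont new}}$ has density in the band $[\tilde{\rho}_{\mathcal{M}}^{\textrm{\normalfont old}}(u^{\textrm{\normalfont old}})-1,\,\tilde{\rho}_{\mathcal{M}}^{\textrm{\normalfont old}}(u^{\textrm{\normalfont old}})]$: the lower bound comes from exhibiting $U_{j_0}^{\textrm{\normalfont old}} \setminus \{u^{\textrm{\normalfont old}}\}$ as a candidate of density at least $\tilde{\rho}_{\mathcal{M}}^{\textrm{\normalfont old}}(u^{\textrm{\normalfont old}})-1$ (a short case split on whether the rank of $U_{j_0}^{\textrm{\normalfont old}}$ drops when $u^{\textrm{\normalfont old}}$ is removed), while the upper bound follows from the maximality of $U_{j_0}^{\textrm{\normalfont old}}$ in the old contracted matroid. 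The main obstacle is then to iterate this one-step bound across the subsequent $U_j^{\textrm{\normalfont new}}$'s, proving that their densities remain inside this band until the decomposition re-synchronizes with the old one, at which point $v$ lies in a set identical on both sides and the equality $\tilde{\rho}_{\mathcal{M}}^{\textrm{\normalfont new}}(v) = \tilde{\rho}_{\mathcal{M}}^{\textrm{\normalfont old}}(v)$ follows.
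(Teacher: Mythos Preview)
Your approach is essentially the same as the paper's: reduce (i)--(iii) to Lemma~\ref{lem:add-increase} by viewing the deletion as the reverse of inserting $u^{\textrm{old}}$ into $V'^{\textrm{new}}$, and handle (iv) separately with arguments parallel to those in the proof of Lemma~\ref{lem:add-increase}(iv). The paper's own proof of (iv) is in fact just the one-line remark ``the bounds are a bit different \dots\ using ideas similar to that from the previous proof one can easily show the desired result,'' so you have supplied considerably more detail than the paper does.

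One point worth tightening: you have actually already finished (iv) before your final paragraph, and the ``harder half'' discussion is redundant. You correctly observe that reversal of Lemma~\ref{lem:add-increase}(iv) yields, for any affected $v \in V'^{\textrm{new}}$, the band $\tilde{\rho}^{\textrm{new}}_{\mathcal{M}}(u^{\textrm{old}}) \le \tilde{\rho}^{\textrm{new}}_{\mathcal{M}}(v) \le \tilde{\rho}^{\textrm{new}}_{\mathcal{M}}(u^{\textrm{old}})+1$. Combining the left inequality with part~(ii) and part~(iii) gives $\tilde{\rho}^{\textrm{old}}_{\mathcal{M}}(v) \ge \tilde{\rho}^{\textrm{new}}_{\mathcal{M}}(v) \ge \tilde{\rho}^{\textrm{new}}_{\mathcal{M}}(u^{\textrm{old}}) \ge \tilde{\rho}^{\textrm{old}}_{\mathcal{M}}(u^{\textrm{old}})-1$, which is exactly the contrapositive of the $\tilde{\rho}^{\textrm{old}}_{\mathcal{M}}(v) < \tilde{\rho}^{\textrm{old}}_{\mathcal{M}}(u^{\textrm{old}})-1$ case of (iv). So that half is already done by reversal; only the $\tilde{\rho}^{\textrm{old}}_{\mathcal{M}}(v) > \tilde{\rho}^{\textrm{old}}_{\mathcal{M}}(u^{\textrm{old}})$ half needs the additional prefix-invariance argument (which you give correctly). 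The re-synchronization plan in your last paragraph is therefore unnecessary.
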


\section{Density-Constrained Subsets for Matroid Intersection}

    Consider two matroids $\mathcal{M}_1 = (V, \mathcal{I}_1)$ and $\mathcal{M}_2 = (V, \mathcal{I}_2)$, both of rank $k$ (if the matroids have different ranks, we can truncate the rank of the matroid of larger rank without changing the solution of the matroid intersection problem). We recall the definition of a \emph{Density-Constrained Subset} (DCS).
    
    \dcsdefinition*
    
    Here is a simple bound on the size of a DCS.
    \begin{proposition} \label{prop:size-bound}
        For any set $V' \subseteq V$ satisfying Property (i) of Definition~\ref{def:dcs}, $|V'| \leq \beta \cdot \mu(V)$, where $\mu(V)$ denotes the maximum cardinality common independent subset in $V$.
    \end{proposition}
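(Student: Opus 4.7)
The plan is to first reduce the claim to bounding $|V'|$ by $\mu(V')$, the size of the maximum common independent set inside $V'$, since $V' \subseteq V$ gives $\mu(V') \leq \mu(V)$ for free. To this end I would invoke Edmonds' matroid intersection theorem (Theorem~\ref{thm:matroid-intersection}) on the restricted matroids $\mathcal{M}_1' = \mathcal{M}_1|V'$ and $\mathcal{M}_2' = \mathcal{M}_2|V'$, producing a partition $V' = A \sqcup B$ satisfying $\mu(V') = \rnk_{\mathcal{M}_1'}(A) + \rnk_{\mathcal{M}_2'}(B)$. The goal then becomes to prove $|A| \leq \beta \cdot \rnk_{\mathcal{M}_1'}(A)$ and, symmetrically, $|B| \leq \beta \cdot \rnk_{\mathcal{M}_2'}(B)$.

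For the first inequality, let $U_1^{(1)}$ denote the first block of the density-based decomposition of $V'$ in $\mathcal{M}_1'$. By the very definition of the decomposition, $U_1^{(1)}$ has the maximum density among all subsets of $V'$ in $\mathcal{M}_1'$, so $\rho_{\mathcal{M}_1'}(A) \leq \rho_{\mathcal{M}_1'}(U_1^{(1)})$, which rewrites as $|A| \leq \rho_{\mathcal{M}_1'}(U_1^{(1)}) \cdot \rnk_{\mathcal{M}_1'}(A)$ (the case $A = \emptyset$ being trivial). It therefore suffices to show $\rho_{\mathcal{M}_1'}(U_1^{(1)}) \leq \beta$. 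Assuming $V' \neq \emptyset$ (otherwise the proposition is vacuous), pick any $v \in U_1^{(1)}$. The discussion following Definition~\ref{def:tilderho} gives $\tilde{\rho}_{\mathcal{M}_1}(v) = \rho_{\mathcal{M}_1'}(U_1^{(1)})$, and combining Property (i) with the trivial bound $\tilde{\rho}_{\mathcal{M}_2}(v) \geq 0$ yields $\rho_{\mathcal{M}_1'}(U_1^{(1)}) \leq \beta$. The symmetric argument in $\mathcal{M}_2'$ gives $|B| \leq \beta \cdot \rnk_{\mathcal{M}_2'}(B)$.

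Summing the two bounds then gives $|V'| = |A| + |B| \leq \beta \cdot (\rnk_{\mathcal{M}_1'}(A) + \rnk_{\mathcal{M}_2'}(B)) = \beta \cdot \mu(V') \leq \beta \cdot \mu(V)$, which is the claim. The main conceptual move, and the one place where Property (i) is actually invoked, is the translation from the per-element density constraint on $V'$ to a global bound on the density of an arbitrary subset of $V'$ via the maximality built into the density-based decomposition; no extra technical machinery (such as the modification Lemmas~\ref{lem:add-increase} and~\ref{lem:del-decrease}) is needed for this size bound.
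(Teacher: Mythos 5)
Your proof is correct and follows essentially the same route as the paper: both arguments invoke Edmonds' min--max theorem (the paper takes the minimizing partition of $V$ and intersects it with $V'$, arguing by contradiction, while you take the minimizing partition of $V'$ directly), and both rest on the same key observation that Property~(i), via the maximality of the first block of the density-based decomposition, caps the density of every subset of $V'$ at $\beta$ in each matroid. Your write-up merely makes explicit the bridging step that the paper leaves implicit, and no gap remains.
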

    
    \begin{proof} 
        We proceed by contradiction. By Theorem~\ref{thm:matroid-intersection}, we know that there exists a set $S \subseteq V$ such that $\rnk_{\mathcal{M}_1}(S) + \rnk_{\mathcal{M}_2}(V \backslash S) = \mu(V)$. If $|V'| > \beta \cdot \mu(V)$, then it means that either $|V' \cap S| > \beta \cdot \rnk_{\mathcal{M}_1}(S)$ or that $|V' \cap (V \backslash S)| > \beta \cdot \rnk_{\mathcal{M}_2}(V \backslash S)$. In both cases, we have a densest subset, either in $\mathcal{M}'_1$ or $\mathcal{M}'_2$, that has a density larger than $\beta$, contradicting Property (i) of Definition~\ref{def:dcs}.
    \end{proof}
    
    We show the existence of $(\beta, \beta^-)$-DCSes by construction, using a local search algorithm inspired by the one used in~\cite{AssadiB19}. In our proof we introduce a new potential function and we use Lemmas~\ref{lem:add-increase} and~\ref{lem:del-decrease} to generalize their procedure; details of the proof can be found in Appendix~\ref{app:proof}.
    
    \begin{theorem} \label{thm:construction}
        For any two matroids $\mathcal{M}_1 = (V, \mathcal{I}_1)$ and $\mathcal{M}_2 = (V, \mathcal{I}_2)$ of rank $k$, and for any integer parameters $\beta \geq \beta^- + 7$, a $(\beta,\beta^-)$-DCS can be computed using at most $2\cdot \beta^2 \cdot \mu(V)$ local improvement steps.
    \end{theorem}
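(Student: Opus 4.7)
The plan is a local-search construction in the style of Assadi--Bernstein~\cite{AssadiB19} but with a new potential tailored to the matroid setting, and crucially invoking the two modification lemmas. Start from any initial set (say $V' = \emptyset$); as long as property~(i) of Definition~\ref{def:dcs} is violated at some $u \in V'$, remove $u$ from $V'$; otherwise, as long as property~(ii) is violated at some $u \in V \setminus V'$, insert $u$ into $V'$. When the loop halts, $V'$ is a $(\beta,\beta^-)$-DCS by construction, so the entire content of the theorem is the bound on the number of iterations.

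The central ingredient is a potential of the form
$$\Phi(V') \;=\; \alpha \cdot |V'| \;-\; \sum_{v \in V'} \bigl(\tilde\rho_{\mathcal{M}_1}(v) + \tilde\rho_{\mathcal{M}_2}(v)\bigr),$$
with $\alpha$ chosen strictly between $\beta^- + 6$ and $\beta$; the hypothesis $\beta \geq \beta^- + 7$ ensures this interval is non-empty. Proposition~\ref{prop:size-bound} forces $|V'| \leq \beta \cdot \mu(V)$ along the execution, and property~(i) gives $\sum_{v\in V'}(\tilde\rho_1 + \tilde\rho_2)(v) \leq \beta \cdot |V'|$, so $|\Phi|$ stays in an interval of length $O(\beta^2 \mu(V))$. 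It then suffices to show that every local step raises $\Phi$ by at least a positive absolute constant (say $1/2$), which gives the claimed $2\beta^2 \mu(V)$ iteration bound.

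For a deletion step at $u^{\textrm{\normalfont old}}$, Lemma~\ref{lem:del-decrease}(ii) guarantees that the associated density of every remaining element only decreases, so the sum $\sum_v(\tilde\rho_1 + \tilde\rho_2)(v)$ drops by at least $\tilde\rho_{\mathcal{M}_1}^{\textrm{\normalfont old}}(u^{\textrm{\normalfont old}}) + \tilde\rho_{\mathcal{M}_2}^{\textrm{\normalfont old}}(u^{\textrm{\normalfont old}}) > \beta$, while $|V'|$ decreases by one; this yields $\Delta\Phi > \beta - \alpha > 0$. For an insertion step at $u^{\textrm{\normalfont new}}$, Lemma~\ref{lem:add-increase}(iii) applied to each matroid bounds the density contribution of $u^{\textrm{\normalfont new}}$ itself by $\tilde\rho_{\mathcal{M}_1}^{\textrm{\normalfont old}}(u^{\textrm{\normalfont new}}) + \tilde\rho_{\mathcal{M}_2}^{\textrm{\normalfont old}}(u^{\textrm{\normalfont new}}) + 2 < \beta^- + 2$, and Lemma~\ref{lem:add-increase}(iv) confines the set of elements whose density can change to those whose old density lies in a window of width one around $\tilde\rho^{\textrm{\normalfont old}}(u^{\textrm{\normalfont new}})$ in either matroid. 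If this can be upgraded to an absolute bound of at most $\beta^- + 6$ on the total increase of the sum, then $\Delta\Phi \geq \alpha - (\beta^- + 6) > 0$.

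The main obstacle is precisely this last step of the insertion analysis: in the EDCS setting only the two endpoints' degrees rise by one, but here inserting a single element into $V'$ can reshape several $U_j$'s of the density-based decomposition simultaneously. To control the net effect on $\sum_v \tilde\rho(v)$, I would combine the localization of Lemma~\ref{lem:add-increase}(ii)--(iv) with the sum identity $\sum_j \rnk_{\mathcal{M}'/(\bigcup_{i<j}U_i)}(U_j) \cdot \rho_j = |V'|$ of Proposition~\ref{prop:sum-density} and the strict monotonicity of Proposition~\ref{prop:non-increasing}, charging the density rise to the bounded number of rank-one units that can be pulled across the affected density window by $u^{\textrm{\normalfont new}}$. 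Precisely this accounting is where the slack $\beta - \beta^- \geq 7$ (rather than the $\geq 1$ sufficient in EDCS) appears to be needed. Once the charging argument is settled, the potential calculation closes and yields the bound of at most $2\beta^2 \mu(V)$ local improvement steps.
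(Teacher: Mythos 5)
Your overall architecture (local search, termination via a potential of the form $\alpha\cdot|V'|-\sum_{v\in V'}(\tilde{\rho}_{\mathcal{M}_1}(v)+\tilde{\rho}_{\mathcal{M}_2}(v))$, controlled by the two modification lemmas) is exactly the paper's; indeed, since every $v\in U_{l,j}$ has $\tilde{\rho}_{\mathcal{M}_l}(v)=\rho_{l,j}$, your sum equals the rank-weighted quadratic form $\sum_j \rnk(U_{l,j})\cdot\rho_{l,j}^2$ used in the paper. However, your choice of coefficient $\alpha\in(\beta^-+6,\beta)$ is unworkable, and the step you flag as the ``main obstacle'' is left open with the wrong target bound. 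When $u^{\textrm{\normalfont new}}$ is inserted, the total increase of $\sum_{v}(\tilde{\rho}_{\mathcal{M}_1}+\tilde{\rho}_{\mathcal{M}_2})(v)$ is not bounded by $\beta^-+6$ but only by roughly $2(\tilde{\rho}^{\textrm{\normalfont old}}_{\mathcal{M}_1}(u^{\textrm{\normalfont new}})+\tilde{\rho}^{\textrm{\normalfont old}}_{\mathcal{M}_2}(u^{\textrm{\normalfont new}}))+6<2\beta^-+6$: in each matroid the rank-weighted density vector increases componentwise with total mass exactly $1$ (Proposition~\ref{prop:sum-density}), concentrated on components of value about $\tilde{\rho}^{\textrm{\normalfont old}}_{\mathcal{M}_l}(u^{\textrm{\normalfont new}})$ by Lemma~\ref{lem:add-increase}~(iv), so the quadratic form rises by about $2\tilde{\rho}^{\textrm{\normalfont old}}_{\mathcal{M}_l}(u^{\textrm{\normalfont new}})$ per matroid. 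This factor of $2$ is already present in the graph case: adding an edge $(u,v)$ increases $\sum_w\deg(w)^2$ by about $2(\deg(u)+\deg(v))$, because every edge already incident to $u$ or $v$ also sees its endpoint-degree sum grow. With any $\alpha<\beta$ and, say, $\beta^-$ large compared to $\beta-\beta^-$, an insertion can therefore strictly decrease your potential, so termination does not follow.

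The fix is to take $\alpha=2\beta-7$. Then the insertion step gains at least $(2\beta-7)-(2\beta^-+6)\geq 1$ precisely when $\beta\geq\beta^-+7$, which is where the hypothesis enters. The deletion step must then be analyzed with the matching two-sided refinement: your bound ``the sum drops by at least $\tilde{\rho}_{\mathcal{M}_1}^{\textrm{\normalfont old}}(u^{\textrm{\normalfont old}})+\tilde{\rho}_{\mathcal{M}_2}^{\textrm{\normalfont old}}(u^{\textrm{\normalfont old}})>\beta$'' is true but too weak once $\alpha=2\beta-7>\beta$; using Lemma~\ref{lem:del-decrease}~(ii) and (iv) the drop is in fact at least $2(\tilde{\rho}_{\mathcal{M}_1}^{\textrm{\normalfont old}}(u^{\textrm{\normalfont old}})+\tilde{\rho}_{\mathcal{M}_2}^{\textrm{\normalfont old}}(u^{\textrm{\normalfont old}}))-6>2\beta-6$, giving a gain of at least $1$ there as well. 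Finally, to invoke Proposition~\ref{prop:size-bound} for the upper bound $\Phi\leq(2\beta-7)\cdot\beta\cdot\mu(V)$ you also need to give deletions priority over insertions, so that Property~(i) of Definition~\ref{def:dcs} holds at the end of each insertion-plus-deletions batch.
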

    
    The main interest of DCS relies in that they always contain a relatively good approximation of the maximum cardinality matroid intersection. 
    \ratiotheorem*
	
	\begin{proof}
		Let $V'$ be a $(\beta, \beta^-)$-DCS, and let $C_1$ and $C_2$ be sets such that $C_1 \cup C_2 = V'$, $C_1 \cap C_2 = \emptyset$ and minimizing the sum $\rnk_{\mathcal{M}'_1}(C_1) + \rnk_{\mathcal{M}'_2}(C_2) = \rnk_{\mathcal{M}_1}(C_1) + \rnk_{\mathcal{M}_2}(C_2)$; by Theorem~\ref{thm:matroid-intersection} we know that $\rnk_{\mathcal{M}'_1}(C_1) + \rnk_{\mathcal{M}'_2}(C_2)= \mu(V')$, the size of the maximum common independent set in $V'$.
		
		Now consider the optimal common independent set $O$ in $V$. Our objective is to bound both $|O \backslash S|$ and $|S|$ for some well-chosen subset $S \subseteq O$ to get an upper bound of $|O|$. We will build that auxiliary set $S$ as follows, starting with $S = \emptyset$. If there exists an element $o_1 \in O$ such that $o_1 \notin \spn_{\mathcal{M}_1}(C_1) \cup  \spn_{\mathcal{M}_2}(C_2)$, then we add $o_1$ into $S$ and we now consider the contracted matroids $\mathcal{M}_1/S$ and $\mathcal{M}_2/S$. We keep the same sets $C_1$ and $C_2$ and we try again to find an element $o_2 \in O \backslash S$ such that $o_2 \notin \spn_{\mathcal{M}_1/S}(C_1) \cup \spn_{\mathcal{M}_2/S}(C_2)$, and we add $o_2$ to $S$. We repeat this operation until it is no longer possible to add into $S$ any other element of $O$. 
		The idea behind this greedy procedure to build $S$ is that, if we instead defined $S$ naively as the set of elements in $O$ that are not in $\spn_{\mathcal{M}_1}(C_1) \cup  \spn_{\mathcal{M}_2}(C_2)$ (which would be a simpler way to get a set $S$ satisfying inequality~(\ref{eq:os-bound}) below), then this may yield a much bigger set $S$ for which we could not get a proper bound, whereas here the greedy procedure gives us a tool to bound $|S|$ as it will allow us to prove the crucial inequality~(\ref{eq:sum-r}) later.
		
		By the above greedy  procedure, $O \backslash S$ is a common independent subset in $\mathcal{M}_1/S$ and $\mathcal{M}_2/S$ restricted to $V' \cup O \backslash S$, and $\spn_{\mathcal{M}_1/S}(C_1) \cup \spn_{\mathcal{M}_2/S}(C_2) \supseteq V' \cup O \backslash S$. 
		We now observe that 
		\begin{align*}
		|O \backslash S| & \leq  \min_{U \subseteq V' \cup O \backslash S} (\rnk_{\mathcal{M}_1/S}(U) + \rnk_{\mathcal{M}_2/S}((V' \cup O \backslash S) \backslash U)) \\
		&\leq \rnk_{\mathcal{M}_1/S}(\spn_{\mathcal{M}_1/S}(C_1)) + \rnk_{\mathcal{M}_2/S}((V' \cup O \backslash S) \backslash (\spn_{\mathcal{M}_1/S}(C_1)))\\
		& \leq \rnk_{\mathcal{M}_1/S}(\spn_{\mathcal{M}_1/S}(C_1)) + \rnk_{\mathcal{M}_2/S}(\spn_{\mathcal{M}_2/S}(C_2)) \\
		& =\rnk_{\mathcal{M}_1/S}(C_1) + \rnk_{\mathcal{M}_2/S}(C_2) \\
		& \leq \rnk_{\mathcal{M}_1}(C_1) + \rnk_{\mathcal{M}_2}(C_2) \\
		&  = \mu(V'),   \label{eq:os-bound}
		\end{align*}
		where in the first inequality we use Theorem~\ref{thm:matroid-intersection}, 
		in the second inequality we consider $U = \spn_{\mathcal{M}_1/S}(C_1)$, 
		in the third inequality we use that $(V' \cup O \backslash S) \backslash (\spn_{\mathcal{M}_1/S}(C_1)) \subseteq \spn_{\mathcal{M}_2/S}(C_2)$, and in the last inequality we use that the rank function in a contracted matroid is always smaller than the rank function in the original matroid. Thereby we obtain
		\begin{equation} \label{eq:os-bound}
		    |O \backslash S| \leq \mu(V').
		\end{equation}
		
		Hence we need to upper-bound the value of $|S|$. Some carefully chosen subsets $R_{l, i}$ and $Q_{l, i}$ will allow us to get that upper-bound, and their construction is displayed in the following lemmas --- it is in the proof of these lemmas that the DCS structure is fully exploited. Observing that $\beta^- \cdot |S|$ is bounded by the sum of the $\tilde{\rho}_{\mathcal{M}_l}(o_i)$ (as for each $o_i \in S$, we have $\beta^- \leq \tilde{\rho}_{\mathcal{M}_1}(o_i) + \tilde{\rho}_{\mathcal{M}_2}(o_i)$, because of Property~(ii) of the DCS), we will build disjoint subsets $R_{l,i}$ of $V'$ (Lemma~\ref{lem:defineR}) to bound each $\tilde{\rho}_{\mathcal{M}_l}(o_j)$ with $|R_{l,j}|$ (in particular, see Lemma~\ref{lem:defineR}~(iv)). We will then use an auxiliary partition $Q_{l,j}$ of the union of the $R_{l,j}$s (Lemma~\ref{lem:defineQ}) to bound the total size of the $R_{l,j}$s, using the properties of the DCS and the properties of those sets. By wrapping-up everything in the end this will allow us to get a bound on the size of $S$, similarly as~\cite{AssadiB19}.
		
		We recall that the sets $U_{l,i}$ refer to the density-based decomposition of $V'$ in the matroid $\mathcal{M}_l$.
		
		\begin{lemma}
		\label{lem:defineR}
		    For $l \in \{1,2\}$, we can build sets $R_{l, 1}, \dots, R_{l, |S|}$ satisfying the following properties:
		    \begin{enumerate}[(i)]
		        \item the $R_{l, i}$ are disjoint;
		        \item for all $j \in \llbracket 1, |S| \rrbracket$ we have $R_{l, j} \subseteq V' \backslash C_l$;
		        \item for all $j \in \llbracket 1, |S| \rrbracket$, for all $v \in R_{l, j}$, $|R_{l, j}| = \lfloor\tilde{\rho}_{\mathcal{M}_l}(v)\rfloor - 1$;
		        \item for all $j \in \llbracket 1, |S| \rrbracket$, $|R_{l, j}| \geq \lfloor \tilde{\rho}_{\mathcal{M}_l}(o_j) \rfloor - 1$.
		    \end{enumerate}
		\end{lemma}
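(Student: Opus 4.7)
The construction I have in mind processes the elements of $S$ in order of decreasing associated density $\tilde{\rho}_{\mathcal{M}_l}(o_j)$, so that the densest elements claim their share of $V' \setminus C_l$ first. For each $o_j$, let $i_j$ be the smallest index with $o_j \in \spn_{\mathcal{M}_l}(\bigcup_{i \leq i_j} U_{l,i})$; by Definition~\ref{def:tilderho} this makes $\tilde{\rho}_{\mathcal{M}_l}(o_j)$ equal to the density of $U_{l, i_j}$ in the appropriate contracted matroid. I would then choose $R_{l,j}$ to be an arbitrary size-$(\lfloor \tilde{\rho}_{\mathcal{M}_l}(o_j) \rfloor - 1)$ subset of $U_{l, i_j} \cap C_{l'}$ (where $l'$ denotes the opposite index, so $C_{l'} = V' \setminus C_l$), taking only elements that have not already been placed in a previously processed $R_{l, j'}$.

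Properties (i), (ii) and (iii) then come essentially for free. Disjointness (i) is enforced directly by the greedy selection rule. Property (ii) holds because $R_{l,j} \subseteq C_{l'} = V' \setminus C_l$. For property (iii), every $v \in U_{l, i_j}$ satisfies $\tilde{\rho}_{\mathcal{M}_l}(v) = \tilde{\rho}_{\mathcal{M}_l}(o_j)$ (as noted immediately after Definition~\ref{def:tilderho}), so $\lfloor \tilde{\rho}_{\mathcal{M}_l}(v) \rfloor - 1 = \lfloor \tilde{\rho}_{\mathcal{M}_l}(o_j) \rfloor - 1 = |R_{l,j}|$.

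The main obstacle is establishing (iv), namely that enough unclaimed elements always remain. By Proposition~\ref{prop:non-increasing}, elements of the various $U_{l, i'}$ have strictly different associated densities, so combined with the processing order the only $R_{l, j'}$ that can overlap $U_{l, i_j}$ are those coming from indices $j'$ with $i_{j'} = i_j$. The whole claim therefore reduces to a per-level counting inequality: letting $n_i = |\{ j : i_j = i \}|$ and $\rho_{l,i}$ denote the density of $U_{l,i}$ in $\mathcal{M}'_l / \bigcup_{i' < i} U_{l,i'}$, one needs
\[
    |U_{l,i} \cap C_{l'}| \;\geq\; n_i \cdot \bigl(\lfloor \rho_{l,i} \rfloor - 1\bigr).
\]

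The plan for this counting inequality is to combine two ideas. First, the greedy construction of $S$ guarantees that each $o_j$ with $i_j = i$ was not in $\spn_{\mathcal{M}_l/S_{<j}}(C_l)$; therefore, once one passes to the matroid $\mathcal{M}'_l/\bigcup_{i'<i} U_{l,i'}$, the $n_i$ elements $\{o_j : i_j = i\}$ collectively force a rank-$n_i$ contribution inside $U_{l,i}$ that cannot be absorbed by $U_{l,i}\cap C_l$, so by Proposition~\ref{prop:rank-contraction} the rank of $U_{l,i} \cap C_{l'}$ in the suitable contraction is at least $n_i$. Second, by Proposition~\ref{prop:largerdensity} applied inside the uniformly dense set $U_{l,i}$, the density of $U_{l,i} \cap C_{l'}$ in that contraction is at least $\rho_{l,i}$, upgrading the rank lower bound to $|U_{l,i} \cap C_{l'}| \geq \rho_{l,i} \cdot n_i \geq n_i (\lfloor \rho_{l,i} \rfloor - 1)$. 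The most delicate point I anticipate is carrying out the rank lower bound cleanly through the two interleaved contractions (by $\bigcup_{i'<i}U_{l,i'}$ and by $S_{<j}$); I expect this to rely on a local application of Theorem~\ref{thm:matroid-intersection} together with the minimality of $\rnk_{\mathcal{M}'_1}(C_1) + \rnk_{\mathcal{M}'_2}(C_2)$, which controls how much of each $U_{l,i}$ the set $C_l$ can absorb.
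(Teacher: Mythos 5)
Your construction insists that the set $R_{l,j}$ assigned to $o_j$ be drawn from $U_{l,i_j} \cap C_{l'}$, i.e.\ from the very density level at which $o_j$ becomes spanned, and this forces you to prove the per-level inequality $|U_{l,i}\cap C_{l'}| \geq n_i\cdot(\lfloor\rho_{l,i}\rfloor - 1)$. That inequality is false in general, and this is a genuine gap. The greedy construction of $S$ guarantees that the elements $o_j$ with $i_j = i$ are jointly independent of $C_l$, but the ``direction'' witnessing this independence need not lie in $U_{l,i}$: it can lie entirely in earlier, denser levels. Concretely, take $\mathcal{M}_l$ linear with $U_{l,1}$ consisting of five parallel copies each of two independent vectors $a_1, a_2$ (density $5$) and $U_{l,2}$ of four parallel copies each of $b_1, b_2$ (density $4$ after contraction), and suppose $C_l$ contains all copies of $a_1, b_1, b_2$ while $C_{l'}$ contains the five copies of $a_2$. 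An element $o = a_2 + b_1$ of $O$ is at level $2$ and is not spanned by $C_l$, so it can enter $S$ with $i_j = 2$ and $n_2 = 1$; yet $U_{l,2}\cap C_{l'} = \emptyset$, so no set of size $\lfloor\rho_{l,2}\rfloor - 1 = 3$ can be extracted from level $2$. Your rank argument breaks exactly here: $o$'s independence from $C_l$ is carried by its $a_2$-component, which sits in $\spn_{\mathcal{M}_l}(U_{l,1})$, so it forces no rank inside $U_{l,2}\cap C_{l'}$.

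The paper's proof avoids this by making the guarantee cumulative rather than per-level: it extracts $r_{l,i}$ sets of size $\lfloor\rho_{l,i}\rfloor - 1$ from each $U_{l,i}\backslash C_l$, where $r_{l,i}$ is a difference of ranks, proves the cumulative majorization $\sum_{i\leq j} r_{l,i} \geq |S\cap\spn_{\mathcal{M}_l}(\bigcup_{i\leq j}U_{l,i})|$ using precisely the identity $\rnk_{\mathcal{M}_l}(\hat{C}_l\cup\hat{S}) = \rnk_{\mathcal{M}_l}(\hat{C}_l) + |\hat{S}|$ you had in mind, and then assigns sets to the elements of $S$ in decreasing order of density, allowing $o_j$ to receive a set drawn from a level $i < i_j$. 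Such a set is only larger (Proposition~\ref{prop:non-increasing}), so property (iv) still holds, and property (iii) holds because every element of a set drawn from level $i$ has associated density $\rho_{l,i}$. In the example above, $o$ would be served by a set of four copies of $a_2$ taken from level $1$. With your per-level assignment the lemma cannot be proved; the cumulative reformulation is not a technicality but the essential content of the argument.
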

		
		\begin{proof}
	        Fix an $l$. 	
		    We divide $S$ into two groups: those that are spanned by $\bigcup_{i=1}^{k}U_{l,i}$ and those that are not. Precisely, 
		    $S_U = S \cap  \spn_{\mathcal{M}_l}(\bigcup_{i=1}^{k}U_{l,i})$ 
		    and $S_{\overline{U}} = S\backslash S_{U}$. 
		    
		    We will extract from $U_{l,1},\dots, U_{l,k}$ subsets $R_{l,x}$ for each $o_x \in S_U$. 
		    For the other elements $o_y \in S_{\overline{U}}$, we 
		    create $R_{l,y}=\emptyset$ and associate $o_y$ with $R_{l,y}$. It is easy to verify that Properties (ii)-(iv) hold in the latter case (for Property (iv), recall that by Definition~\ref{def:tilderho}, 
		    $\tilde{\rho}_{\mathcal{M}}(o_y)=0$).
		    We next explain how to construct $R_{l,x}$ for $o_x \in S_U$. 
		
	        For $j = 1$ to $k$, we split a subset of $U_{l,j} \backslash C_l$ into \[r_{l, j} = \max\left(0,  \rnk_{\mathcal{M}'_l/(\bigcup_{i = 1}^{j-1}U_{l,i})}(U_{l,j}) - \rnk_{\mathcal{M}'_l/(\bigcup_{i = 1}^{j-1}U_{l,i} \cap C_l)}(U_{l,j} \cap C_l)\right)\] sets of size $\lfloor\rho_{\mathcal{M}'_l/(\bigcup_{i = 1}^{j-1}U_{l,i})}(U_{l,j})\rfloor - 1$. It is always possible as we have, when $r_{l,j} > 0$,
		    \begin{align*}
		        \left\lfloor\frac{|U_{l,j} \backslash C_l|}{r_{l,j}}\right\rfloor
		        &=\left\lfloor\frac{|U_{l,j} \backslash C_l|}{\rnk_{\mathcal{M}'_l/(\bigcup_{i = 1}^{j-1}U_{l,i})}(U_{l,j}) - \rnk_{\mathcal{M}'_l/(\bigcup_{i = 1}^{j-1}U_{l,i} \cap C_l)}(U_{l,j} \cap C_l)}\right\rfloor\\
		        &\geq \left\lfloor\frac{|U_{l,j} \backslash C_l|}{\rnk_{\mathcal{M}'_l/(\bigcup_{i = 1}^{j-1}U_{l,i})}(U_{l,j}) - \rnk_{\mathcal{M}'_l/(\bigcup_{i = 1}^{j-1}U_{l,i})}(U_{l,j} \cap C_l)}\right\rfloor &\\
		        &= \lfloor\rho_{\mathcal{M}'_l/(\bigcup_{i = 1}^{j-1}U_{l,i} \cup (C_l \cap U_{l, j}))}(U_{l,j} \backslash C_l)\rfloor & \text{by Proposition~\ref{prop:rank-contraction}} &\\
		        & \geq \lfloor\rho_{\mathcal{M}'_l/(\bigcup_{i = 1}^{j-1}U_{l,i})}(U_{l,j})\rfloor. & \text{by Proposition~\ref{prop:largerdensity}}
		    \end{align*}
		    where in the first inequality we used that $\rnk_{\mathcal{M}'_l/(\bigcup_{i = 1}^{j-1}U_{l,i})}(U_{l,j} \cap C_l) \leq \rnk_{\mathcal{M}'_l/(\bigcup_{i = 1}^{j-1}U_{l,i} \cap C_l)}(U_{l,j} \cap C_l)$.
		    
		    Then the $R_{l,x}$s, for $o_x \in S_U$ are decided by a greedy procedure. Let $x_1, \dots, x_{|S_U|}$ be the indices of the elements of $S_U$, ordered so that $\tilde{\rho}_{\mathcal{M}_l}(o_{x_1}) \geq \cdots \geq \tilde{\rho}_{\mathcal{M}_l}(o_{x_{|S_u|}})$. The first $r_{l,1}$ subsets drawn from $U_{l,1}\backslash C_l$ are assigned to be $R_{l,x_1}, \dots R_{l,x_{r_{l,1}}}$; the following 
		    $r_{l,2}$ subsets drawn from $U_{l,2}\backslash C_2$ are assigned to be $R_{l,x_{r_{l,1}+1}}, \dots, R_{l, x_{r_{l,1}+r_{l,2}}}$, and so on. 
		    
		    Notice that by this procedure, properties (ii) and (iii) hold easily for $R_{l,x}$, $o_x \in S_U$. To prove property (iv), we will prove the following inequality for all $j$: 
		    
		    \begin{equation} \label{eq:sum-r}
		        \sum_{i = 1}^j r_{l,i} \geq \left|S \cap \spn_{\mathcal{M}_l}\left(\bigcup_{i = 1}^j U_{l,i}\right)\right|.
		    \end{equation}
		    
		    To see why inequality~(\ref{eq:sum-r}) implies (iv), 
		    for $j = 1$ to $k$, let us define the set $S_j$ of elements with ``density level'' $j$, \emph{i.e.}, $S_j = S \cap (\spn_{\mathcal{M}_l}(\bigcup_{i = 1}^j U_{l,i}) \backslash \spn_{\mathcal{M}_l}(\bigcup_{i = 1}^{j-1} U_{l,i}))$. If $o_{x_t} \in S_j$, by Definition~\ref{def:tilderho}, we need to associate $o_{x_t}$ with a set $R_{l,x_t}$ drawn 
		    from one of $U_{l,1}, \dots, U_{l,j}$, as such a set $R_{l,x_t}$ will have a size larger than or equal to $\lfloor\rho_{\mathcal{M}'_l/(\bigcup_{i = 1}^{j-1}U_{l,i})}(U_{l,j})\rfloor - 1 = \lfloor \tilde{\rho}_{\mathcal{M}_l}(o_{x_{t}}) \rfloor - 1$. The majorization in~(\ref{eq:sum-r}) shows that our greedy procedure will guarantee that a large enough set is assigned to $o_{x_t}$, as the inequality~(\ref{eq:sum-r}) implies that $t \leq \sum_{i = 1}^j r_{l,i}$, hence Property~(iv) would follow.
		    
		    To prove~(\ref{eq:sum-r}), we begin by observing that our greedy procedure in constructing $S$ ensures that 
		    \[\rnk_{\mathcal{M}_l}(C_l \cup S) = \rnk_{\mathcal{M}_l}(C_l) + |S|,\]
		    implying that no circuit in $\mathcal{M}_l$ involves a non-empty subset of $S$ and a non-empty subset of a base in $C_l$. Therefore, given any $\hat{C}_l \subseteq C_l$ and 
		    $\hat{S} \subseteq S$, 
		    
		    \[\rnk_{\mathcal{M}_l}(\hat{C}_l \cup \hat{S}) = \rnk_{\mathcal{M}_l}(\hat{C}_l) + |\hat{S}|.\]

		    With this observation, we can derive
		    \begin{align*}
		        \rnk_{\mathcal{M}_l}\left(\bigcup_{i = 1}^j U_{l,i}\right) &= \rnk_{\mathcal{M}_l}\left(\spn_{\mathcal{M}_l}\left(\bigcup_{i = 1}^j U_{l,i}\right)\right)\\
		        &\geq \rnk_{\mathcal{M}_l}\left((C_l \cup S) \cap \spn_{\mathcal{M}_l}\left(\bigcup_{i = 1}^j U_{l,i}\right)\right)\\
		        &= \rnk_{\mathcal{M}_l}\left(C_l \cap \spn_{\mathcal{M}_l}\left(\bigcup_{i = 1}^j U_{l,i}\right)\right) + \left|S \cap \spn_{\mathcal{M}_l}\left(\bigcup_{i = 1}^j U_{l,i}\right)\right|\\
		        &\geq \rnk_{\mathcal{M}_l}\left(C_l \cap \bigcup_{i = 1}^j U_{l,i}\right) + \left|S \cap \spn_{\mathcal{M}_l}\left(\bigcup_{i = 1}^j U_{l,i}\right)\right|,
		    \end{align*}
		    where the last inequality is actually an equality, as we have $C_l \cap \bigcup_{i = 1}^j U_{l,i} = C_l \cap \spn_{\mathcal{M}_l}\left(\bigcup_{i = 1}^j U_{l,i}\right)$ here.\footnote{However, for Lemmas~\ref{lem:communication-dcs} and~\ref{lem:underfull-intersection} in the next two sections, this will be in fact an inequality, as in the proof of those two lemmas, $C_l$ may contain elements not in $V'$.}

		    We now finish the proof of inequality~(\ref{eq:sum-r}) by observing that
		    \begin{multline*}
		        \rnk_{\mathcal{M}_l}\left(\bigcup_{i = 1}^j U_{l,i}\right)- \rnk_{\mathcal{M}_l}\left(C_l \cap \bigcup_{i = 1}^j U_{l,i}\right)\\ 
		        = \sum_{i = 1}^{j} \left(\rnk_{\mathcal{M}'_l/(\bigcup_{i = 1}^{j-1}U_{l,i})}(U_{l,j}) - \rnk_{\mathcal{M}'_l/(\bigcup_{i = 1}^{j-1}U_{l,i} \cap C_l)}(U_{l,j} \cap C_l)\right) \leq \sum_{i=1}^{j} r_{l,i},
		    \end{multline*}
		    where the equality comes from applying Proposition~\ref{prop:rank-contraction} recursively.
		    
		    We have by now shown that Properties (ii)-(iv) hold in general. Property (i) holds trivially by our construction. Thus the proof is complete. 
		\end{proof}
		
		We denote $R_l = \bigcup_{i=1}^{|S|}R_{l,i}$ and 
		$R = \bigcup_{l \in \{1,2\}}R_l$. Note that 
		$R_l \subseteq V' \backslash C_l$ and $R   \subseteq V'$.
		
		\begin{lemma}
		  \label{lem:defineQ}
		    For $l \in \{1,2\}$, we can build sets $Q_{l, 1}, \dots, Q_{l, \rnk_{\mathcal{M}_l}(R_{3-l})}$ satisfying the following properties:
		    \begin{enumerate}[(i)]
		        \item the $Q_{l, j}$ are disjoint;
		        \item $\bigcup_{i = 1}^{\rnk_{\mathcal{M}_l}(R_{3-l})} Q_{l,i} = R_{3-l}$;
		        \item for all $v \in Q_{l,i}$, $|Q_{l,i}| \leq \tilde{\rho}_{\mathcal{M}_l}(v) + 1$.
		    \end{enumerate}
		\end{lemma}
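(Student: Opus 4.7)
The plan is to partition $R_{3-l}$ according to the density-based decomposition $U_{l,1}, \ldots, U_{l,k}$ of $V'$ in $\mathcal{M}_l$. Write $\rho_j := \rho_{\mathcal{M}'_l / (\bigcup_{i < j} U_{l,i})}(U_{l,j})$ and recall, from the remark following Definition~\ref{def:tilderho}, that every $v \in R_{3-l} \cap U_{l,j}$ satisfies $\tilde{\rho}_{\mathcal{M}_l}(v) = \rho_j$. Thus any group $Q_{l,i}$ that stays inside a single slice $R_{3-l} \cap U_{l,j}$ and has cardinality at most $\lceil \rho_j \rceil \leq \rho_j + 1$ will automatically satisfy Property~(iii).

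For each $j$, I would set $q_j := \rnk_{\mathcal{M}'_l / (\bigcup_{i < j} U_{l,i})}(R_{3-l} \cap U_{l,j})$ and split $R_{3-l} \cap U_{l,j}$ into $q_j$ subgroups whose sizes differ by at most one (with some subgroups possibly empty if $|R_{3-l} \cap U_{l,j}| < q_j$). Since $U_{l,j}$ is the densest set of $\mathcal{M}'_l / (\bigcup_{i < j} U_{l,i})$ with density $\rho_j$, every one of its subsets has density at most $\rho_j$; applied to $R_{3-l} \cap U_{l,j}$ this gives $|R_{3-l} \cap U_{l,j}| / q_j \leq \rho_j$, so the largest subgroup has size at most $\lceil \rho_j \rceil \leq \rho_j + 1$, yielding Property~(iii). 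Properties~(i) and~(ii) are immediate from the construction, since the slices $R_{3-l} \cap U_{l,j}$ themselves already partition $R_{3-l}$.

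It then remains to verify that the total number of non-empty subgroups produced is at most $\rnk_{\mathcal{M}_l}(R_{3-l})$, so that we can pad with empty sets to reach exactly $\rnk_{\mathcal{M}_l}(R_{3-l})$ groups as required. Since contracting by a larger set cannot increase the rank (as observed in the proof of Proposition~\ref{prop:subset_increased_rho}), one has $q_j \leq \rnk_{\mathcal{M}'_l / (R_{3-l} \cap \bigcup_{i < j} U_{l,i})}(R_{3-l} \cap U_{l,j})$. Applying Proposition~\ref{prop:rank-contraction} telescopically to the chain $\emptyset \subseteq R_{3-l} \cap U_{l,1} \subseteq R_{3-l} \cap (U_{l,1} \cup U_{l,2}) \subseteq \cdots \subseteq R_{3-l}$, the sum of these upper bounds collapses to $\rnk_{\mathcal{M}'_l}(R_{3-l}) = \rnk_{\mathcal{M}_l}(R_{3-l})$, as desired.

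The main obstacle I expect is reconciling the two contractions used in the argument: the per-slice size bound in the second paragraph crucially exploits the uniform density $\rho_j$ of $U_{l,j}$ in the ``strongly'' contracted matroid $\mathcal{M}'_l / (\bigcup_{i < j} U_{l,i})$, whereas the telescoping count in the third paragraph naturally lives in the ``weakly'' contracted matroid $\mathcal{M}'_l / (R_{3-l} \cap \bigcup_{i < j} U_{l,i})$. The rank monotonicity under larger contractions is exactly what bridges these two viewpoints, and the uniform-density property of the $U_{l,j}$ (not shared by arbitrary pieces of $V'$) is what makes the size bound tight enough for Property~(iii) to hold with the additive slack of $+1$.
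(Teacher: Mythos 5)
Your proof is correct and follows essentially the same route as the paper's: partition $R_{3-l}$ along the slices $U_{l,j}\cap R_{3-l}$, split each slice into rank-many near-equal groups, and use the maximality of the density of $U_{l,j}$ in $\mathcal{M}'_l/(\bigcup_{i<j}U_{l,i})$ to bound each group's size by $\rho_j+1$. The only (harmless) difference is that you take the strongly contracted rank $q_j$ as the number of groups per slice, so your count only telescopes to the upper bound $\sum_j q_j \le \rnk_{\mathcal{M}_l}(R_{3-l})$ and you pad with empty sets, whereas the paper uses the weakly contracted rank $\rnk_{\mathcal{M}'_l/(\bigcup_{i<j}U_{l,i}\cap R_{3-l})}(U_{l,j}\cap R_{3-l})$, which sums exactly to $\rnk_{\mathcal{M}_l}(R_{3-l})$, and instead invokes Proposition~\ref{prop:subset_increased_rho} to pass from the weak to the strong contraction in the size bound.
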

		
		\begin{proof}
		    Fix an $l$. For $j = 1$ to $k$, we split the set $U_{l,j} \cap R_{3-l}$ into $\rnk_{\mathcal{M}'_l/(\bigcup_{i = 1}^{j-1}U_{l,i} \cap R_{3-l})}(U_{l,j} \cap R_{3-l})$ sets of size at most
		    \begin{align*}
		        \left\lceil \frac{|U_{l,j} \cap R_{3-l}|}{\rnk_{\mathcal{M}'_l/(\bigcup_{i = 1}^{j-1}U_{l,i} \cap R_{3-l})}(U_{l,j} \cap R_{3-l})} \right\rceil &\leq \rho_{\mathcal{M}'_l/(\bigcup_{i = 1}^{j-1}U_{l,i} \cap R_{3-l})}(U_{l,j} \cap R_{3-l}) + 1\\
		        &\leq \rho_{\mathcal{M}'_l/(\bigcup_{i = 1}^{j-1}U_{l,i})}(U_{l,j} \cap R_{3-l}) + 1 &\text{by Proposition~\ref{prop:subset_increased_rho}}\\
		        &\leq \rho_{\mathcal{M}'_l/(\bigcup_{i = 1}^{j-1}U_{l,i})}(U_{l,j}) + 1. &\text{by construction of $U_{l,j}$}
		    \end{align*}
		    These are the aforementioned sets $Q_{l,x}$.
		    It is clear that those $Q_{l,x}$ will be disjoint, and that for all $v \in Q_{l,x} \subseteq U_{l,j}$, we have \[\tilde{\rho}_{\mathcal{M}_l}(v) + 1 = \rho_{\mathcal{M}'_l/(\bigcup_{i = 1}^{j-1}U_{l,i})}(U_{l,j}) + 1 \geq |Q_{l,x}|.\]
		    Observe that by induction, for any $1 \leq r \leq k$, we have $\sum_{j = 1}^r \rnk_{\mathcal{M}'_l/(\bigcup_{i = 1}^{j-1}U_{l,i} \cap R_{3-l})}(U_{l,j} \cap R_{3-l}) = \rnk_{\mathcal{M}'_l}(R_{3 - l} \cap (\bigcup_{i = 1}^{r}U_{l,i}))$ (using Proposition~\ref{prop:rank-contraction}) and hence for $r = k$ we get \[\sum_{j = 1}^k \rnk_{\mathcal{M}'_l/(\bigcup_{i = 1}^{j-1}U_{l,i} \cap R_{3-l})}(U_{l,j} \cap R_{3-l}) = \rnk_{\mathcal{M}'_l}(R_{3 - l}),\] therefore the number of sets $R_{l,x}$ built that way is exactly $\rnk_{\mathcal{M}'_l}(R_{3 - l})$, as desired.
		\end{proof}
		
		We now continue the proof of Theorem~\ref{thm:dcs-ratio}.
		For all $v \in R \subseteq V'$, we know by Property~(i) of Definition~\ref{def:dcs} that:
		\[\tilde{\rho}_{\mathcal{M}_1}(v) + \tilde{\rho}_{\mathcal{M}_2}(v) \leq \beta.\]
		Hence summing over all the elements of $R$:
		\begin{align*}
		    \beta \cdot |R| &\geq \sum_{v \in R}\tilde{\rho}_{\mathcal{M}_1}(v) + \tilde{\rho}_{\mathcal{M}_2}(v) \\
		    &= \mathop{\sum_{l \in \{1,2\}}}_{i \in \{1, \dots |S|\}}\sum_{v \in R_{l,i}}\tilde{\rho}_{\mathcal{M}_l}(v) + \mathop{\sum_{l \in \{1,2\}}}_{i \in \{1, \dots \rnk_{\mathcal{M}_l}(R_{3-l})\}}\sum_{v \in Q_{l,i}}\tilde{\rho}_{\mathcal{M}_l}(v)\\
		    &\geq \mathop{\sum_{l \in \{1,2\}}}_{i \in \{1, \dots |S|\}} |R_{l,i}| \cdot (|R_{l,i}| + 1) + \mathop{\sum_{l \in \{1,2\}}}_{i \in \{1, \dots \rnk_{\mathcal{M}_l}(R_{3-l})\}} |Q_{l, i}| \cdot (|Q_{l,i}| - 1)\\
		    &= \mathop{\sum_{l \in \{1,2\}}}_{i \in \{1, \dots |S|\}} |R_{l,i}|^2 + \mathop{\sum_{l \in \{1,2\}}}_{i \in \{1, \dots \rnk_{\mathcal{M}_l}(R_{3-l})\}} |Q_{l, i}|^2\\
		    &\geq \frac{|R|^2}{2\cdot |S|} + \frac{|R|^2}{\rnk_{\mathcal{M}_1}(R_{2}) + \rnk_{\mathcal{M}_2}(R_{1})}.
		\end{align*}
		To move from the first to the second line we use that for any element in $R_l$, that element also appears in $Q_{3 - l}$, so that we can get in the sum both terms $\tilde{\rho}_{\mathcal{M}_l}(v)$ and $\tilde{\rho}_{\mathcal{M}_{3-l}}(v)$. 
		The first inequality uses Lemmas~\ref{lem:defineR}~(iii) and~\ref{lem:defineQ}~(iii). To move from the third to the fourth line we use that $\sum_{l,i} |R_{l,i}| = \sum_{l,i} |Q_{l, i}| = |R|$.
		The last inequality comes from the minimization of the function under the constraint $\sum_{l,i} |R_{l,i}| = \sum_{l,i} |Q_{l, i}| = |R|$. 
		
		Hence we get
		\[\beta \geq  \frac{|R|}{2\cdot |S|} + \frac{|R|}{\rnk_{\mathcal{M}_1}(R_{2}) + \rnk_{\mathcal{M}_2}(R_{1})}.\] 
		
		As the elements of $S$ satisfy Property~(ii) of Definition~\ref{def:dcs}, and because of Property~(iv) in Lemma~\ref{lem:defineR}, we know that for all $o_i \in S$,
		\[\beta^- \leq \tilde{\rho}_{\mathcal{M}_1}(o_i) + \tilde{\rho}_{\mathcal{M}_2}(o_i) \leq |R_{1,i}| + |R_{2,i}| + 4,\]
		so by averaging over all the elements of $S$ we get
		\[\beta^- \leq \frac{|R|}{|S|} + 4.\]
		
		Therefore we finally obtain
		\[\left(\beta - \frac{\beta^- - 4}{2}\right) \cdot (\rnk_{\mathcal{M}_1}(R_{2}) + \rnk_{\mathcal{M}_2}(R_{1})) \geq |R|.\]
		Then, as $(\beta^- - 4) \cdot |S| \leq |R|$ and $\rnk_{\mathcal{M}_1}(R_{2}) + \rnk_{\mathcal{M}_2}(R_{1}) \leq \rnk_{\mathcal{M}_1}(C_1) + \rnk_{\mathcal{M}_2}(C_2) = \mu(V')$ (because $R_{l-2} \subseteq C_l$) we finally have $\left(\beta - \frac{\beta^- - 4}{2}\right) \cdot \mu(V') \geq (\beta^- - 4) \cdot |S|$. Now using~(\ref{eq:os-bound}), we obtain:
		\[\mu(V) = |O \backslash S| + |S| \leq \left(1 + \frac{\beta}{\beta^--4} - \frac{1}{2}\right) \cdot \mu(V') = \left(\frac{1}{2} + \frac{\beta}{\beta^--4}\right) \cdot \mu(V') \leq \left(\frac{3}{2} + \varepsilon\right) \cdot \mu(V'),\]
		as $(\beta^- - 4) \cdot (1 + \varepsilon) \geq \beta$. This concludes the proof.
	\end{proof}
	
	\begin{remark} \label{rmk:val-eps}
	    We can observe that $\beta$ and $\beta^-$ can be of order $O(1/\varepsilon)$ to satisfy the constraints of Theorem~\ref{thm:matroid-intersection}. From now on we will suppose that $\beta, \beta^-$ are $O(1/\varepsilon)$.
	\end{remark}
	
\section{Application to One-Way Communication}
    \label{sec:communication}
    Given two matroids $\mathcal{M}_1 = (V, \mathcal{I}_1)$ and $\mathcal{M}_2 = (V, \mathcal{I}_2)$, in the one-way communication model, Alice and Bob are given $V_A$ and $V_B=V \backslash V_A$ respectively, and the goal is for Alice to send a small message to Bob so that Bob can output a large intersection of matroids $\mathcal{M}_1$ and $\mathcal{M}_2$. Here we will show that if Alice communicates an appropriate Density-Constrained Subset of $V_A$, with parameters $\beta, \beta^-$ of order $O(1/\varepsilon)$, then Bob is able to get a $3/2 + \varepsilon$ approximation of the optimal intersection.
    
    \communicationtheorem*
    
    By Theorem~\ref{thm:construction} we know that a DCS in the two restricted matroids $\mathcal{M}_1|V_A$ and $\mathcal{M}_2|V_A$ always exists, and by Proposition~\ref{prop:size-bound} we know that the number of elements sent by Alice is at most $O(\mu(V)/\varepsilon)$. Hence we only need to prove the following lemma.
    
    \begin{lemma}
        \label{lem:communication-dcs}
        Let $\varepsilon > 0$, $\beta$ and $\beta^-$ be parameters such that $\beta \geq \beta^- + 7$ and $(\beta^- - 4) \cdot (1 + \varepsilon) \geq \beta$, if $V'$ is a $(\beta, \beta^-)$-DCS of the two matroids $\mathcal{M}_1|V_A$ and $\mathcal{M}_2|V_A$, then $(3/2 + \varepsilon) \cdot \mu(V' \cup V_B) \geq \mu(V)$.
    \end{lemma}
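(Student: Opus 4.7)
The plan is to recycle the architecture of the proof of Theorem~\ref{thm:dcs-ratio}, with the key modification that Edmonds' theorem is now applied to $V' \cup V_B$ rather than to $V'$ alone. Concretely, first I invoke Theorem~\ref{thm:matroid-intersection} to obtain a partition $C_1 \sqcup C_2 = V' \cup V_B$ with $\rnk_{\mathcal{M}_1}(C_1) + \rnk_{\mathcal{M}_2}(C_2) = \mu(V' \cup V_B)$. Letting $O$ be an optimum intersection in $V$, I build $S \subseteq O$ by the same greedy procedure as before: start with $S=\emptyset$ and iteratively add any $o \in O\setminus S$ that is neither in $\spn_{\mathcal{M}_1/S}(C_1)$ nor in $\spn_{\mathcal{M}_2/S}(C_2)$. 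Exactly as in the previous proof, this yields the inequality $|O \setminus S| \leq \mu(V' \cup V_B)$, so that everything reduces to bounding $|S|$ by $(1/2 + \varepsilon)\mu(V'\cup V_B)$.

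The crucial new observation is that $S \subseteq V_A \setminus V'$. Indeed, since $V_B \subseteq C_1 \cup C_2$, every element $v \in V_B$ lies in $\spn_{\mathcal{M}_1}(C_1) \cup \spn_{\mathcal{M}_2}(C_2)$ and thus cannot be chosen by the greedy step; likewise $V' \subseteq C_1 \cup C_2$. Consequently each $o_i \in S$ belongs to $V_A \setminus V'$, which is precisely the regime where Property~(ii) of Definition~\ref{def:dcs} (applied to the DCS $V'$ inside the restricted matroids $\mathcal{M}_1|V_A$ and $\mathcal{M}_2|V_A$) gives $\tilde\rho_{\mathcal{M}_1|V_A}(o_i) + \tilde\rho_{\mathcal{M}_2|V_A}(o_i) \geq \beta^-$. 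I then re-run the $R_{l,i}/Q_{l,i}$ construction of Lemmas~\ref{lem:defineR} and~\ref{lem:defineQ}, drawing the sets from the density-based decomposition of $V'$ in $\mathcal{M}_l|V_A$, and using $C_l \cap V'$ in place of $C_l$ whenever the previous proof referred to a subset of $V'$.

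The main obstacle, and the only place where a genuine adaptation is needed, is the derivation of the counting inequality (\ref{eq:sum-r}) in Lemma~\ref{lem:defineR}: there the previous proof used $C_l \cap \bigcup_{i=1}^{j} U_{l,i} = C_l \cap \spn_{\mathcal{M}_l}(\bigcup_{i=1}^{j} U_{l,i})$, which no longer need hold since $C_l$ may now contain elements of $V_B$ that are spanned by but not included in the decomposition of $V'$. However, as flagged in the footnote at that step, the relation becomes an inequality in the correct direction, namely $\rnk_{\mathcal{M}_l}(C_l \cap \spn_{\mathcal{M}_l}(\bigcup_i U_{l,i})) \geq \rnk_{\mathcal{M}_l}(C_l \cap \bigcup_i U_{l,i})$, which is precisely what the chain of inequalities leading to (\ref{eq:sum-r}) requires. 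Once this step is verified, the remainder of the argument, the averaging over $R = R_1 \cup R_2$ and the final combination with $(\beta^- - 4)|S| \leq |R|$ and $\rnk_{\mathcal{M}_1}(R_2) + \rnk_{\mathcal{M}_2}(R_1) \leq \mu(V'\cup V_B)$, is identical to the proof of Theorem~\ref{thm:dcs-ratio}, and yields $|S| \leq (1/2+\varepsilon)\mu(V'\cup V_B)$. Adding this to the earlier bound on $|O \setminus S|$ concludes that $\mu(V) = |O| \leq (3/2+\varepsilon) \mu(V' \cup V_B)$.
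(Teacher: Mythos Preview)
Your proposal is correct and follows essentially the same approach as the paper's proof, with the cosmetic difference that you apply Edmonds' theorem directly to $V' \cup V_B$, whereas the paper applies it to $V' \cup O_B$ (where $O_B = O \cap V_B$) and only invokes $\mu(V' \cup O_B) \leq \mu(V' \cup V_B)$ at the very end. Your identification of the one delicate point---that $C_l$ may now contain elements outside $V'$, turning the equality in the derivation of~(\ref{eq:sum-r}) into an inequality in the needed direction---matches exactly the issue the paper flags in its footnote.
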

    
    \begin{proof}
        Let $O$ be an optimal solution in $V$. Let $O_A = O \cap  V_A$ and $O_B= O \cap V_B$. Let $C_1$ and $C_2$ be sets such that $C_1 \cup C_2 = V' \cup O_B$, $C_1 \cap C_2 = \emptyset$ and they minimize the sum $\rnk_{\mathcal{M}_1}(C_1) + \rnk_{\mathcal{M}_2}(C_2)$.  By Theorem~\ref{thm:matroid-intersection} we know that $\rnk_{\mathcal{M}_1}(C_1) + \rnk_{\mathcal{M}_2}(C_2)=\mu(V' \cup O_B)$, the maximum size of a common independent set in $V' \cup O_B$.
		
		As in the proof of Theorem~\ref{thm:dcs-ratio}, we will build an auxiliary set $S$, starting with $S = \emptyset$. If there exists an element $o_1 \in O$ such that $o_1 \notin \spn_{\mathcal{M}_1}(C_1) \cup  \spn_{\mathcal{M}_2}(C_2)$, then we add $o_1$ into $S$ and we next consider the contracted matroids $\mathcal{M}_1/S$ and $\mathcal{M}_2/S$. We keep the same sets $C_1$ and $C_2$ and we try again to find an element $o_2 \in O \backslash S$ such that $o_2 \notin \spn_{\mathcal{M}_1/S}(C_1)\cup  \spn_{\mathcal{M}_2/S}(C_2)$, and we add $o_2$ to $S$. We repeat that operation until it is no longer possible to add into $S$ another element of $O$ satisfying the aforementioned constraint. Note that here all the elements $o_i$ added to $S$ come necessarily from $O_A$.
		
		As a result, as $O \backslash S$ is a common independent subset in $\mathcal{M}_1/S$ and $\mathcal{M}_2/S$, and because of Theorem~\ref{thm:matroid-intersection}, the size of $O \backslash S$ is upper-bounded by $\rnk_{\mathcal{M}_1/S}(C_1) + \rnk_{\mathcal{M}_2/S}(C_2) \leq \rnk_{\mathcal{M}_1}(C_1) + \rnk_{\mathcal{M}_2}(C_2) = \mu(V' \cup O_B)$, as in the proof of Theorem~\ref{thm:dcs-ratio}.
		
		Now we need to upper-bound the value of $|S|$. We will use the same construction as that of the proof of Theorem~\ref{thm:dcs-ratio}, as there is no difference in the algorithms that construct the sets $R_{l,i}$ and $Q_{l,i}$ (those remain subsets of $V'$, we can just follow the same procedures described in Lemmas~\ref{lem:defineR} and~\ref{lem:defineQ}). 
		
		After similar computations, we get the inequality:
		\[\beta \geq  \frac{|R|}{2\cdot |S|} + \frac{|R|}{\rnk_{\mathcal{M}_1}(R_{2}) + \rnk_{\mathcal{M}_2}(R_{1})}.\] 
		
		As the elements of $S$ are from $O_A \subset V_A$, and because of Property~(ii) of Definition~\ref{def:dcs}, we know that for all $o_i \in S$,
		\[\beta^- \leq \tilde{\rho}_{\mathcal{M}_1}(o_i) + \tilde{\rho}_{\mathcal{M}_2}(o_i) \leq |R_{1,i}| + |R_{2,i}| + 4,\]
		so by averaging over all the elements of $S$ we get
		\[\beta^- \leq \frac{|R|}{|S|} + 4.\]
		
		Therefore we derive
		\[\left(\beta - \frac{\beta^- - 4}{2}\right) \cdot (\rnk_{\mathcal{M}_1}(R_{2}) + \rnk_{\mathcal{M}_2}(R_{1})) \geq |R|.\]
		Then, as $(\beta^- - 4) \cdot |S| \leq |R|$ and $\rnk_{\mathcal{M}_1}(R_{2}) + \rnk_{\mathcal{M}_2}(R_{1}) \leq \rnk_{\mathcal{M}_1}(C_1) + \rnk_{\mathcal{M}_2}(C_2) = \mu(V' \cup O_B)$ we finally have $\left(\beta - \frac{\beta^- - 4}{2}\right) \cdot \mu(V' \cup O_B) \geq (\beta^- - 4) \cdot |S|$, and therefore:
		\[\mu(V) = |O \backslash S| + |S| \leq \left(\frac{1}{2} + \frac{\beta}{\beta^--4}\right) \cdot \mu(V' \cup O_B) \leq \left(\frac{3}{2} + \varepsilon\right) \cdot \mu(V' \cup O_B) \leq \left(\frac{3}{2} + \varepsilon\right) \cdot \mu(V' \cup V_B),\]
		as $(\beta^- - 4) \cdot (1 + \varepsilon) \geq \beta$.
    \end{proof}
	
\section{Application to Random-Order Streams}
    \label{sec:streaming}
    
    Now we consider our problem in the random-order streaming model. As our algorithm builds on that of Bernstein~\cite{bernstein:LIPIcs:2020:12419} for the unweighted simple matching, let us briefly summarize his approach. In the first phase of the streaming, he constructs a subgraph that satisfies only a weaker definition of EDCS in Definition~\ref{def:intro-edcs} (only Property (i) holds). In the second phase of the streaming, he collects the ``underfull'' edges, which are those edges that violate Property (ii). He shows that in the end, the union of the subgraph built in the first phrase and the underfull edges collected in the second phase, with high probability, contains a $3/2+\varepsilon$ approximation and that the total memory used is in the order of $O(k \cdot poly(\log(k), 1/\varepsilon))$ (there $k$ refers to the number of vertices in the graph). As we will show below, this approach can be adapted to our context of matroid intersection.
    
    \begin{definition}
        We say that a subset $V'$ has \emph{bounded density} $\beta$ if for every element $v \in V'$, $\tilde{\rho}_{\mathcal{M}_1}(v) + \tilde{\rho}_{\mathcal{M}_2}(v) \leq \beta$. 
    \end{definition}
    
    \begin{definition} 
        Let $V'$ be a subset of $V$ with bounded density $\beta$. For any parameter $\beta^-$, we say that an element $v \in V \backslash V'$ is \emph{$(V',\beta,\beta^-)$-underfull} if $\tilde{\rho}_{\mathcal{M}_1}(v) + \tilde{\rho}_{\mathcal{M}_2}(v) < \beta^-$.
    \end{definition}
    
    As in~\cite{bernstein:LIPIcs:2020:12419}, we can get a good approximation by combining a subset $V'$ of bounded density $\beta$ and the set of $(V',\beta,\beta^-)$-underfull elements in $V \backslash V'$. The proof of the following lemma is quite similar to that of Theorem~\ref{thm:dcs-ratio}, so we will only highlight the points where the proofs differ. 
    
    We begin by noting that in~\cite{bernstein:LIPIcs:2020:12419,HuangS2022-edcs}, the proof is done by showing that the combination of the subgraphs built in the first and second phase of the algorithm contains a subgraph which is an EDCS with respect to some subgraph containing the optimal solution. Our approach here is different in that we do not try to get a DCS of a well-chosen subset containing the optimal solution. Instead, we adapt directly the proof of Theorem~\ref{thm:dcs-ratio}.
    
    \begin{lemma}
        \label{lem:underfull-intersection}
        Let $\varepsilon > 0$, $\beta$ and $\beta^-$ be parameters such that $\beta \geq \beta^- + 7$ and $(\beta^- - 4) \cdot (1 + \varepsilon) \geq \beta$. Given a subset $V' \subseteq V$ with bounded density $\beta$, if $X$ contains all elements in $V \backslash V'$ that are $(V',\beta,\beta^-)$-underfull, then $(3/2 + \varepsilon) \cdot \mu(V' \cup X) \geq \mu(V)$.
    \end{lemma}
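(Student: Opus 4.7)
The plan is to adapt the proof of Theorem~\ref{thm:dcs-ratio} essentially line by line, in the same spirit as the adaptation already carried out in Lemma~\ref{lem:communication-dcs}. Let $O$ be a maximum common independent set in $V$ and write $O_X = O \cap X$. By Edmonds' theorem (Theorem~\ref{thm:matroid-intersection}) applied to $V' \cup O_X$, fix a partition $C_1, C_2$ of $V' \cup O_X$ minimising $\rnk_{\mathcal{M}_1}(C_1) + \rnk_{\mathcal{M}_2}(C_2)$; this common value equals $\mu(V' \cup O_X)$, which is at most $\mu(V' \cup X)$.

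Next I would replay the greedy procedure that picks $S \subseteq O$ by repeatedly selecting some $o \in O \setminus S$ not spanned by $C_1$ in $\mathcal{M}_1/S$ nor by $C_2$ in $\mathcal{M}_2/S$. Since every element of $O \cap (V' \cup X) = O \cap (V' \cup O_X) \subseteq C_1 \cup C_2$ is already spanned, every element chosen into $S$ lies in $O \setminus (V' \cup X)$. The exact same chain of rank inequalities as in the original proof then gives $|O \setminus S| \leq \rnk_{\mathcal{M}_1}(C_1) + \rnk_{\mathcal{M}_2}(C_2) \leq \mu(V' \cup X)$, so it remains to bound $|S|$.

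For that bound I would reuse the constructions of $R_{l,i}$ and $Q_{l,i}$ from Lemmas~\ref{lem:defineR} and~\ref{lem:defineQ} verbatim; both are built purely from the density-based decomposition of $V'$ in $\mathcal{M}'_l$ and only read $C_l$ through its intersection with $V'$, so they still make sense here. The one subtle point, already flagged in the footnote of the proof of Theorem~\ref{thm:dcs-ratio}, is that $C_l$ may now contain elements of $O_X \setminus V'$, which downgrades the equality $C_l \cap \bigcup_{i=1}^j U_{l,i} = C_l \cap \spn_{\mathcal{M}_l}(\bigcup_{i=1}^j U_{l,i})$ to an inclusion. Since we only need the inequality $\rnk_{\mathcal{M}_l}(\bigcup_{i=1}^j U_{l,i}) \geq \rnk_{\mathcal{M}_l}(C_l \cap \bigcup_{i=1}^j U_{l,i}) + |S \cap \spn_{\mathcal{M}_l}(\bigcup_{i=1}^j U_{l,i})|$, the inclusion still yields inequality~(\ref{eq:sum-r}) in the direction we need, and so Property~(iv) of Lemma~\ref{lem:defineR} survives untouched.

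With that in place, the finishing calculation is identical to the original: the bounded-density hypothesis gives $\sum_{v \in R}(\tilde{\rho}_{\mathcal{M}_1}(v) + \tilde{\rho}_{\mathcal{M}_2}(v)) \leq \beta \cdot |R|$, and because every $o_i \in S$ lies in $V \setminus (V' \cup X)$ and is therefore \emph{not} $(V',\beta,\beta^-)$-underfull, we obtain $\tilde{\rho}_{\mathcal{M}_1}(o_i) + \tilde{\rho}_{\mathcal{M}_2}(o_i) \geq \beta^-$ and hence $\beta^- \leq |R_{1,i}| + |R_{2,i}| + 4$ for each $i$. Plugging these two bounds into the same convexity/averaging argument as in Theorem~\ref{thm:dcs-ratio} yields $(\beta - (\beta^- - 4)/2) \cdot \mu(V' \cup X) \geq (\beta^- - 4) \cdot |S|$, and combining with the $|O \setminus S|$ bound finishes the proof via $(\beta^- - 4)(1+\varepsilon) \geq \beta$. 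The only genuinely new content is the replacement of DCS Property~(ii) by the ``non-underfull'' characterisation of elements in $V \setminus (V' \cup X)$ --- this is exactly the hook that justifies relaxing a DCS to a bounded-density set augmented with underfull elements, and is the main place where one must be careful; everything else is mechanical.
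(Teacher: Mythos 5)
Your proposal is correct and follows essentially the same route as the paper's own proof: the paper likewise sets $C_1, C_2$ to partition $V' \cup (X \cap O)$, runs the identical greedy construction of $S$, reuses Lemmas~\ref{lem:defineR} and~\ref{lem:defineQ} with the footnoted caveat that the equality becomes an inequality, and replaces DCS Property~(ii) by the observation that elements of $S$ are not underfull. Your write-up is if anything slightly more explicit than the paper's on why every element of $S$ lands in $V \setminus (V' \cup X)$ and hence fails to be underfull.
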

    
    \begin{proof}
        Let $O$ be an optimal solution in $V$. Let $X^{\textrm{\normalfont opt}} = X \cap O$. Let $C_1$ and $C_2$ be sets such that $C_1 \cup C_2 = V' \cup X^{\textrm{\normalfont opt}}$, $C_1 \cap C_2 = \emptyset$, and they minimize the sum $\rnk_{\mathcal{M}_1}(C_1) + \rnk_{\mathcal{M}_2}(C_2)$. By  Theorem~\ref{thm:matroid-intersection} we know that $\rnk_{\mathcal{M}_1}(C_1) + \rnk_{\mathcal{M}_2}(C_2)= \mu(V' \cup X^{\textrm{\normalfont opt}})$, the maximum size of a common independent set in $V' \cup X^{\textrm{\normalfont opt}}$.
		
		As in the proof of Theorem~\ref{thm:dcs-ratio}, we will build an auxiliary set $S$, starting with $S = \emptyset$. If there exists an element $o_1 \in O$ such that $o_1 \notin \spn_{\mathcal{M}_1}(C_1) \cup \spn_{\mathcal{M}_2}(C_2)$, then we add $o_1$ into $S$ and we now consider the contracted matroids $\mathcal{M}_1/S$ and $\mathcal{M}_2/S$. We keep the same sets $C_1$ and $C_2$ and we try again to find an element $o_2 \in O \backslash S$ such that $o_2 \notin \spn_{\mathcal{M}_1/S}(C_1) \cup \spn_{\mathcal{M}_2/S}(C_2)$, and we add $o_2$ to $S$. We repeat that operation until it is no longer possible to add another element to $S$ satisfying the aforementioned constraints.
		
		As a result, as $O \backslash S$ is a common independent subset in $\mathcal{M}_1/S$ and $\mathcal{M}_2/S$, and because of Theorem~\ref{thm:matroid-intersection}, the size of $O \backslash S$ is upper-bounded by $\rnk_{\mathcal{M}_1/S}(C_1) + \rnk_{\mathcal{M}_2/S}(C_2) \leq \rnk_{\mathcal{M}_1}(C_1) + \rnk_{\mathcal{M}_2}(C_2) = \mu(V' \cup X^{\textrm{\normalfont opt}})$, as in the proof of Theorem~\ref{thm:dcs-ratio}.
		
		Now we need to upper-bound the value of $|S|$. We will use the same construction as that of the proof of Theorem~\ref{thm:dcs-ratio}, as there is no difference in the algorithms construct the sets $R_{l,i}$ and $Q_{l,i}$ (those remain subsets of $V'$, we can just follow the same procedures described in Lemmas~\ref{lem:defineR} and~\ref{lem:defineQ}). 
		
		Then after similar computations, we get the inequality:
		\[\beta \geq  \frac{|R|}{2\cdot |S|} + \frac{|R|}{\rnk_{\mathcal{M}_1}(R_{2}) + \rnk_{\mathcal{M}_2}(R_{1})}.\] 
		
		As the elements of $S$ are not underfull (observe that here we use this fact, instead of using Property~(ii) of Definition~\ref{def:dcs} as we have done in the proof of Theorem~\ref{thm:dcs-ratio}), we know that for all $o_i \in S$,
		\[\beta^- \leq \tilde{\rho}_{\mathcal{M}_1}(o_i) + \tilde{\rho}_{\mathcal{M}_2}(o_i) \leq |R_{1,i}| + |R_{2,i}| + 4,\]
		so by averaging over all the elements of $S$ we get
		\[\beta^- \leq \frac{|R|}{|S|} + 4.\]
		
		Therefore we obtain
		\[\left(\beta - \frac{\beta^- - 4}{2}\right) \cdot (\rnk_{\mathcal{M}_1}(R_{2}) + \rnk_{\mathcal{M}_2}(R_{1})) \geq |R|.\]
		Then, as $(\beta^- - 4) \cdot |S| \leq |R|$ and $\rnk_{\mathcal{M}_1}(R_{2}) + \rnk_{\mathcal{M}_2}(R_{1}) \leq \rnk_{\mathcal{M}_1}(C_1) + \rnk_{\mathcal{M}_2}(C_2) = \mu(V' \cup X^{\mathrm{\normalfont opt}})$ we finally have $\left(\beta - \frac{\beta^- - 4}{2}\right) \cdot \mu(V' \cup X^{\mathrm{\normalfont opt}}) \geq (\beta^- - 4) \cdot |S|$, and therefore:
		\[\mu(V) = |O \backslash S| + |S| \leq \left(\frac{1}{2} + \frac{\beta}{\beta^--4}\right) \cdot \mu(V' \cup X^{\mathrm{\normalfont opt}}) \leq \left(\frac{3}{2} + \varepsilon\right) \cdot \mu(V' \cup X^{\mathrm{\normalfont opt}}),\]
		as $(\beta^- - 4) \cdot (1 + \varepsilon) \geq \beta$.
    \end{proof}
    
    Here we recall a classic probabilistic tool that we will use in the analysis of our algorithm.
    \begin{proposition}[Hoeffding's inequality] \label{prop:hoeffding}
        Let $X_1, \dots, X_t$ be $t$ negatively associated random variables that take values in $[0,1]$. Let $X := \sum_{i = 1}^tX_i$. Then, for all $\lambda > 0$ we have:
        \[\mathbb{P}(X - \mathbb{E}[X] \geq \lambda) \leq \exp\left(-\frac{2\lambda^2}{t}\right).\]
    \end{proposition}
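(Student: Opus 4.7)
The plan is to follow the classical Chernoff--Cram\'er (exponential moment) recipe, with one extra ingredient to handle the fact that the $X_i$ are not assumed independent but only negatively associated. First I would fix a parameter $s>0$ and apply Markov's inequality to the increasing function $x\mapsto e^{sx}$, obtaining $\mathbb{P}(X-\mathbb{E}[X]\ge\lambda)\le e^{-s\lambda}\,\mathbb{E}[e^{s(X-\mathbb{E}[X])}]$. The remaining work is thus to bound the moment generating function of $X-\mathbb{E}[X]$ uniformly in $t$ and then optimize over $s$ at the end.

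Next I would invoke the defining property of negatively associated variables: for any collection of nondecreasing functions $f_1,\dots,f_t$ of the individual coordinates, one has $\mathbb{E}[\prod_i f_i(X_i)]\le\prod_i\mathbb{E}[f_i(X_i)]$. Taking $f_i(x)=e^{s(x-\mathbb{E}[X_i])}$, which is nondecreasing since $s>0$, yields the familiar independent-case factorization $\mathbb{E}[e^{s(X-\mathbb{E}[X])}]\le\prod_i\mathbb{E}[e^{s(X_i-\mathbb{E}[X_i])}]$. I would then apply Hoeffding's lemma to each factor: a centered random variable supported in an interval of length at most $1$ (which is the case here since $X_i\in[0,1]$) has moment generating function bounded by $e^{s^2/8}$. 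Combining these bounds and optimizing over $s$ (the minimizer being $s=4\lambda/t$) produces the target bound $\exp(-2\lambda^2/t)$.

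The main obstacle is the step beyond independence, namely justifying the factorization of the moment generating function under negative association alone; once that step is granted, the rest is the textbook Hoeffding argument and amounts to routine calculus. Depending on the precise definition of negative association adopted, this step is either a direct application of the definition to the family $\{x\mapsto e^{s(x-\mathbb{E}[X_i])}\}_i$, or requires first invoking the closure property that nondecreasing functions of disjoint subsets of negatively associated variables are again negatively associated. Either way it is the only place where the generalization from independent to negatively associated variables actually enters, and it is the one ingredient that distinguishes this statement from the classical Hoeffding bound.
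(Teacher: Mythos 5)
The paper does not prove this proposition at all --- it is stated as a recalled classical tool (it is the standard extension of Hoeffding's inequality to negatively associated variables, due to Joag-Dev and Proschan / Dubhashi and Ranjan). Your proof is the standard one and is correct: Markov's inequality on $e^{sX}$, factorization of the moment generating function via negative association, Hoeffding's lemma giving $\mathbb{E}[e^{s(X_i-\mathbb{E}[X_i])}]\le e^{s^2/8}$, and optimization at $s=4\lambda/t$ yielding $\exp(-2\lambda^2/t)$. You also correctly identify the one non-routine point: the product bound $\mathbb{E}[\prod_i f_i(X_i)]\le\prod_i\mathbb{E}[f_i(X_i)]$ requires the $f_i$ to be nonnegative and nondecreasing and, under the usual pairwise-disjoint-subsets definition of negative association, is obtained by a short induction peeling off one factor at a time (the product of the remaining nonnegative nondecreasing factors being itself a nondecreasing function of the remaining coordinates). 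With that induction spelled out, the argument is complete.
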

    
    The following ideas for the streaming algorithm come from a recent paper originally intended for $b$-matchings~\cite{HuangS2022-edcs}. For sake of completeness, we reproduce the details in the following, with some slight adaptations to our more general case of matroid intersection.
    
    \begin{algorithm}[h]
	\caption{Algorithm for computing an intersection of two matroids in a random-order stream}\label{algo:intersection-streaming}
	\begin{algorithmic}[1]
	\State $V' \gets \emptyset$
	\State $\forall\,0 \leq i \leq \log_2 k,\, \alpha_i \gets \left\lfloor\frac{\varepsilon \cdot n}{\log_2(k) \cdot (2^{i+2}\beta^2 + 1)}\right\rfloor$
	\For{$i = 0 \dots \log_2 k$}
	    \State $\textsc{ProcessStopped} \gets \textsc{False}$
	    \For{$2^{i+2}\beta^2 + 1$ iterations}
	        \State $\textsc{FoundUnderfull} \gets \textsc{False}$
	        \For{$\alpha_i$ iterations}
	            \State let $v$ be the next element in the stream
	            \If{$\tilde{\rho}_{\mathcal{M}_1}(v) + \tilde{\rho}_{\mathcal{M}_2}(v) < \beta^-$}
	                \State add $v$ to $V'$
	                \State $\textsc{FoundUnderfull} \gets \textsc{True}$
                    \While{there exists $v' \in V' : \tilde{\rho}_{\mathcal{M}_1}(v') + \tilde{\rho}_{\mathcal{M}_2}(v') > \beta$}
                        \State remove $v'$ from $V'$
                    \EndWhile
	            \EndIf
	        \EndFor
	        \If{$\textsc{FoundUnderfull} = \textsc{False}$}
	            \State $\textsc{ProcessStopped} \gets \textsc{True}$
	            \State \textbf{break} from the loop
	        \EndIf
	    \EndFor
	    \If{$\textsc{ProcessStopped} = \textsc{True}$}
	        \State \textbf{break} from the loop
	    \EndIf
	\EndFor
	\State $X \gets \emptyset$
	\For{each $v$ remaining element in the stream}
	    \If{$\tilde{\rho}_{\mathcal{M}_1}(v) + \tilde{\rho}_{\mathcal{M}_1}(v) < \beta^-$}
	        \State add $v$ to $X$
	    \EndIf
	\EndFor
	\State \Return the maximum common independent set in $V' \cup X$
	\end{algorithmic}
	\end{algorithm}
    
    The algorithm, formally described in Algorithm~\ref{algo:intersection-streaming}, consists of two phases. The first phase, corresponding to Lines~3-18, constructs a subset $V'$ of bounded density $\beta$ using only an $\varepsilon$ fraction of the stream $V^{\textrm{\normalfont early}}$. In the second phase, the algorithm collects the underfull elements in the remaining part of the stream $V^{\textrm{\normalfont late}}$. As in~\cite{bernstein:LIPIcs:2020:12419} we use the idea that if no underfull element was found in an interval of size $\alpha$ (see Lines~6-13), then with high probability the number of underfull elements remaining in the stream is bounded by some value $\gamma = 4 \log(n) \frac{n}{\alpha}$. The issue is therefore how to choose the right size of interval $\alpha$, because we ignore the order of magnitude of $\mu(V)$ the optimal solution: if we do as in~\cite{bernstein:LIPIcs:2020:12419} by choosing only one fixed size of intervals $\alpha$, then if $\alpha$ is too small, the value of $\gamma$ will be too big compared to $\mu(V)$, whereas if the value of $\alpha$ is too large we will be unable to terminate the first phase of the algorithm within the early fraction of size $\varepsilon m$.
    Therefore, the idea in the first phase of the algorithm is to ``guess'' the value of $\log_2\mu(V)$ by trying successively larger and larger values of $i$ (see Line~3). In fact, by setting $i_0 = \lceil \log_2 \mu(V) \rceil$, we know that the number of insertion/deletion operations that can be performed on a $(\beta,\beta^-)$-DSC is bounded by $2^{i_0+2}\beta^2$ (see the proof of Theorem~\ref{thm:construction}). As a result we know that the first phase should always stop at a time where $i$ is smaller than or equal to $i_0$, and therefore at a time when $\alpha_i \geq \alpha_{i_0}$. Then we can prove that with high probability the number of remaining underfull elements in the stream is at most $\gamma_i = 4 \log(n) \frac{n}{\alpha_i}$.
	
	\begin{claim} \label{claim:late-part}
	    With probability at least $1 - \exp(- 2 \cdot \varepsilon^2 \cdot \mu(V))$ the late part of the stream $V^{\textrm{\normalfont late}}$ contains at least a $(1 - 2 \varepsilon)$ fraction of the optimal solution. Moreover, in expectation $V^{\textrm{\normalfont late}}$ contains a $(1 - \varepsilon)$ fraction of the optimal solution.
	\end{claim}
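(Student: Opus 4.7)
The plan is to show deterministically that the first phase of Algorithm~\ref{algo:intersection-streaming} consumes at most the first $\varepsilon n$ positions of the random stream, and then apply linearity of expectation together with Hoeffding's inequality to count the elements of $O$ landing in that prefix.

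First I would verify the worst-case length of $V^{\textrm{\normalfont early}}$: for each fixed outer-loop index $i$, the middle loop runs at most $(2^{i+2}\beta^2 + 1)$ times and processes $\alpha_i$ elements per iteration, so it consumes at most $(2^{i+2}\beta^2 + 1)\cdot \alpha_i \leq \varepsilon n / \log_2 k$ stream positions; summing over the values of $i$, the entire first phase uses at most $\varepsilon n$ positions. Hence $V^{\textrm{\normalfont late}}$ always contains the suffix of the stream of length at least $(1 - \varepsilon)n$, and it suffices to lower bound $|O \cap \text{suffix}|$.

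Next, let $O$ be a fixed optimal common independent set with $|O| = \mu(V)$, and for each $o \in O$ let $Y_o \in \{0,1\}$ be the indicator that $o$ appears among the first $\lceil \varepsilon n \rceil$ positions of the random permutation. Since the permutation is uniform, $\mathbb{E}[Y_o] \leq \varepsilon$, so by linearity $\mathbb{E}\!\left[\sum_{o \in O} Y_o\right] \leq \varepsilon \cdot \mu(V)$, which yields $\mathbb{E}[|V^{\textrm{\normalfont late}} \cap O|] \geq (1-\varepsilon)\mu(V)$ and establishes the expectation claim.

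For the high-probability bound I would use the fact that the $Y_o$'s are negatively associated, as they are indicators of a uniformly random subset of fixed size $\lceil \varepsilon n \rceil$ drawn from $V$ without replacement. Applying Proposition~\ref{prop:hoeffding} with $t = \mu(V)$ and $\lambda = \varepsilon \cdot \mu(V)$ then gives
\[
\mathbb{P}\!\left[\sum_{o \in O} Y_o \geq 2\varepsilon\cdot \mu(V)\right] \leq \mathbb{P}\!\left[\sum_{o \in O} Y_o - \mathbb{E}\!\left[\sum_{o \in O} Y_o\right] \geq \varepsilon\cdot \mu(V)\right] \leq \exp\!\left(-2\varepsilon^2 \mu(V)\right),
\]
so with probability at least $1 - \exp(-2\varepsilon^2 \mu(V))$ at most $2\varepsilon\cdot \mu(V)$ optimal elements fall into $V^{\textrm{\normalfont early}}$, proving $|V^{\textrm{\normalfont late}} \cap O| \geq (1-2\varepsilon)\mu(V)$. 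The only delicate point is verifying that the indicators $Y_o$ are negatively associated so that Hoeffding's inequality applies in the stated form; this follows from standard facts about sampling without replacement, and the rest is routine.
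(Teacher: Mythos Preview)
Your proposal is correct and follows essentially the same approach as the paper: define indicator variables for optimal elements landing in the early prefix, invoke negative association of sampling without replacement, and apply Hoeffding's inequality with $t=\mu(V)$ and $\lambda=\varepsilon\mu(V)$. Your preliminary verification that the first phase deterministically fits in the first $\varepsilon n$ positions is established separately in the paper (as Property~1 of Lemma~\ref{lem:early-properties}), but including it here does no harm and arguably makes the argument more self-contained.
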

	
	\begin{proof}
	    Consider an optimal solution $O = \{o_1,\ldots, o_{\mu(V)}\}$. We define the random variables $X_i = \mathbbm{1}_{o_i \in V^{\textrm{\normalfont early}}}$. Hence we have $\mathbb{E}[\sum X_i] = \varepsilon \cdot |O|$. Moreover, the random variables $X_i$ are negatively associated, so we can use Hoeffding's inequality (see Proposition~\ref{prop:hoeffding}) to get
	    \[\mathbb{P}\left[\sum_{i = 1}^{\mu(V)} X_i \geq 2 \varepsilon \cdot \mu(V)\right] \leq \exp\left(-\frac{2 \cdot \varepsilon^2 \cdot \mu(V)^2}{\mu(V)}\right) = \exp\left(-2 \cdot \varepsilon^2 \cdot \mu(V)\right).\]
	\end{proof}
	
	Recall that we defined $i_0 = \lceil \log_2 \mu(V) \rceil$. Algorithm~\ref{algo:intersection-streaming} works when $\mu(V)$ is not too big (otherwise we may use intervals of size $\alpha_{i_0} = \lfloor \frac{\varepsilon\cdot n}{\log_2(k) \cdot (2^{i_0+2}\beta^2 + 1)} \rfloor = 0$). Here we will first argue that this case can be handled anyway. 
	
	\begin{claim} \label{claim:big-output}
	    We can assume that $\frac{\varepsilon\cdot n}{\log_2(k) \cdot (2^{i_0+2}\beta^2 + 1)} \geq 1$.
	\end{claim}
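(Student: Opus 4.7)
The plan is to show that if the hypothesis $\alpha_{i_0} \geq 1$ fails, then $\mu(V)$ is necessarily so large, relative to $n$, that the entire stream fits inside the memory budget promised by Theorem~\ref{thm:intro-streaming-approx}; in that regime the algorithm can be replaced by storing everything and running an exact (offline) matroid intersection algorithm, so the case is trivial. In other words, the claim is a ``without loss of generality'' statement, and I would prove it by contradiction / case split on the size of $\mu(V)$.

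First, I would unfold the negation of the hypothesis: $\frac{\varepsilon \cdot n}{\log_2(k) \cdot (2^{i_0+2}\beta^2 + 1)} < 1$ rearranges to $2^{i_0+2}\beta^2 + 1 > \frac{\varepsilon \cdot n}{\log_2 k}$, and hence $2^{i_0} = \Omega\!\left( \frac{\varepsilon \cdot n}{\beta^2 \log_2 k} \right)$. Since $i_0 = \lceil \log_2 \mu(V) \rceil$, this gives $\mu(V) \geq 2^{i_0 - 1} = \Omega\!\left( \frac{\varepsilon \cdot n}{\beta^2 \log_2 k} \right)$. Using Remark~\ref{rmk:val-eps}, which lets us take $\beta = O(1/\varepsilon)$, this becomes $\mu(V) = \Omega\!\left( \frac{\varepsilon^3 \cdot n}{\log_2 k} \right)$.

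Second, I would compare with the memory budget $O(\mu(V) \cdot \log(n) \cdot \log(k) \cdot (1/\varepsilon)^3)$ announced in Theorem~\ref{thm:intro-streaming-approx}: under the lower bound above, this budget is $\Omega(n \cdot \log(n))$, which is already enough to store the entire stream. Thus in this regime the algorithm can simply buffer all $n$ elements and return the exact optimum of the matroid intersection problem, giving a ratio of $1 \leq 3/2 + \varepsilon$ deterministically, so no sampling-based argument is needed.

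The main (and essentially only) obstacle is a bookkeeping one: one has to verify that the constants hidden in the $\Omega$ and $O$ notations are consistent, so that the derived lower bound on $\mu(V)$ really does fit inside the declared memory budget (using $\beta, \beta^- = O(1/\varepsilon)$ from Remark~\ref{rmk:val-eps}). Beyond that, the statement is an easy ``large-$\mu(V)$ reduction'' that lets us assume $\alpha_{i_0} \geq 1$ in all subsequent claims of Section~\ref{sec:streaming}, where the nontrivial probabilistic arguments about the random-order stream take place.
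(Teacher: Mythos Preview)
Your approach is essentially the paper's: negate the inequality, deduce that $n = O(\mu(V)\cdot\log(k)\cdot(1/\varepsilon)^3)$, and conclude that the whole stream fits in the allotted memory. There is, however, one subtlety you gloss over.

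The algorithm does not know $\mu(V)$ (hence $i_0$) in advance, so it cannot simply decide at the outset to ``buffer all $n$ elements and return the exact optimum.'' The paper deals with this by noting that the bad regime is detectable \emph{during} execution: if in Phase~1 the loop index $i$ reaches a value with $\alpha_i = 0$ before the phase has terminated, then we must have $i \le i_0$ (by the same counting argument that bounds the number of epochs), so $\alpha_{i_0}=0$ and we are in the regime where $n$ is small; from that point the algorithm stores all remaining elements. Because detection happens only after the early $\varepsilon n$ elements have already passed, the paper does not claim the exact optimum as you do; instead it invokes Claim~\ref{claim:late-part} to obtain a $(1-\varepsilon)$-fraction of $\mu(V)$ in expectation (and $(1-2\varepsilon)$ with high probability) from the stored late part, which is still comfortably within the $3/2+\varepsilon$ target. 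Your write-up would be complete once you add this detection step and weaken ``exact optimum'' accordingly.
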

	
	\begin{proof}
	    If this is not the case, then we can just store all the elements of $V$ as the number of elements $n$ is bounded by $\frac{\log_2(k) \cdot (2^{i_0+2}\beta^2 + 1)}{\varepsilon} = O(\mu(V) \cdot \log(k) \cdot (1/\varepsilon)^3)$ (as $\beta$ is $O(1/\varepsilon)$, see Remark~\ref{rmk:val-eps}). As a result, if at some point of the first phase we have not stopped and we have $\alpha_i = 0$, then we store all the remaining elements of $V^{\textrm{\normalfont late}}$ and we will be able to get a $(1 - \varepsilon)$ approximation in expectation and a $(1 - 2\varepsilon)$ approximation with high probability (more precisely, at least $1 - \exp(-2 \cdot c \cdot \varepsilon^5 \cdot n/\log(k))$, for some constant $c > 0$, see Claim~\ref{claim:late-part}), using $O(\mu(V) \cdot \log(k) \cdot (1/\varepsilon)^3)$ memory.
	\end{proof}
	
	From now on we will assume that $\frac{\varepsilon\cdot n}{\log_2(k) \cdot (2^{i_0+2}\beta^2 + 1)} \geq 1$.
	Then we can move on to our main algorithm. The following lemma is very similar to the one used in~\cite{bernstein:LIPIcs:2020:12419}.
	
	\begin{lemma} \label{lem:early-properties}
	    The first phase of Algorithm~\ref{algo:intersection-streaming} uses $O(\beta\cdot \mu(V))$ memory and constructs a subset $V' \subseteq V$, satisfying the following properties:
	    \begin{enumerate}
	        \item The first phase terminates within the first $\varepsilon \cdot n$ elements of the stream.
	        \item When the first phase terminates after processing some element, we have:
	            \begin{enumerate}
	                \item $V'$ has bounded density $\beta$, and  contains at most $O(\beta \cdot \mu(V))$ elements.
	                \item With probability at least $1 - n^{-3}$, the total number of $(V', \beta, \beta^-)$-underfull elements in the remaining part of the stream is at most $\gamma = O( \mu(V) \cdot \log(n) \cdot \log(k) \cdot \beta^2 \cdot 1/\varepsilon)$.
	            \end{enumerate}
	    \end{enumerate}
	\end{lemma}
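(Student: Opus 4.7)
The plan is to verify the three conclusions separately, re-using tools already established in the paper. For \textbf{Property 1}, the total number of stream elements read in the first phase is at most
\[
\sum_{i=0}^{\lfloor\log_2 k\rfloor} (2^{i+2}\beta^2 + 1)\cdot \alpha_i \;\leq\; \sum_{i=0}^{\lfloor\log_2 k\rfloor} \frac{\varepsilon n}{\log_2 k} \;=\; \varepsilon n,
\]
directly from the definition of $\alpha_i$. For \textbf{Property 2(a)}, the while loop on Lines 11--13 removes every element of $V'$ that is overfull immediately after each insertion, so $V'$ always has bounded density $\beta$; Proposition~\ref{prop:size-bound} then gives $|V'|\le \beta\mu(V)=O(\beta\mu(V))$, which also bounds the memory.

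The content of \textbf{Property 2(b)} splits into two steps. First, I would show that the outer loop stops at some index $i^*\le i_0 := \lceil \log_2 \mu(V)\rceil$: every iteration with $\textsc{FoundUnderfull}=\textsc{True}$ corresponds to an insertion into $V'$, which is a local improvement step of exactly the kind counted in the proof of Theorem~\ref{thm:construction}, so at most $2\beta^2\mu(V)<2^{i_0+2}\beta^2$ such iterations can occur in total. Hence by the time the middle loop at level $i_0$ has run for $2^{i_0+2}\beta^2+1$ iterations, at least one of them must yield $\textsc{FoundUnderfull}=\textsc{False}$, so $i^*\le i_0$ and $\alpha_{i^*}\ge\alpha_{i_0}=\Theta(\varepsilon n/(\log(k)\mu(V)\beta^2))$.

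Second, the probabilistic tail bound. For each candidate starting position $p$ of the final ``quiet block'' of $\alpha_{i^*}$ reads with no underfull element, I would condition on the stream prefix of length $p$, which fully determines $V'$ at time $p$, call it $V'_*$, and hence determines the number $U$ of $(V'_*,\beta,\beta^-)$-underfull elements in the suffix of length $N\le n$. Since the suffix is in uniformly random order, the probability that none of the $U$ underfull elements lands in the next $\alpha_{i^*}$ positions is at most
\[
\frac{\binom{N-U}{\alpha_{i^*}}}{\binom{N}{\alpha_{i^*}}} \;\leq\; \exp\!\left(-\frac{U\alpha_{i^*}}{N}\right),
\]
which is at most $n^{-4}$ when $U\ge 4\log(n)\cdot N/\alpha_{i^*}$. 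A union bound over the at most $n$ possible starting positions $p$ then gives that, with probability at least $1-n^{-3}$, the number of remaining underfull elements is $O(n\log(n)/\alpha_{i^*})\le O(\mu(V)\log(n)\log(k)\beta^2/\varepsilon)=\gamma$, as claimed.

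\textbf{The main obstacle} is the probabilistic step above: $V'$ evolves dynamically while the stream is consumed, so one has to condition at precisely the right moment --- the beginning of the quiet block --- in order to simultaneously pin down the ``target'' set $V'_*$ defining underfullness and preserve enough randomness in the suffix to apply the hypergeometric tail bound. This is also what lets us union bound only over the $n$ candidate block-start positions rather than over the combinatorially many possible values of $V'$.
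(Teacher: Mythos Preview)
Your proposal is correct and follows essentially the same route as the paper. The paper also argues that at most $2\beta^2\mu(V)$ local-improvement steps can occur (via the potential function from Theorem~\ref{thm:construction}), hence the first phase ends at some level $i^*\le i_0$; it then fixes $\alpha=\alpha_{i_0}$, defines an \emph{epoch} for each middle-loop iteration, and for each epoch $j$ bounds $\mathbb{P}[\mathcal{A}_j\wedge\mathcal{B}_j]$ (where $\mathcal{A}_j$ is ``quiet'' and $\mathcal{B}_j$ is ``more than $\gamma$ underfull elements remain'') by $(1-\gamma/n)^{\alpha}\le n^{-4}$, followed by a union bound over at most $n$ epochs---exactly your conditioning-plus-union-bound argument, with a product bound in place of your hypergeometric ratio. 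Your direct summation $\sum_i(2^{i+2}\beta^2+1)\alpha_i\le\varepsilon n$ for Property~1 is in fact slightly cleaner than what the paper writes there, since it does not require first establishing $i^*\le i_0$ for that part.
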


	\begin{proof}
        First, in each interval of size $\alpha_i$ processed until the first phase terminates (except the last interval), at least one insertion/deletion operation that is performed (as described in the proof of Theorem~\ref{thm:construction}), and therefore the total number of such processed intervals is bounded by $2 \beta^2\cdot \mu(V) + 1$. As a result, the first phase ends with some $i \leq i_0 = \lceil \log_2 \mu(V) \rceil$, and the total number of elements processed in the first phase is therefore bounded by $\varepsilon \cdot n \cdot \frac{i_0}{\log_2(k)} \leq \varepsilon \cdot n$. For Property~2.a, as the subset $V'$ built always keeps a bounded density $\beta$, Proposition~\ref{prop:size-bound} implies that $V'$ uses $O(\beta \cdot \mu(V)) = O(\mu(V) \cdot 1/\varepsilon)$ memory.

        Now we turn to the last property. As mentioned previously, the intuition is simple: the algorithm only exits the first phase if it fails to find a single underfull element in an entire interval (Line~14-16), and since the stream is random, such an event implies that there are most likely few underfull elements left in the stream.

        To formalize this, we call the $j$-th time that Lines 7-13 are processed the \emph{epoch} $j$. Let $\mathcal{A}_j$ be the event that $\textsc{FoundUnderfull}$ is set to $\textsc{False}$ in epoch $j$. Let $\mathcal{B}_j$ be the event that the number of $(V',\beta, \beta^-)$-underfull elements in the remaining part of the stream is larger than some $\gamma$. Note that the last property fails to hold if and only if we have $\mathcal{A}_j \land \mathcal{B}_j$ for some $j$, so we want to upper bound $\mathbb{P}[\mathcal{A}_j \land \mathcal{B}_j]$. Let $V_j^r$ contains all elements in $V$ that have not yet appeared in the stream at the \emph{beginning} of epoch $j$ (r for remaining). Let $V^e_j$ be the elements that appear in epoch $j$ (e for epoch), and note that $E^e_j$ is a subset of size $\alpha_i \geq \alpha_{i_0} = \alpha_{\lceil\log_2\mu(V)\rceil} = \alpha$ chosen uniformly at random from $V^r_j$. Define $V'_j$ to be the subset $V'$ at the beginning of epoch $j$, and define $V^u_j \subseteq E^r_j$ to be the set of remaining underfull elements with respect to $V'_j$, $\beta$, and $\beta^-$. Observe that because of event $\mathcal{A}_j$, the subset $V'$ remains the same throughout epoch $j$, so an element that is underfull at any point during the epoch will be underfull at the end as well. Thus, $\mathcal{A}_j \land \mathcal{B}_j$ is equivalent to the event that $|V^u_j| > \gamma$ and $V^u_j \cap V^e_j = \emptyset$. 

        Let $\mathcal{A}^k_j$ be the event that the $k$-th element of epoch $j$ is not in $V^u_j$. We have that $\mathbb{P}[\mathcal{B}_j \land \mathcal{A}_j] \leq \mathbb{P}[\mathcal{A}_j \,|\, \mathcal{B}_j] \leq \mathbb{P}[\mathcal{A}^1_j \, | \, \mathcal{B}_j] \prod_{k=2}^\alpha \mathbb{P}[\mathcal{A}^k_j \, | \, \mathcal{B}_j, \mathcal{A}^1_j, \ldots, \mathcal{A}^{k-1}_j],$ where the second inequality comes from that $V^e_j$ is of size larger or equal to $\alpha = \alpha_{\lceil\log_2\mu(V)\rceil}$.

        Now, observe that $\mathbb{P}[\mathcal{A}^1_j \,|\, \mathcal{B}_j] < 1 - \frac{\gamma}{n}$ because the first element of the epoch is chosen uniformly at random from the set of $\leq n$ remaining elements, and the event fails if the chosen element is in $V^u_j$, where $|V^u_j| > \gamma$ by definition of $\mathcal{B}_j$. Similarly, for any $k$, $\mathbb{P}[\mathcal{A}^k_j \, | \, \mathcal{B}_j, \mathcal{A}^1_j, \ldots, \mathcal{A}^{k-1}_j] < 1 - \frac{\gamma}{m}$ because conditioning on the previous events $\mathcal{A}^t_j$ implies that no element from $V^u_j$ has yet appeared in this epoch, so there remain still at least $\gamma$ element from $V^u_j$ left in the stream.

        We now set
        \[\gamma = 4\log(n)\cdot \frac{n}{\alpha} = 4\log(n)\cdot n \cdot \left\lfloor\frac{\varepsilon\cdot n}{\log_2(k) \cdot (2^{i_0+2}\beta^2 + 1)}\right\rfloor^{-1},\]
        and as we assumed that $\frac{\varepsilon\cdot n}{\log_2(n) \cdot (2^{i_0+2}\beta^2 + 1)} \geq 1$ (and as a factor of at most $2$ separates $\lfloor x \rfloor$ and $x$ when $x \geq 1$) we have $\gamma = O(\mu(V)\cdot \log(n) \cdot \log(k) \cdot (1/\varepsilon)^3)$.
        
        Combining the above equations yields that $\mathbb{P}[\mathcal{B}_j \land \mathcal{A}_j] \leq (1-\frac{\gamma}{n})^{\alpha} = (1 - \frac{4\log(n)}{\alpha})^\alpha \leq n^{-4}$. There are clearly at most $n$ epochs, so union bounding over all of them shows that the last property fails with probability at most $n^{-3}$, as desired.
    \end{proof}
    
    Then we can combine the previous results to obtain the following theorem.
	
	\streamingtheorem*
	
	\begin{proof}
	    Using Lemma~\ref{lem:underfull-intersection} on the graph $V' \cup V^{\textrm{\normalfont late}}$ we get $(3/2 + \varepsilon) \cdot \mu(V' \cup X) \geq \mu(V' \cup V^{\textrm{\normalfont late}})$.
	    Applying Claim~\ref{claim:late-part}, we know that in expectation $(1 - \varepsilon)^{-1} \cdot \mu(V' \cup V^{\textrm{\normalfont late}}) \geq \mu(V)$. Hence in expectation we also have \[(3/2 + \varepsilon) \cdot (1 - \varepsilon)^{-1} \cdot \mu(V' \cup X) \geq \mu(V).\]
	    Moreover, by Lemma~\ref{lem:early-properties}, the memory consumption is bounded by $O(\mu(V)\cdot \log(n) \cdot \log(k) \cdot (1/\varepsilon)^3)$ with probability at least $1 - n^{-3}$. Hence we can decide that, if during the execution of the algorithm at some point the memory consumption reaches the bound defined in Lemma~\ref{lem:early-properties} (recall that this bound can be computed as it depends only on the epoch when the first phase stopped), then we discard the remaining elements. As this event happens only with probability $1 - n^{-3}$, this is not harmful for the expectation of the approximation ratio.
	    
	    Moreover, using Claim~\ref{claim:late-part}, we know that a $(1 - 2\varepsilon)^{-1}\cdot (3/2 + \varepsilon)$ approximation of the optimal common independent subset is contained in $\mu(V \cup X)$ with probability at least $1 - \exp(-2 \cdot \varepsilon^2 \cdot \mu(V))$. As the memory consumption of $O(\mu(V)\cdot \log(n) \cdot \log(k) \cdot (1/\varepsilon)^3)$ is guaranteed with probability at least $1 - n^{-3}$ (see Lemma~\ref{lem:early-properties}), then with probability at least $1 - (\exp(- 2 \cdot \varepsilon^2 \cdot \mu(V)) + n^{-3})$ (by union bound), we can obtain a $(1 - 2\varepsilon)^{-1}\cdot (3/2 + \varepsilon)$ approximation using $O(\mu(V)\cdot \log(n) \cdot \log(k) \cdot (1/\varepsilon)^3)$ memory. As $\varepsilon < 1/4$, we have $(1 - 2\varepsilon)^{-1}\cdot (3/2 + \varepsilon) \leq (3/2 + 8\varepsilon)$, and therefore to get a $3/2 + \varepsilon$, we have to use an $\varepsilon' = \varepsilon/8$ so that the probability to have an approximation ratio worse than $3/2 + \varepsilon$ approximation is at most $\exp(-2 \cdot (\varepsilon/8)^2 \cdot \mu(V)) + n^{-3}$.
	\end{proof}


	
\appendix

\section{Deferred Proofs}
    \label{app:proof}

    \begin{proof}[Proof of Lemma~\ref{lem:add-increase}]
        In the following, we will use the notation $P_{a,b} = U_a^{\textrm{\normalfont old}} \cap U_b^{\textrm{\normalfont new}}$ for $a$, $b \in \llbracket 1, k \rrbracket$. 
        
	    We prove (i) by strong induction. We start with the case $j = 1$. Let $i_0$ be the largest $i$ set such as $P_{1,i} \neq \emptyset$. Then we know that for all $v \in U_1^{\textrm{\normalfont old}}$,
	    \begin{align*}
	        \tilde{\rho}_{\mathcal{M}}^{\textrm{\normalfont new}}(v) 
	        &\geq \rho_{\mathcal{M}'^{\textrm{\normalfont new}}/(\bigcup_{i = 1}^{i_0-1}U_i^{\textrm{\normalfont new}})}(U_{i_0}^{\textrm{\normalfont new}}) &\text{by Proposition~\ref{prop:non-increasing}}\\
	        &\geq \rho_{\mathcal{M}'^{\textrm{\normalfont new}}/(\bigcup_{i = 1}^{i_0-1}U_i^{\textrm{\normalfont new}})}(P_{1,i_0}) &\text{by maximality of $U_{i_0}^{\textrm{\normalfont new}}$}\\
	        &\geq \rho_{\mathcal{M}'^{\textrm{\normalfont new}}/(\bigcup_{i = 1}^{i_0-1}P_{1,i})}(P_{1,i_0}) & \text{as $\bigcup_{i = 1}^{i_0-1}P_{1,i} \subseteq \bigcup_{i = 1}^{i_0-1}U_i^{\textrm{\normalfont new}}$ (Proposition~\ref{prop:subset_increased_rho})}\\
	        &= \rho_{\mathcal{M}'^{\textrm{\normalfont old}}/(\bigcup_{i = 1}^{i_0-1}P_{1,i})}(P_{1,i_0}) & \text{as $u^{\textrm{\normalfont new}} \notin U_1^{\textrm{\normalfont old}}$}\\
	        &\geq \rho_{\mathcal{M}'^{\textrm{\normalfont old}}}(U_1^{\textrm{\normalfont old}}).  & \text{by Proposition~\ref{prop:largerdensity}}
	    \end{align*}
	    
	    For the induction step, let $2 \leq j \leq k$. Suppose that the property is true for all $i \in \llbracket 1, j - 1 \rrbracket$. We want to prove that the property is also true for $j$. Let $i_0$ be the largest $i$ such that $P_{j, i} \neq \emptyset$. Then we know that for all $v \in U_j^{\textrm{\normalfont old}}$,
	    \begin{align*}
	        \tilde{\rho}_{\mathcal{M}}^{\textrm{\normalfont new}}(v) &\geq \rho_{\mathcal{M}'^{\textrm{\normalfont new}}/(\bigcup_{i = 1}^{i_0-1}U_i^{\textrm{\normalfont new}})}(U_{i_0}^{\textrm{\normalfont new}}) \geq \rho_{\mathcal{M}'^{\textrm{\normalfont new}}/(\bigcup_{i = 1}^{i_0-1}U_i^{\textrm{\normalfont new}})}(P_{j,i_0}).
	    \end{align*}
	    Then we have two cases:
	    \begin{itemize}
	        \item We have $\bigcup_{i = 1}^{j-1}U_i^{\textrm{\normalfont old}} \subseteq \bigcup_{i = 1}^{i_0-1}U_i^{\textrm{\normalfont new}}$. In that case we can write, similarly as in the previous case,
	        \begin{align*}
    	        \tilde{\rho}_{\mathcal{M}}^{\textrm{\normalfont new}}(v) &\geq \rho_{\mathcal{M}'^{\textrm{\normalfont new}}/(\bigcup_{i = 1}^{i_0-1}U_i^{\textrm{\normalfont new}})}(P_{j,i_0})\\
    	        &\geq \rho_{\mathcal{M}'^{\textrm{\normalfont new}}/(\bigcup_{i = 1}^{j-1}U_i^{\textrm{\normalfont old}} \cup \bigcup_{i = 1}^{i_0-1}P_{j,i})}(P_{j,i_0}) &\text{as $\bigcup_{i = 1}^{j-1}U_i^{\textrm{\normalfont old}} \cup \bigcup_{i = 1}^{i_0-1}P_{j,i} \subseteq \bigcup_{i = 1}^{i_0-1}U_i^{\textrm{\normalfont new}}$}\\
    	        &= \rho_{\mathcal{M}'^{\textrm{\normalfont old}}/(\bigcup_{i = 1}^{j-1}U_i^{\textrm{\normalfont old}} \cup \bigcup_{i = 1}^{i_0-1}P_{j,i})}(P_{j,i_0}) & \text{as $u^{\textrm{\normalfont new}} \notin \bigcup_{i = 1}^{j-1}U_i^{\textrm{\normalfont old}} \cup \bigcup_{i = 1}^{i_0}P_{j,i}$}\\
    	        &\geq \rho_{\mathcal{M}'^{\textrm{ old}}}(U_j^{\textrm{\normalfont old}}). &  \text{by Proposition~\ref{prop:largerdensity}}
    	    \end{align*}
	        \item Otherwise, there exists $u' \in U_{i_1}^{\textrm{\normalfont old}}$ such that $i_1 < j$ and $u' \notin \bigcup_{i = 1}^{i_0-1}U_i^{\textrm{\normalfont new}}$. It means that $u' \in U_{i_2}^{\textrm{\normalfont new}}$ for some $i_2 \geq i_0$, and hence $\tilde{\rho}_{\mathcal{M}}^{\textrm{\normalfont new}}(u') \leq \tilde{\rho}_{\mathcal{M}}^{\textrm{\normalfont new}}(v)$.
	        Then we have 
	        
	        $$\tilde{\rho}_{\mathcal{M}}^{\textrm{\normalfont new}}(v) \geq  \tilde{\rho}_{\mathcal{M}}^{\textrm{\normalfont new}}(u') \geq \rho_{\mathcal{M}'^{\textrm{\normalfont old}}/(\bigcup_{i = 1}^{i_1-1}U_i^{\textrm{\normalfont old}})}(U_{i_1}^{\textrm{\normalfont old}}) \geq 
	        \rho_{\mathcal{M}'^{\textrm{\normalfont old}}/(\bigcup_{i = 1}^{j-1}U_i^{\textrm{\normalfont old}})}(U_{j}^{\textrm{\normalfont old}}),
	        $$
	        where the second inequality uses the strong induction hypothesis and the third uses Proposition~\ref{prop:non-increasing}. 
	    \end{itemize}
	    This concludes the proof of (i).
	    
	    Now we move to (ii). Observe that (i) implies the result for $v \in V'^{\textrm{\normalfont old}}$. If $v \notin \spn_{\mathcal{M}}(V'^{\textrm{\normalfont old}})$, then it is clear that the density associated to that element can only increase (recall that by Definition~\ref{def:tilderho}, $\tilde{\rho}_{\mathcal{M}}^{\textrm{\normalfont old}}(v)=0$). From now on we suppose that $v \in \spn_{\mathcal{M}}(V'^{\textrm{\normalfont old}})$ and we denote $j = \min\{j \in \llbracket 1, k \rrbracket : v \in \spn_{\mathcal{M}}(\bigcup_{i = 1}^{j}U_i^{\textrm{\normalfont old}})\}$. By (i) we know that for all $v' \in \bigcup_{i = 1}^{j}U_i^{\textrm{\normalfont old}}$, we have $\tilde{\rho}_{\mathcal{M}}^{\textrm{\normalfont new}}(v') \geq \rho_{\mathcal{M}'^{\textrm{\normalfont old}}/(\bigcup_{i = 1}^{j-1}U_i^{\textrm{\normalfont old}})}(U_{j}^{\textrm{\normalfont old}}) = \tilde{\rho}_{\mathcal{M}}^{\textrm{\normalfont old}}(v') \geq \tilde{\rho}_{\mathcal{M}}^{\textrm{\normalfont old}}(v)$, hence we also have $\tilde{\rho}_{\mathcal{M}}^{\textrm{\normalfont new}}(v) \geq \min_{v' \in \bigcup_{i = 1}^{j}U_i^{\textrm{\normalfont old}}}\tilde{\rho}_{\mathcal{M}}^{\textrm{\normalfont old}}(v') \geq \tilde{\rho}_{\mathcal{M}}^{\textrm{\normalfont old}}(v)$, as the associated density of $v$ will be at least equal to the smallest density of the elements in $\bigcup_{i = 1}^{j}U_i^{\textrm{\normalfont old}}$, as once all those elements are in the decomposition we are sure that $v$ is spanned.
	    
	    For (iii), we only have to prove the upper bound (the lower bound comes from (ii)). Suppose that in the new decomposition, $u^{\textrm{\normalfont new}}$ appears in $U_{i_0}^{\textrm{\normalfont new}}$. Then it means that for all $i < i_0$, we have $U_i^{\textrm{\normalfont new}} = U_i^{\textrm{\normalfont old}}$, as the previous sets have been made using the very same elements.
	    
	    If $u^{\textrm{\normalfont new}} \notin \spn_{\mathcal{M}/(\bigcup_{i = 1}^{i_0-1}U_i^{\textrm{\normalfont old}})}(U_{i_0}^{\textrm{\normalfont new}}\backslash \{u^{\textrm{\normalfont new}}\})$, then the only possibility is $\rho_{\mathcal{M}'^{\textrm{\normalfont new}}/(\bigcup_{i = 1}^{i_0-1}U_i^{\textrm{\normalfont old}})}(U_{i_0}^{\textrm{\normalfont new}}) = 1$ and we have our upper bound. From now on we assume that $u^{\textrm{\normalfont new}} \in \spn_{\mathcal{M}/(\bigcup_{i = 1}^{i_0-1}U_i^{\textrm{\normalfont old}})}(U_{i_0}^{\textrm{\normalfont new}}\backslash \{u^{\textrm{\normalfont new}}\})$. 
	    
	    Let $j = \min\{j \in \llbracket 1, k \rrbracket : u^{\textrm{\normalfont new}} \in \spn_{\mathcal{M}}(\bigcup_{i = 1}^{j}U_i^{\textrm{\normalfont old}})\}$ 
	    ($j$ is well-defined by our assumption 
	    immediately before). 
	    Let $P_{i,i_0} = U_i^{\textrm{\normalfont old}} \cap U_{i_0}^{\textrm{\normalfont new}}$. Let $i_1$ be the largest $i$ such that $P_{i,i_0} \neq \emptyset$. As $u^{\textrm{\normalfont new}} \in \spn_{\mathcal{M}}(\bigcup_{i = 1}^{i_0-1}U_i^{\textrm{\normalfont old}} \cup \bigcup_{i = 1}^{i_1}P_{i, i_0}) \subseteq \spn_{\mathcal{M}}(\bigcup_{i = 1}^{i_1}U_i^{\textrm{\normalfont old}})$ 
	    (by observing that $i_1 \geq i_0$ as 
	    $U^{\textrm{\normalfont new}}_i = U^{\textrm{\normalfont old}}_i$ for $i < i_0$),  
	    it means that $i_1 \geq j$. As a result we have
	    \begin{align*}
	        \tilde{\rho}_{\mathcal{M}}^{\textrm{\normalfont new}}(u^{\textrm{\normalfont new}})
	        &=\rho_{\mathcal{M}'^{\textrm{\normalfont new}}/(\bigcup_{i = 1}^{i_0-1}U_i^{\textrm{\normalfont old}})}(U_{i_0}^{\textrm{\normalfont new}})\\
	        &\leq \rho_{\mathcal{M}'^{\textrm{\normalfont new}}/(\bigcup_{i = 1}^{i_0-1}U_i^{\textrm{\normalfont old}} \cup \bigcup_{i = 1}^{i_1-1}P_{i,i_0})}(P_{i_1, i_0} \cup \{u^{\textrm{\normalfont new}}\}) & 
	        \text{by Proposition~\ref{prop:largerdensity}}
	        \\
	        &\leq \rho_{\mathcal{M}'^{\textrm{\normalfont new}}/(\bigcup_{i = 1}^{i_0-1}U_i^{\textrm{\normalfont old}} \cup \bigcup_{i = 1}^{i_1-1}P_{i,i_0})}(P_{i_1, i_0}) + 1\\
	        &= \rho_{\mathcal{M}'^{\textrm{\normalfont old}}/(\bigcup_{i = 1}^{i_0-1}U_i^{\textrm{\normalfont old}} \cup \bigcup_{i = 1}^{i_1-1}P_{i,i_0})}(P_{i_1, i_0}) + 1 &\text{as $u^{\textrm{\normalfont new}} \notin \bigcup_{i = 1}^{i_0-1}U_i^{\textrm{\normalfont old}} \cup \bigcup_{i = 1}^{i_1}P_{i,i_0}$}\\
	        &\leq \rho_{\mathcal{M}'^{\textrm{\normalfont old}}/(\bigcup_{i = 1}^{i_1-1}U_i^{\textrm{\normalfont old}})}(P_{i_1, i_0}) + 1 & 
	        \text{by Proposition~\ref{prop:subset_increased_rho}}\\
	        &\leq \rho_{\mathcal{M}'^{\textrm{\normalfont old}}/(\bigcup_{i = 1}^{i_1-1}U_i^{\textrm{\normalfont old}})}(U_{i_1}^{\textrm{\normalfont old}}) + 1 &\text{by maximality of $U_{i_1}^{\textrm{\normalfont old}}$}\\
	        &\leq \rho_{\mathcal{M}'^{\textrm{\normalfont old}}/(\bigcup_{i = 1}^{j-1}U_i^{\textrm{\normalfont old}})}(U_{j}^{\textrm{\normalfont old}}) + 1 &\text{by Proposition~\ref{prop:non-increasing} and $i_1\geq j$}\\
	        &= \tilde{\rho}_{\mathcal{M}}^{\textrm{\normalfont old}}(u^{\textrm{\normalfont new}}) + 1.
	    \end{align*}
	    
	    For property (iv), we can first observe that the elements having densities larger than $\tilde{\rho}_{\mathcal{M}}^{\textrm{\normalfont new}}(u^{\textrm{\normalfont new}})$, which is upper bounded 
	    by $ \tilde{\rho}_{\mathcal{M}}^{\textrm{\normalfont old}}(u^{\textrm{\normalfont new}}) + 1$, will remain with the same densities (actually they will even remain in the same sets $U_i$ as it was observed in the proof of (iii)). So we just focus on the case $\tilde{\rho}_{\mathcal{M}}^{\textrm{\normalfont old}}(v) < \tilde{\rho}_{\mathcal{M}}^{\textrm{\normalfont old}}(u^{\textrm{\normalfont new}})$. As $\tilde{\rho}_{\mathcal{M}}^{\textrm{\normalfont old}}(u^{\textrm{\normalfont new}}) > 0$, we can set $j = \min\{j \in \llbracket 1, k \rrbracket : u^{\textrm{\normalfont new}} \in \spn_{\mathcal{M}}(\bigcup_{i = 1}^{j}U_i^{\textrm{\normalfont old}})\}$. 
	    
	    Let $U_{\textrm{\normalfont small}} = \bigcup_{i = j+1}^{k}U_i^{\textrm{\normalfont old}}$ and $U_{\textrm{\normalfont big}} = \bigcup_{i = 1}^{j}U_i^{\textrm{\normalfont old}}$. First, suppose that there exists an element $v \in U_{\textrm{\normalfont small}}$ such that $\tilde{\rho}_{\mathcal{M}}^{\textrm{\normalfont new}}(v) \geq \tilde{\rho}_{\mathcal{M}}^{\textrm{\normalfont old}}(u^{\textrm{\normalfont new}})$. Let $P = \{v \in  U_{\textrm{\normalfont small}} : \tilde{\rho}_{\mathcal{M}}^{\textrm{\normalfont new}}(v) \geq \tilde{\rho}_{\mathcal{M}}^{\textrm{\normalfont old}}(u^{\textrm{\normalfont new}})\}$. Let $P_i = P \cap U_i^{\textrm{\normalfont new}}$ for all $i$, and let $i_0$ be the smallest index $i$ such that $P_i \neq \emptyset$. Then 
	    \begin{align*}
	        \tilde{\rho}_{\mathcal{M}}^{\textrm{\normalfont new}}(v) 
	        &\leq \rho_{\mathcal{M}'^{\textrm{\normalfont new}}/(\bigcup_{i = 1}^{i_0-1}U_i^{\textrm{\normalfont new}})}(U_{i_0}^{\textrm{\normalfont new}}) &\text{by Proposition~\ref{prop:non-increasing}}\\
	        &\leq \rho_{\mathcal{M}'^{\textrm{\normalfont new}}/(\bigcup_{i = 1}^{i_0}U_i^{\textrm{\normalfont new}} \backslash P_{i_0})}(P_{i_0}) &\text{by Proposition~\ref{prop:largerdensity}}\\
	        &\leq \rho_{\mathcal{M}'^{\textrm{\normalfont new}}/(U_{\textrm{\normalfont big}} \cup \{u^{\textrm{\normalfont new}}\})}(P_{i_0}) &\text{as $\bigcup_{i = 1}^{i_0}U_i^{\textrm{\normalfont new}} \backslash P_{i_0} \subseteq \bigcup_{i = 1}^{j}U_i^{\textrm{\normalfont old}} \cup \{u^{\textrm{\normalfont new}}\}$}\\
	        &= \rho_{\mathcal{M}'^{\textrm{\normalfont old}}/(U_{\textrm{\normalfont big}})}(P_{i_0}) &\text{as $u^{\textrm{\normalfont new}} \in \spn_{\mathcal{M}}(U_{\textrm{\normalfont big}})$}\\
	        &\leq \rho_{\mathcal{M}'^{\textrm{\normalfont old}}/(U_{\textrm{\normalfont big}})}(U_{j+1}^{\textrm{\normalfont old}}) &\text{by maximality of $U_{j+1}^{\textrm{\normalfont old}}$}\\
	        &< \rho_{\mathcal{M}'^{\textrm{\normalfont old}}/(U_{\textrm{\normalfont big}} \backslash U_{j}^{\textrm{\normalfont old}})}(U_{j}^{\textrm{\normalfont old}}) & \text{by Proposition~\ref{prop:non-increasing}}\\
	        &= \tilde{\rho}_{\mathcal{M}}^{\textrm{\normalfont old}}(u^{\textrm{\normalfont new}}),
	    \end{align*}
	    a contradiction to the assumption that $i_0$ exists, implying that $P =\emptyset$. 
	    
	    As a result, the set of elements in $V'$ of density smaller than $\tilde{\rho}_{\mathcal{M}}^{\textrm{\normalfont old}}(u^{\textrm{\normalfont new}})$ remains the same (note that this set cannot become bigger because of property (i)). Thereby, the execution of Algorithm~\ref{algo:decomposition} can be decomposed into two phases, the early phase when the elements of $U_{\textrm{\normalfont big}} \cup \{u^{\textrm{\normalfont new}}\}$ are processed, and the late phase when the elements of $U_{\textrm{\normalfont small}}$ are processed. As $\spn_{\mathcal{M}}(U_{\textrm{\normalfont big}} \cup \{u^{\textrm{\normalfont new}}\}) = \spn_{\mathcal{M}}(U_{\textrm{\normalfont big}})$, the construction of the sets in the late phase is the same no matter $u^{\textrm{\normalfont new}}$ is in $V'$ or not. Hence the sets are the same and so are the associated densities. This concludes the proof of (iv).
	\end{proof}
	
	\begin{proof}[Proof of Lemma~\ref{lem:del-decrease}]
        Consider the behavior when $u^{\textrm{\normalfont old}}$ is added to $V'\backslash \{u^{\textrm{\normalfont old}}\}$: it is clear that Lemma~\ref{lem:add-increase} applies. As a result the points (i) and (ii) come easily from Lemma~\ref{lem:add-increase} (ii). For (iii), observe that from Lemma~\ref{lem:add-increase} (iii) we get that $\tilde{\rho}_{\mathcal{M}}^{\textrm{\normalfont new}}(u^{\textrm{\normalfont old}}) \leq \tilde{\rho}_{\mathcal{M}}^{\textrm{\normalfont old}}(u^{\textrm{\normalfont old}}) \leq \tilde{\rho}_{\mathcal{M}}^{\textrm{\normalfont new}}(u^{\textrm{\normalfont old}}) + 1$ and hence we obtain also (iii) here. For (iv) the bounds are a bit different from what we could get from Lemma~\ref{lem:add-increase} (iv) but using ideas similar to that from the previous proof one can easily show the desired result.
    \end{proof}

    \begin{proof}[Proof of Theorem~\ref{thm:construction}]
        Start with an empty subset $V'$. Then apply the following local improvement steps repeatedly on $V'$, until it is no longer possible. If an element in $V'$ violates Property~(i) of Definition~\ref{def:dcs}, then remove it from $V'$; similarly, if an element in $V \backslash V'$ violates Property~(ii), insert it into $V'$. Note that among the two local improvement steps, the priority is given to the deletion operations.
        
        Observe that when no element violates Property~(i), all the elements have densities bounded by $\beta$ in both matroids. To prove that this algorithm terminates in finite time and to show the existence of a DCS, we introduce a potential function:
        \[\Phi(V') = (2\beta - 7) \cdot |V'| - \sum_{l \in\{1,2\}} \left[\sum_{j = 1}^k \left(\rnk_{\mathcal{M}'_l/(\bigcup_{i = 1}^{j-1}U_{l,i})}(U_{l,j}) \cdot (\rho_{\mathcal{M}'_l/(\bigcup_{i = 1}^{j-1}U_{l,i})}(U_{l,j}))^2\right)\right]\]
        where $U_{l,1}, \dots, U_{l,k}$ denotes the density-based decomposition of $V'$ in $\mathcal{M}_l$ for $l \in \{1,2\}$. We can rewrite this function in a more convenient form:
        \[\Phi(V') = (2\beta - 7) \cdot |V'| - \sum_{l \in\{1,2\}} \left[\sum_{j = 1}^k \rho_{l, j}^2\right]\]
        where for $l \in \{1, 2\}$, the vector $\rho_l = (\rho_{l,1},\dots, \rho_{l,k})$ is the list of the densities of each set of the decomposition $U_{l,1}, \dots, U_{l,k}$ counted with multiplicity equal to their rank (so that, for instance, $\rho_{\mathcal{M}'_l/(\bigcup_{i = 1}^{j-1}U_i)}(U_j)$ appears $\rnk_{\mathcal{M}'_l/(\bigcup_{i = 1}^{j-1}U_i)}(U_j)$ times in that vector; we potentially add some zeros in the end so that the vector has exactly $k$ components).
        
        The execution of the algorithm can be seen as a series of batches of operations, consisting of one insertion operation followed by some number of deletion operations. Each batch has a finite size because we can make only a finite number of deletions when no insertion is performed. At the end of each batch of operations, all the densities are bounded by $\beta$, hence using  Proposition~\ref{prop:size-bound} (as for this result to hold it is only required that the densities are bounded by $\beta$) we have that $\Phi$ is bounded by $(2\beta - 7) \cdot \beta \cdot \mu(V)$. Then we have to show that $\Phi$ increases at each local improvement step by at least some constant amount and we will be done.
        
        When Property~(i) of Definition~\ref{def:dcs} is not satisfied by some element in $u^{\textrm{\normalfont old}} \in V'^{\textrm{\normalfont old}}$, then it is removed to get a new set $V'^{\textrm{\normalfont new}} = V'^{\textrm{\normalfont old}} \backslash \{u^{\textrm{\normalfont old}}\}$. Hence from the vectors $\rho_l^{\textrm{\normalfont old}} = (\rho_{l,1},\dots, \rho_{l,k})$ we get new vectors $\rho_l^{\textrm{\normalfont new}} = (\rho_{l,1} - \lambda_{l,1},\dots, \rho_{l,k} - \lambda_{l,k})$, with the following properties:
        \begin{itemize}
            \item $\lambda_{l,i} \geq 0$ (by Lemma~\ref{lem:del-decrease} (ii));
            \item $\sum_{j = 1}^k\lambda_{l,j} = 1$ for $l \in \{1,2\}$ (as we always have $\sum_{j = 1}^k \rho_{l,j} = |V'|$, see Proposition~\ref{prop:sum-density});
            \item $\lambda_{l,i} > 0 \Rightarrow \tilde{\rho}_{\mathcal{M}_l}^{\textrm{\normalfont old}}(u^{\textrm{\normalfont old}}) - 1 \leq \rho_{l,i} \leq \tilde{\rho}_{\mathcal{M}_l}^{\textrm{\normalfont old}}(u^{\textrm{\normalfont old}})$ for $l \in \{1,2\}$ (by Lemma~\ref{lem:del-decrease} (iv)).
        \end{itemize}
        As a result we get:
        \begin{align*}
            \Phi(V'^{\textrm{\normalfont new}}) - \Phi(V'^{\textrm{\normalfont old}}) &= - (2 \beta - 7) + \sum_{l \in\{1,2\}} \left[\sum_{j = 1}^k \rho_{l, j}^2 - (\rho_{l, j} - \lambda_{l,j})^2\right]\\
            &= - 2 \beta + 7 + \sum_{l \in\{1,2\}} \left[\sum_{j = 1}^k 2\rho_{l, j}\lambda_{l,j} - \sum_{j=1}^k \lambda_{l,j}^2\right]\\
            &\geq - 2 \beta + 5 + \sum_{l \in\{1,2\}} \left[\sum_{j = 1}^k 2\rho_{l, j}\lambda_{l,j}\right]\\
            &= - 2 \beta + 5 + \sum_{l \in\{1,2\}} \left[\sum_{\tilde{\rho}_{\mathcal{M}_l}^{\textrm{\normalfont old}}(u^{\textrm{\normalfont old}}) - 1 \leq \rho_{l,j} \leq \tilde{\rho}_{\mathcal{M}_l}^{\textrm{\normalfont old}}(u^{\textrm{\normalfont old}})} 2\rho_{l, j}\lambda_{l,j}\right]\\
            &\geq -2\beta + 5 + \sum_{l\in \{1,2\}}\left[2 \cdot (\tilde{\rho}_{\mathcal{M}_l}^{\textrm{\normalfont old}}(u^{\textrm{\normalfont old}}) - 1)\sum_{\tilde{\rho}_{\mathcal{M}_l}^{\textrm{\normalfont old}}(u^{\textrm{\normalfont old}}) - 1 \leq \rho_{l,j} \leq \tilde{\rho}_{\mathcal{M}_l}^{\textrm{\normalfont old}}(u^{\textrm{\normalfont old}})} \lambda_{l,j}\right]\\
            &= -2\beta + 5 + 2 \cdot (\tilde{\rho}_{\mathcal{M}_1}^{\textrm{\normalfont old}}(u^{\textrm{\normalfont old}}) +  \tilde{\rho}_{\mathcal{M}_2}^{\textrm{\normalfont old}}(u^{\textrm{\normalfont old}})- 2)\\
            &> -2\beta + 1 + 2 \beta = 1.
        \end{align*}
        The first inequality comes from $\lambda_{l,i} \geq 0$ and $\sum_{j=1}^k\lambda_{l,j} = 1$, implying that $\sum_{j=1}^k\lambda_{l,j}^2 \leq 1$.
        The last inequality comes from $\tilde{\rho}_{\mathcal{M}_1}^{\textrm{\normalfont old}}(u^{\textrm{\normalfont old}}) + \tilde{\rho}_{\mathcal{M}_2}^{\textrm{\normalfont old}}(u^{\textrm{\normalfont old}}) > \beta$. Hence we get an increase of $\Phi$ of at least $1$.
        
        Similarly, when Property~(ii) of Definition~\ref{def:dcs} is not satisfied by some element in $u^{\textrm{\normalfont new}} \in V \backslash V'^{\textrm{\normalfont old}}$, then it is added to get a new set $V'^{\textrm{\normalfont new}} = V'^{\textrm{\normalfont old}} \cup \{u^{\textrm{\normalfont new}}\}$. Hence from the vectors $\rho_l^{\textrm{\normalfont old}} = (\rho_{l,1},\dots, \rho_{l,k})$ we get new vectors $\rho_l^{\textrm{\normalfont new}} = (\rho_{l,1} + \lambda_{l,1},\dots, \rho_{l,k} + \lambda_{l,k})$, with the following properties:
        \begin{itemize}
            \item $\lambda_{l,i} \geq 0$ (by Lemma~\ref{lem:add-increase} (ii));
            \item $\sum_{j = 1}^k\lambda_{l,j} = 1$ for $l \in \{1,2\}$ (as we always have $\sum_{j = 1}^k \rho_{l,j} = |V'|$, see Proposition~\ref{prop:sum-density});
            \item $\lambda_{l,i} > 0 \Rightarrow \tilde{\rho}_{\mathcal{M}_l}^{\textrm{\normalfont old}}(u^{\textrm{\normalfont new}}) \leq \rho_{l,i} \leq \tilde{\rho}_{\mathcal{M}_l}^{\textrm{\normalfont old}}(u^{\textrm{\normalfont new}}) + 1$ for $l \in \{1,2\}$ (by Lemma~\ref{lem:add-increase} (iv)).
        \end{itemize}
        As a result we get:
        \begin{align*}
            \Phi(V'^{\textrm{\normalfont new}}) - \Phi(V'^{\textrm{\normalfont old}}) &= (2 \beta - 7) - \sum_{l \in\{1,2\}} \left[\sum_{j = 1}^k (\rho_{l, j}^2 + \lambda_{l,j})^2 - \rho_{l, j}^2\right]\\
            &= 2 \beta - 7 - \sum_{l \in\{1,2\}} \left[\sum_{j = 1}^k 2\rho_{l, j}\lambda_{l,j} + \sum_{j=1}^k \lambda_{l,j}^2\right]\\
            &\geq 2 \beta - 9 - \sum_{l \in\{1,2\}} \left[\sum_{j = 1}^k 2\rho_{l, j}\lambda_{l,j}\right]\\
            &= 2 \beta - 9 - \sum_{l \in\{1,2\}} \left[\sum_{\tilde{\rho}_{\mathcal{M}_l}^{\textrm{\normalfont old}}(u^{\textrm{\normalfont new}}) \leq \rho_{l,j} \leq \tilde{\rho}_{\mathcal{M}_l}^{\textrm{\normalfont old}}(u^{\textrm{\normalfont new}}) + 1} 2\rho_{l, j}\lambda_{l,j}\right]\\
            &\geq 2\beta - 9 - \sum_{l\in \{1,2\}}\left[2 \cdot (\tilde{\rho}_{\mathcal{M}_l}^{\textrm{\normalfont old}}(u^{\textrm{\normalfont new}}) + 1)\sum_{\tilde{\rho}_{\mathcal{M}_l}^{\textrm{\normalfont old}}(u^{\textrm{\normalfont new}}) \leq \rho_{l,j} \leq \tilde{\rho}_{\mathcal{M}_l}^{\textrm{\normalfont old}}(u^{\textrm{\normalfont new}}) + 1} \lambda_{l,j}\right]\\
            &= 2\beta - 9 - 2 \cdot (\tilde{\rho}_{\mathcal{M}_1}^{\textrm{\normalfont old}}(u^{\textrm{\normalfont new}}) + \tilde{\rho}_{\mathcal{M}_2}^{\textrm{\normalfont old}}(u^{\textrm{\normalfont new}}) + 2)\\
            &> 2\cdot (\beta - \beta^-) - 13 \geq 1.
        \end{align*}
        The first inequality comes from $\lambda_{l,i} \geq 0$ and $\sum_{j=1}^k\lambda_{l,j} = 1$, implying that $\sum_{j=1}^k\lambda_{l,j}^2 \leq 1$.
        To move to the last line we use that $\tilde{\rho}_{\mathcal{M}_1}^{\textrm{\normalfont old}}(u^{\textrm{\normalfont new}}) + \tilde{\rho}_{\mathcal{M}_2}^{\textrm{\normalfont old}}(u^{\textrm{\normalfont new}}) < \beta^-$, and then that $\beta \geq \beta^- + 7$. Hence we also get an increase of $\Phi$ of at least $1$.
        
        As a result, a $(\beta, \beta^-)$-DCS can be found in at most $2\cdot \beta^2 \cdot \mu(V)$ such local improvement steps.
    \end{proof}

    \bibliography{library}
    
\end{document}